\documentclass[final,hidelinks,onefignum,onetabnum]{siamart251216}

\usepackage{amsmath,amssymb}
\usepackage{mathtools}
\usepackage{bm}
\usepackage{graphicx}
\usepackage{booktabs}
\usepackage{enumitem}
\usepackage{algorithm}
\usepackage{algpseudocode}

\newcommand{\RR}{\mathbb{R}}
\newcommand{\NN}{\mathbb{N}}
\newcommand{\EE}{\mathbb{E}}

\newcommand{\Var}{\mathrm{Var}}
\newcommand{\Cov}{\mathrm{Cov}}
\newcommand{\DKL}{\mathcal{D}_{\mathrm{KL}}}
\newcommand{\tr}{\operatorname{Tr}}

\newcommand{\gauss}{\gamma}

\newcommand{\mA}{\bm{A}}
\newcommand{\mb}{\bm{b}}
\newcommand{\mC}{\bm{C}}
\newcommand{\mB}{\bm{B}}
\newcommand{\hmC}{\widehat{\bm{C}}}
\newcommand{\mV}{\bm{V}}
\newcommand{\hmV}{\widehat{\bm{V}}}

\newcommand{\mP}{\bm{P}}
\newcommand{\hmP}{\widehat{\bm{P}}}

\newcommand{\mQ}{\bm{Q}}

\newcommand{\mI}{\bm{I}}

\newcommand{\mR}{\bm{R}}
\newcommand{\mD}{\bm{D}}

\newcommand{\mH}{\bm{H}}

\newcommand{\mM}{\bm{M}}

\newcommand{\St}{\mathrm{St}}
\newcommand{\Gr}{\mathrm{Gr}}
\newcommand{\whvtheta}{\widehat{\bm{\theta}}^{d}}

\newcommand{\hvbeta}{\widehat{\bm{\theta}}^{r}}

\newcommand{\vn}{\bm{n}}

\newcommand{\vx}{\bm{x}}

\newcommand{\vz}{\bm{z}}
\newcommand{\vZ}{\bm{Z}}
\newcommand{\vu}{\bm{u}}
\newcommand{\vU}{\bm{U}}
\newcommand{\vv}{\bm{v}}
\newcommand{\vw}{\bm{w}}
\newcommand{\vW}{\bm{W}}

\newcommand{\scZ}{\mathcal{Z}}
\newcommand{\scL}{\mathcal{L}}
\newcommand{\scS}{\mathcal{S}}
\newcommand{\scF}{\mathcal{F}}

\newsiamremark{remark}{Remark}
\newsiamremark{assumption}{Assumption}
\crefname{remark}{Remark}{Remarks}
\crefname{assumption}{Assumption}{Assumptions}
\let\origremarkenv\remark
\renewcommand{\remark}{\crefalias{theorem}{remark}\origremarkenv}
\let\origassumptionenv\assumption
\renewcommand{\assumption}{\crefalias{theorem}{assumption}\origassumptionenv}
\crefname{appendix}{Appendix}{Appendices}
\Crefname{appendix}{Appendix}{Appendices}

\headers{Copula Active Subspaces}{J.~Chen and P.~J.~van Leeuwen}

\title{Copula Active Subspaces II: Error Decomposition, A Posteriori Estimation, and Sharpness of the Bounds\thanks{This is Part~II of a two-part work. The companion paper \emph{Copula Active Subspaces I: A Score-Covariance Method for Reduced-Order Non-Gaussian Density Estimation} defines the estimator analyzed here. \funding{This work was supported by the Consortium for Advanced Data Assimilation Research and Education (CADRE), funded by NOAA under grant 2007893.}}}

\author{%
  Joshua Chen\thanks{Department of Atmospheric Science, Colorado State University, Fort Collins, CO 80523 (\email{Joshua.Chen@colostate.edu}).}
  \and
  Peter Jan van Leeuwen\thanks{Department of Atmospheric Science, Colorado State University, Fort Collins, CO 80523 (\email{Peter.vanLeeuwen@colostate.edu}).}
}

\newcommand{\vzero}{\bm{0}}

\newcommand{\hmA}{\widehat{\bm{A}}}

\newcommand{\PartIref}[2]{\csname Cref@#1@name\endcsname~#2}
\hypersetup{hypertexnames=false}
\let\CASsection\section

\begin{document}
\maketitle

\begin{abstract}

The Copula Active Subspaces (\textsc{Cas}) method of Part~I estimates a non-Gaussian observation-noise density $\pi_{\vn}$ from residual samples. Its KL divergence $\DKL(\pi_{\vn}\,\|\,\widehat\pi_{\vn})$ admits an exact decomposition into marginal, projection and reduced-density terms and bounds the posterior KL divergence averaged over the true predictive distribution. We derive a priori bounds for these terms, together with estimates or bounds from residual samples. Weighted nearest-neighbour entropy estimates are used for the marginal and reduced-density terms, with asymptotic standard errors in exact rank coordinates. For the projection error, the score is fitted on one subsample, and its mean squared magnitude outside the retained subspace is estimated on the other. Corrections for score and rank-transform errors then give a theoretical upper bound with probability at least $1-\delta$. Tests on three known noise laws in dimension $d=20$ use the estimated rank transform. In $11$ of $12$ combinations of example and sample size, the sum of the two estimated terms and the trace bound exceeds the reference divergence after averaging over the estimated models. In these tests, score-estimation error is the largest contribution to looseness. The overestimate from the Gaussian log-Sobolev inequality alone is at most a factor of $1.98$.
\end{abstract}

\begin{keywords}
copula, active subspace, score matching, dimension reduction, Hermite polynomials, Bayesian inference problems, non-Gaussian observation error
\end{keywords}

\begin{MSCcodes}
62G07, 62G20, 62H05, 62B10, 60E15, 62F15
\end{MSCcodes}

% =====================================================================
\section{Introduction}
\label{sec:intro}

Approximating the observation-noise density changes the posterior in a Bayesian inference problem. The noise-density KL divergence bounds the resulting posterior KL divergence, averaged over the true predictive distribution. Part~I \cite{CAS_partI} estimates a non-Gaussian noise density $\pi_{\vn}$ from residual samples, and tests the estimate against known noise laws. Such a comparison is unavailable in applications. We therefore consider the error
\begin{equation}
  \DKL(\pi_{\vn}\,\|\,\widehat\pi_{\vn}),
  \label{eq:target-error}
\end{equation}
of the \textsc{Cas} estimate $\widehat\pi_{\vn}$, estimating its individual contributions from the residual samples.

The decomposition in \Cref{prop:exact-decomp} assigns one error term to each estimation step of \textsc{Cas}:
\eqref{eq:target-error} equals $\mathcal E_{\rm marg} + \mathcal E_{\rm proj} + \mathcal E_{\rm dens}$. The marginal term $\mathcal E_{\rm marg}$ consists of $d$ one-dimensional divergences and two corrections; $\mathcal E_{\rm dens}$ is a divergence between $r$-dimensional laws. For the projection error $\mathcal E_{\rm proj}$, the rank-$r$ reduction admits a bound in terms of the copula-score variance outside the retained subspace. This is the Gaussian-reference bound of certified dimension reduction \cite{ZahmCuiLawSpantiniMarzouk2022}.

A direct entropy estimate of \eqref{eq:target-error} would involve the differential entropy of $\pi_{\vn}$ in dimension $d$. For densities supported on $[0,1]^d$ in Lipschitz balls of smoothness $s=2$, the minimax root-mean-square error has order $(N\log N)^{-2/(d+2)}+N^{-1/2}$ \cite[Thm.~1]{HanJiaoWeissmanWu2020}. The decomposition reduces the required entropy calculations to dimensions $1$ and $r$. No entropy estimate is needed for the bound on $\mathcal E_{\rm proj}$.

The a priori bounds in \Cref{sec:apriori} depend on the rank $r$, the multi-index sets and the sample size $N$. Normalizing the fitted joint density introduces a correction in the marginal bound. The bound minus this correction tends to zero as $N\to\infty$, under the stated assumptions on the marginal estimates and rank-transform error. The bounds on $\mathcal E_{\rm proj}$ and $\mathcal E_{\rm dens}$ keep approximation, regularization and constraint terms that do not vanish as $N\to\infty$. In the examples of \Cref{sec:experiments}, $\mathcal E_{\rm proj}$ is the largest term for $N \ge 12{,}500$, and the Gaussian log-Sobolev bound overestimates it by a factor between $1.64$ and $1.98$. We also apply gradient descent to the tighter bound of \cite{LiCuiLiMarzoukZahm2024sharp}. The subspace moves by $\|\sin\Theta\|_F \le 0.0004$ from the top-$r$ eigenbasis of $\mC$ used by \textsc{Cas}. Most of the remaining looseness of the a priori bound on these examples is due to the Hermite truncation of the score (\Cref{sec:exp-aposteriori}).

For the a posteriori analysis in \Cref{sec:aposteriori}, we estimate $\mathcal E_{\rm marg}$ and $\mathcal E_{\rm dens}$ by weighted nearest-neighbour entropy estimators. Under the stated assumptions these attain the parametric rate in exact rank-Gaussianized coordinates. To bound $\mathcal E_{\rm proj}$, we fit the score and rank transform on half of the sample. The other half estimates the mean squared magnitude of the fitted score outside the retained subspace. Under the stated assumptions, a theoretical upper bound for this quantity holds with probability at least $1-\delta$. It becomes a bound on $\mathcal E_{\rm proj}$ after adding corrections for errors in the score estimate and rank transform. The increment between two nested multi-index sets estimates the Hermite truncation error under a saturation assumption. It does not estimate the sampling or regularization errors.

\Cref{sec:estimator} defines the estimator and decomposes its error. The three numerical examples are reported in \Cref{sec:experiments}.

% =====================================================================
\section{The estimator and its exact error decomposition}
\label{sec:estimator}

The two stages of \textsc{Cas} estimate the density $\pi_{\vn}$ of the noise $\vn\in\RR^d$ from $N$ residual samples (Part~I \cite{CAS_partI}, Algorithm~3.1). For marginal densities $\pi_i$ with CDFs $F_i$, the componentwise map $\scZ(\vx)_i := \Phi^{-1}(F_i(x_i))$ produces standard Gaussian marginals \textup{(2.1)}, with $\Phi$ the standard normal CDF. The transformed noise $\vZ := \scZ(\vn)$ has law $\pi_{\vZ}$. Replacing $F_i$ by the rank-based $\widehat F_i$ of Part~I defines $\widehat\scZ$. The marginal densities are estimated separately by Gaussian-kernel estimates $\widehat\pi_i$ \cite{Silverman1986}.

Let $\gauss_k$ denote the standard Gaussian density on $\RR^k$. The copula density is $c^{Z} := \pi_{\vZ}/\gauss_d$. Its score $\scS := \nabla\log c^{Z}$ has second-moment matrix $\mC := \EE_{\pi_{\vZ}}[\scS\scS^\top]$ \textup{(2.2)}, referred to as the score covariance. With eigenvalues ordered as $\lambda_1(\mC)\ge\cdots\ge\lambda_d(\mC)$, a top-$r$ eigenbasis $\mV_r^{\mC}$ spans the copula active subspace.

Stage~1 fits the copula score by Hermite score matching \cite{Hyvarinen2005} in the span of the gradients of multivariate Hermite polynomials $H_\alpha$. The polynomials are orthonormal in $L^2(\gauss_d)$ and indexed by $\Lambda := \Lambda_{K_1,q_1}$. The set $\Lambda_{K,q}$ contains indices with $1\le|\alpha|_1\le K$ and at most $q$ nonzero entries, except for the linear indices $e_i$ \textup{(3.2)}. This fit defines $\widehat\scS := \nabla\sum_{\alpha\in\Lambda}\widehat\theta^d_\alpha H_\alpha$ \textup{(3.3)}, with coefficient vector $\whvtheta := (\widehat\theta^d_\alpha)_{\alpha\in\Lambda}$. From the empirical second-moment matrix $\hmC$ of $\widehat\scS$ \textup{(3.5)}, we choose a top-$r$ eigenbasis $\hmV_r^{\mC}$. The superscript distinguishes this basis from a principal-component basis of $\vZ$.

A basis $\mV_r\in\St(r,\RR^d)$ and its orthonormal completion $\mV_\perp$ separate the retained coordinates $\vU := \mV_r^\top\vZ$ from the discarded coordinates $\vW := \mV_\perp^\top\vZ$. The conditional law of $\vW$ given $\vU$ is denoted by $\pi_{\vW\mid\vU}$. Define the reduced factor $c_r^{U} := \pi_{\vU}/\gauss_r$ and its logarithm $\scL_r := \log c_r^{U}$. Part~I (\PartIref{subsection}{2.1}) uses this factor to define the reduced model $\pi_{\vn}(\vx;\mV_r) := c_r^{U}(\mV_r^\top\scZ(\vx))\prod_i\pi_i(x_i)$ \textup{(2.4)}. For the fitted estimator, we use $\mV_r = \hmV_r^{\mC}$ in these definitions. 

Stage~2 uses the multi-index set $\Lambda_r := \Lambda_{K_2,q_2}$ in $r$ variables to fit $\scL_r(\vu;\hvbeta) := \sum_{\alpha\in\Lambda_r}\widehat\theta^r_\alpha H_\alpha(\vu)$. The coefficient vector $\hvbeta := (\widehat\theta^r_\alpha)_{\alpha\in\Lambda_r}$ solves the regularized and constrained score-matching problem \textup{(3.9)}. The Gram matrix $\widehat\mA_r$, right-hand side $\widehat\mb_r$ and centering matrix $\widehat\mB$ have population counterparts $\mA^\star$, $\mb^\star$ and $\mB^\star$, respectively, in exact coordinates. Dividing the exponential by $\widehat Z_r := \EE_{\gauss_r}[e^{\scL_r(\cdot;\hvbeta)}]$ produces the reduced factor $\widehat c_r^{U} := e^{\scL_r(\cdot;\hvbeta)}/\widehat Z_r$ and the normalized density $\widehat\pi_{\vU} := \widehat c_r^{U}\gauss_r$. Composing this factor with the fitted rank transform and multiplying by the marginal estimates requires another normalization, by $\widehat Z_{\mathrm{comp}}$. Thus the joint estimate \textup{(3.1)} takes the form
\[
\begin{aligned}
  \widehat\pi_{\vn}(\vx) &:= \widehat Z_{\mathrm{comp}}^{-1}\,\widehat c_r^{U}(\hmV_r^{\mC\top}\widehat\scZ(\vx))\prod_i\widehat\pi_i(x_i),\\
  \widehat Z_{\mathrm{comp}} &:= \int_{\RR^d}\widehat c_r^{U}(\hmV_r^{\mC\top}\widehat\scZ(\vx))\prod_i\widehat\pi_i(x_i)\,d\vx,
\end{aligned}
\]
with copula factor $\widehat c^{Z}(\vz) := \widehat c_r^{U}(\hmV_r^{\mC\top}\vz)$. The normalizing constant $\widehat Z_{\mathrm{comp}}$ equals $1$ when each $\widehat F_i$ is the distribution function of $\widehat\pi_i$. In \textsc{Cas}, $\widehat\scZ$ uses the empirical distribution functions and the $\widehat\pi_i$ are kernel density estimates, so this condition fails. In the notation of Part~I, $\widehat Z_{\mathrm{comp}}=\widehat Z_{\vn}/\widehat Z_r$, since $\widehat c_r^{U}$ already includes the factor $\widehat Z_r^{-1}$. The normalizing constants are evaluated by quasi-Monte Carlo (\PartIref{appendix}{G}). Hats denote estimates and unhatted symbols their exact counterparts. The symbols $\Gr(r,\RR^d)$ and $\St(r,\RR^d)$ denote the Grassmann and Stiefel manifolds \cite{AbsilMahonySepulchre2008}, and $\Theta(\mV,\mV')$ is the vector of principal angles between the spans of two bases \cite{BjorckGolub1973}.

\paragraph{Marginal assumption} Throughout we assume \textup{(M)}: $\EE_{\pi_{\vn}}[|\log(\pi_i/\widehat\pi_i)|] < \infty$ for every $i$, and $\EE[\sum_i\DKL(\pi_i\,\|\,\widehat\pi_i)] = O(\sqrt d\,(\log N)^{K_2/2}N^{-1/2})$. The expectation is taken over repeated estimation samples. The assumed rate requires the marginal density error to decay at least as fast as the rank-transform contribution under \textup{(T)} (\Cref{sec:apriori-marg}). We do not derive this rate for the Gaussian-kernel estimates.

\begin{proposition}[Decomposition of the KL divergence by estimation stage]
\label{prop:exact-decomp}
Let $\widehat\pi_{\vn}$ be the estimator above, with $\vU := \hmV_r^{\mC\top}\vZ$. Under Assumption~2.1 of Part~I and the integrability condition of \textup{(M)}, every term below is finite and, for each realization of the estimation sample,
\begin{equation}
\begin{aligned}
  \DKL(\pi_{\vn}\,\|\,\widehat\pi_{\vn})
  \;=\;
  &\underbrace{\textstyle\sum_{i=1}^d \DKL(\pi_i\,\|\,\widehat\pi_i) \;+\; R_N \;+\; \log\widehat Z_{\mathrm{comp}}}_{\textstyle \mathcal E_{\rm marg}\ \text{(the marginals)}}
  \;+\;
  \underbrace{\DKL\bigl(\pi_{\vn}\,\|\,\pi_{\vn}(\,\cdot\,;\hmV_r^{\mC})\bigr)}_{\textstyle \mathcal E_{\rm proj}\ \text{(the subspace)}}
  \\[4pt]
  &\;+\;
  \underbrace{\DKL\bigl(\pi_{\vU}\,\|\,\widehat\pi_{\vU}\bigr)}_{\textstyle \mathcal E_{\rm dens}\ \text{(the reduced density)}},
\end{aligned}
  \label{eq:kl-three-way}
\end{equation}
where $R_N := \EE_{\pi_{\vn}}[\log\widehat c^{Z}(\scZ)] - \EE_{\pi_{\vn}}[\log\widehat c^{Z}(\widehat\scZ)]$ is the difference between the expected log copula factors evaluated at the exact and estimated rank transforms. The normalizing constant $\widehat Z_{\mathrm{comp}}$ is defined above, and $\mathcal E_{\rm proj} = \EE_{\pi_{\vU}}[\DKL(\pi_{\vW\mid\vU}\,\|\,\gauss_{d-r})]$.
\end{proposition}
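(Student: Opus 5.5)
The plan is to write $\log(\pi_{\vn}/\widehat\pi_{\vn})$ pointwise as a telescoping sum and then take the expectation under $\pi_{\vn}$ term by term. By Sklar's representation under Assumption~2.1 of Part~I, $\pi_{\vn}(\vx)=c^{Z}(\scZ(\vx))\prod_i\pi_i(x_i)$. The estimator is $\widehat\pi_{\vn}(\vx)=\widehat Z_{\mathrm{comp}}^{-1}\widehat c^{Z}(\widehat\scZ(\vx))\prod_i\widehat\pi_i(x_i)$. Hence
\[
\log\frac{\pi_{\vn}}{\widehat\pi_{\vn}}(\vx)=\sum_i\log\frac{\pi_i(x_i)}{\widehat\pi_i(x_i)}+\log\widehat Z_{\mathrm{comp}}+\log\frac{c^{Z}(\scZ(\vx))}{\widehat c^{Z}(\scZ(\vx))}+\Bigl[\log\widehat c^{Z}(\scZ(\vx))-\log\widehat c^{Z}(\widehat\scZ(\vx))\Bigr].
\]
The first sum integrates to $\sum_i\DKL(\pi_i\|\widehat\pi_i)$, because the $i$-th marginal of $\pi_{\vn}$ is $\pi_i$. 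The constant term gives $\log\widehat Z_{\mathrm{comp}}$. The bracket gives $R_N$ by definition. Together these three pieces form $\mathcal E_{\rm marg}$.

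For the remaining copula term, change variables $\vz=\scZ(\vx)$, so that the expectation becomes one under $\pi_{\vZ}$. The term is then $\EE_{\pi_{\vZ}}[\log(c^{Z}/\widehat c^{Z})]=\DKL(\pi_{\vZ}\|\widehat c^{Z}\gauss_d)$. Split it by inserting the reduced factor $c_r^{U}(\mV_r^\top\vz)$ with $\mV_r=\hmV_r^{\mC}$:
\[
\log\frac{c^{Z}}{\widehat c^{Z}}=\log\frac{c^{Z}(\vz)}{c_r^{U}(\mV_r^\top\vz)}+\log\frac{c_r^{U}(\vu)}{\widehat c_r^{U}(\vu)},\qquad \vu=\mV_r^\top\vz.
\]
\begin{itemize}
\item The second piece depends on $\vz$ only through $\vu$. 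Its expectation is $\EE_{\pi_{\vU}}[\log(\pi_{\vU}/\widehat\pi_{\vU})]=\mathcal E_{\rm dens}$, since $c_r^{U}\gauss_r=\pi_{\vU}$ and $\widehat c_r^{U}\gauss_r=\widehat\pi_{\vU}$.
\item For the first piece, note that the marginals cancel in $\pi_{\vn}/\pi_{\vn}(\cdot;\mV_r)$. Its expectation is therefore exactly $\mathcal E_{\rm proj}$ as defined via (2.4).
\end{itemize}

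For the conditional form, rotate to $(\vu,\vw)$. Orthogonality gives $\gauss_d(\vz)=\gauss_r(\vu)\gauss_{d-r}(\vw)$, so
\[
\frac{c^{Z}(\vz)}{c_r^{U}(\vu)}=\frac{\pi_{\vZ}(\vz)}{\pi_{\vU}(\vu)\gauss_{d-r}(\vw)}=\frac{\pi_{\vW\mid\vU}(\vw\mid\vu)}{\gauss_{d-r}(\vw)}.
\]
Averaging first over $\vW\mid\vU$ and then over $\vU$ gives $\EE_{\pi_{\vU}}[\DKL(\pi_{\vW\mid\vU}\|\gauss_{d-r})]$. Throughout, $\hmV_r^{\mC}$ is held fixed for each realization of the sample, so all these manipulations are pathwise.

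The main obstacle is justifying the splitting of expectations, that is, showing each piece is finite so that no $\infty-\infty$ arises.
\begin{itemize}
\item The marginal pieces are integrable by (M).
\item $\log\widehat c^{Z}$ is a polynomial in $\mV_r^\top\vz$ minus a constant. Its expectation at $\scZ(\vn)$ is finite because $\pi_{\vZ}$ has Gaussian marginals, hence all polynomial moments. At $\widehat\scZ(\vn)$ it is finite because the rank transform takes values in a bounded set. This gives finiteness of $R_N$ and $\mathcal E_{\rm dens}$.
\item $\log\widehat Z_{\mathrm{comp}}$ is finite because $\widehat c_r^{U}$ is bounded on the bounded range of $\widehat\scZ$ and positive, and the kernel estimates are positive and integrable.
\item $\mathcal E_{\rm proj}$ is a KL divergence, hence nonnegative. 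It is finite by Assumption~2.1 of Part~I (finite copula KL, or finite Fisher information, through the Gaussian log-Sobolev inequality).
\end{itemize}
With every other term finite, the identity of total integrals yields the claim. Finiteness of the left side follows as well.
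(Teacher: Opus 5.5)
Your proposal is correct and follows the paper's proof: you use the same pointwise splitting of $\log(\pi_{\vn}/\widehat\pi_{\vn})$, add and subtract $\EE_{\pi_{\vn}}[\log\widehat c^{Z}(\scZ)]$ in the same way, and then apply the KL chain rule in the rotated coordinates $(\vu,\vw)$, which you verify by hand by inserting $c_r^{U}$ rather than citing it. One small point: finiteness of $\mathcal E_{\rm dens}$ needs $\DKL(\pi_{\vU}\,\|\,\gauss_r)<\infty$ in addition to integrability of the polynomial $\log\widehat c_r^{U}$; this follows from the finite copula KL you invoke in your last bullet, either by data processing or by the paper's route (polynomial growth of $\log c^{Z}$ under (T2) of Part~I together with the Gaussian marginals).
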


\Cref{app:thm-trunc} proves this decomposition at the estimated basis $\hmV_r^{\mC}$, without the eigenbasis of $\mC$. Neither $R_N$ nor $\log\widehat Z_{\mathrm{comp}}$ has a prescribed sign, so $\mathcal E_{\rm marg}$ can be negative.

% =====================================================================
\pagebreak[4]
\section{A priori error bounds}
\label{sec:apriori}

\subsection{The marginals}
\label{sec:apriori-marg}

For a fixed dictionary of degree $K_2$ we assume the moment bounds \textup{(T)},
\begin{equation}
\EE\|\hvbeta\|^2=O(1),\qquad
\EE\,\EE_{\pi_{\vn}}\!\left[(1+\|\scZ\|+\|\widehat\scZ\|)^{2(K_2-1)}
\|\widehat\scZ-\scZ\|^2\right]
=O\!\left(\frac{d(\log N)^{K_2}}{N}\right).
\label{eq:transform-moment}
\end{equation}
where the outer expectation is over the estimation sample and the inner one over a fresh residual. A mean-value bound for the polynomial $\widehat\scL_r$ and the Cauchy--Schwarz inequality give $\EE|R_N|=O(\sqrt d\,(\log N)^{K_2/2}N^{-1/2})$ (\Cref{app:thm-trunc}). Combining this with \textup{(M)} gives
\begin{equation}
  \EE[\mathcal E_{\rm marg} - \log\widehat Z_{\mathrm{comp}}] \;=\; O\!\bigl(\sqrt d\,(\log N)^{K_2/2}N^{-1/2}\bigr),
  \label{eq:Delta-marg}
\end{equation}
and $\mathcal E_{\rm marg} - \log\widehat Z_{\mathrm{comp}} = O_p(\sqrt d\,(\log N)^{K_2/2}N^{-1/2})$. The constants depend on $d$ and the dictionary. We compute the remaining correction $\log\widehat Z_{\mathrm{comp}}$ for each estimate and do not bound it a priori.

\subsection{The subspace}
\label{sec:stage1-error}

The Stage-1 bounds follow from the next inequality, in which $\scS'$ is an arbitrary score field.

\begin{proposition}[Stage-1 bounds]
\label{prop:stage1-cert}
Let $\hmV_r^{\mC}$ be the estimated Stage-1 basis of \textup{(3.5)} and $\hmP_r := \hmV_r^{\mC}\hmV_r^{\mC\top}$. Under Assumption~2.1 of Part~I, for every score field $\scS' \in L^2(\pi_{\vZ};\RR^d)$ with second-moment matrix $\mC' := \EE_{\pi_{\vZ}}[\scS'\scS'^{\top}]$,
\begin{equation}
  \sqrt{2\,\mathcal E_{\rm proj}}
  \;\le\;
  \|\scS - \scS'\|_{L^2(\pi_{\vZ})} \;+\; \sqrt{\tr\!\bigl((\mI-\hmP_r)\mC'\bigr)}.
  \label{eq:finite-sample-kl}
\end{equation}
For $\scS' = \scS$, the score-error term in \eqref{eq:finite-sample-kl} vanishes. Squaring the remaining inequality and dividing by two yields the oracle bound
\begin{equation}
  \mathcal E_{\rm proj} \;=\; \DKL\!\bigl(\pi_{\vn}\,\|\,\pi_{\vn}(\,\cdot\,;\hmV_r^{\mC})\bigr)
  \;\le\;
  C_1 \;:=\; \tfrac12\tr\!\bigl((\mI-\hmP_r)\mC\bigr),
  \label{eq:stage1-cert}
\end{equation}
The oracle bound uses half the trace of $\mC$ over the discarded subspace. For comparison, let $\mP_r^{\mC} := \mV_r^{\mC}\mV_r^{\mC\top}$ project onto the top-$r$ eigenspace of $\mC$. Subtracting and adding the trace at this projector separates the rank and suboptimality contributions:
\begin{equation}
  C_1 \;=\; \underbrace{\tfrac12\textstyle\sum_{i>r}\lambda_i(\mC)}_{\text{rank}} \;+\; \underbrace{\tfrac12\bigl[\tr(\mP_r^{\mC}\mC) - \tr(\hmP_r\mC)\bigr]}_{\Delta_{\rm subopt}\,\ge\,0},
  \label{eq:stage1-split}
\end{equation}
where the rank term is fixed by $\mC$ and $r$. The remaining term $\Delta_{\rm subopt}$ is of order $\|\sin\Theta(\hmV_r^{\mC},\mV_r^{\mC})\|_F^2$. It is zero whenever $\mathrm{span}(\hmV_r^{\mC}) = \mathrm{span}(\mV_r^{\mC})$, and only in this case if $\lambda_r(\mC) > \lambda_{r+1}(\mC)$.
\end{proposition}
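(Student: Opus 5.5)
The proof rests on three ingredients: the Gaussian log-Sobolev inequality applied conditionally on $\vU$, the triangle inequality in $L^2(\pi_{\vZ})$, and an eigenvalue comparison for the split.

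\emph{Reduction to copula coordinates.} $\scZ$ is a componentwise bijection, and both $\pi_{\vn}$ and $\pi_{\vn}(\cdot;\hmV_r^{\mC})$ carry the exact marginals $\prod_i\pi_i$. Under Assumption~2.1 of Part~I, $\mathcal E_{\rm proj}$ is therefore invariant under the change of variables. It equals $\DKL(\pi_{\vZ}\,\|\,c_r^U(\hmV_r^{\mC\top}\cdot)\gauss_d)$, which is $\EE_{\pi_{\vU}}[\DKL(\pi_{\vW\mid\vU}\,\|\,\gauss_{d-r})]$, as recorded in \Cref{prop:exact-decomp}.

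\emph{Conditional log-Sobolev step.} For fixed $\vu$, the conditional density of $\vW$ is $\pi_{\vW\mid\vU}(\vw\mid\vu)=c^Z(\mV\vu+\mV_\perp\vw)\gauss_{d-r}(\vw)/c_r^U(\vu)$. Its relative score with respect to $\gauss_{d-r}$ is therefore $\mV_\perp^\top\scS$. The Gaussian log-Sobolev inequality gives
\[
\DKL(\pi_{\vW\mid\vU}\,\|\,\gauss_{d-r})\le\tfrac12\EE\bigl[\|\mV_\perp^\top\scS\|^2\mid\vU\bigr].
\]
Averaging over $\vU$ yields $2\mathcal E_{\rm proj}\le\|(\mI-\hmP_r)\scS\|^2_{L^2(\pi_{\vZ})}$. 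Then
\[
\|(\mI-\hmP_r)\scS\|\le\|(\mI-\hmP_r)(\scS-\scS')\|+\|(\mI-\hmP_r)\scS'\|.
\]
The first term is bounded by $\|\scS-\scS'\|$, since $\mI-\hmP_r$ is an orthogonal projector. For the second,
\[
\|(\mI-\hmP_r)\scS'\|^2=\EE[\scS'^\top(\mI-\hmP_r)\scS']=\tr((\mI-\hmP_r)\mC').
\]
Together these give \eqref{eq:finite-sample-kl}, and \eqref{eq:stage1-cert} follows by taking $\scS'=\scS$.

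\emph{Regularity (main obstacle).} Applying log-Sobolev requires each conditional to be absolutely continuous with a weak gradient and finite Fisher information for $\pi_{\vU}$-a.e.\ $\vu$. I would derive this from Assumption~2.1 of Part~I ($\scS\in L^2(\pi_{\vZ})$, with $c^Z$ positive and $C^1$) via Fubini. If $\scS\notin L^2$, the bound is trivial. If a cleaner route is needed, one can approximate by mollified densities and use lower semicontinuity of KL.

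\emph{Split and suboptimality.} \eqref{eq:stage1-split} is an identity, using $\tr((\mI-\mP_r^{\mC})\mC)=\sum_{i>r}\lambda_i$. By Ky Fan's maximum principle, $\tr(\mP\mC)\le\sum_{i\le r}\lambda_i$ over rank-$r$ projectors, so $\Delta_{\rm subopt}\ge0$. Write $\mC=\sum\lambda_i\vv_i\vv_i^\top$. Then
\[
2\Delta_{\rm subopt}=\sum_{i\le r}\lambda_i(1-\vv_i^\top\hmP_r\vv_i)-\sum_{i>r}\lambda_i\vv_i^\top\hmP_r\vv_i .
\]
Both $\sum_{i\le r}(1-\vv_i^\top\hmP_r\vv_i)$ and $\sum_{i>r}\vv_i^\top\hmP_r\vv_i$ equal $\|\sin\Theta\|_F^2$. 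This gives
\[
(\lambda_r-\lambda_{r+1})\|\sin\Theta\|_F^2\le2\Delta_{\rm subopt}\le(\lambda_1-\lambda_d)\|\sin\Theta\|_F^2,
\]
which is the stated order. Equal spans give zero. Conversely, if $\lambda_r>\lambda_{r+1}$, the lower bound forces $\sin\Theta=0$, hence equal spans.
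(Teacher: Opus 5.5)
Your proof is correct and follows the paper's route. The paper applies the Gaussian-reference bound \textup{(2.5)} of Part~I at $\hmV_r^{\mC}$; this is Zahm et al.'s Cor.~2.10, which you rederive from the conditional Gaussian log-Sobolev inequality. It then uses the triangle inequality with the contraction $\mI-\hmP_r$, and the minimality of $\tr((\mI-\mP)\mC)$ at the top-$r$ eigenprojector for the split. The only difference is in the suboptimality step: the paper cites the Vu--Lei eigengap inequality for $\Delta_{\rm subopt}$, while your eigen-expansion gives the explicit two-sided estimate $(\lambda_r-\lambda_{r+1})\|\sin\Theta\|_F^2\le 2\Delta_{\rm subopt}\le(\lambda_1-\lambda_d)\|\sin\Theta\|_F^2$, which yields both the uniqueness statement and the upper $O(\|\sin\Theta\|_F^2)$ claim directly.
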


To prove \Cref{prop:stage1-cert}, apply the triangle inequality in $L^2(\pi_{\vZ};\RR^d)$ to the Gaussian-reference bound \textup{(2.5)} of Part~I (\Cref{app:thm-trunc}). This bound is \cite[Cor.~2.10]{ZahmCuiLawSpantiniMarzouk2022} specialized to a standard Gaussian reference. In the score-ratio-matching stability result of \cite[Thm.~3.3]{BaptistaBrennanMarzouk2025}, the divergence at the top-$r$ eigenspace of $\mC'$ is bounded by $\|\scS - \scS'\|_{L^2(\pi_{\vZ})}^2 + \sum_{i>r}\lambda_i(\mC')$. This follows from \eqref{eq:finite-sample-kl} at that eigenprojector after applying $(a+b)^2 \le 2a^2 + 2b^2$. Our bound retains the square root and applies to the estimated projector $\hmP_r$.

Projecting $\scS$ onto the gradient space $\scF_\Lambda$ in $L^2(\pi_{\vZ})$ defines $\scS_\Lambda$, its second-moment matrix $\mC_\Lambda$, and the top-$r$ projector $\mP_r^{\Lambda\star}$. Because $\hmC$ is computed from the penalized estimate of Part~I, its limiting projector depends on the penalty. If the penalty strength tends to zero, the projector of $\hmC$ converges to $\mP_r^{\Lambda\star}$ under the consistency and eigengap conditions of \Cref{lem:score-cov-rate}. The experiments keep the penalty fixed. In that case the limit is the top-$r$ projector of $\mC_{\Lambda,\tau}$, the second-moment matrix of the penalized population estimate, provided it has an eigengap. The choice $\scS'=\scS_\Lambda$ in \eqref{eq:finite-sample-kl}, followed by the trace split, gives the next bound.

\begin{theorem}[Stage-1 KL divergence contribution]
\label{thm:stage1-rate}
Let $\hmV_r^{\mC}$ be the estimated Stage-1 basis of \textup{(3.5)}, let $\hmP_r := \hmV_r^{\mC}\hmV_r^{\mC\top}$, and let $\lambda_1(\mC_\Lambda) \ge \lambda_2(\mC_\Lambda) \ge \cdots$ be the eigenvalues of $\mC_\Lambda$. Under Assumption~2.1 of Part~I,
\begin{equation}
\begin{aligned}
  \sqrt{2\,\DKL\!\bigl(\pi_{\vn}\,\|\,\pi_{\vn}(\,\cdot\,;\hmV_r^{\mC})\bigr)}
  \;\le\;
  &\underbrace{\vphantom{\sqrt{\textstyle\sum_{i>r}\lambda_i(\mC_\Lambda)}}\|\scS - \scS_\Lambda\|_{L^2(\pi_{\vZ})}}_{\text{Hermite multi-index truncation}}
  \;+\;
  \underbrace{\sqrt{\textstyle\sum_{i>r}\lambda_i(\mC_\Lambda)}}_{\text{subspace rank truncation}}
  \\[2pt]
  &\;+\;
  \underbrace{\sqrt{\tr\!\bigl((\mP_r^{\Lambda\star}-\hmP_r)\mC_\Lambda\bigr)}}_{\text{subspace statistical estimation}}.
\end{aligned}
  \label{eq:stage1-threeterm}
\end{equation}
If the moment and regularity conditions \textup{(A1)}--\textup{(A3)} of \Cref{app:sample-proof} hold as well, then the third term satisfies
\begin{equation}
  \tr\!\bigl((\mP_r^{\Lambda\star}-\hmP_r)\mC_\Lambda\bigr)
  \;\le\;
  2r\,\|\hmC - \mC_\Lambda\|_{\rm op}
  \;=\;
  O_p(N^{-1/2}) \;+\; O(\tau_N)
  \label{eq:stage1-estrate}
\end{equation}
at the Tikhonov regularization strength $\tau_N$ of \textup{(A3)} \cite{EnglHankeNeubauer1996}.
\end{theorem}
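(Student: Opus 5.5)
We apply \Cref{prop:stage1-cert} with the choice $\scS'=\scS_\Lambda$, so that $\mC'=\mC_\Lambda$. Inequality \eqref{eq:finite-sample-kl} then gives
\[
\sqrt{2\,\mathcal E_{\rm proj}}\le\|\scS-\scS_\Lambda\|_{L^2(\pi_{\vZ})}+\sqrt{\tr\bigl((\mI-\hmP_r)\mC_\Lambda\bigr)} .
\]
This already produces the first term of \eqref{eq:stage1-threeterm}. For the trace we add and subtract $\tr(\mP_r^{\Lambda\star}\mC_\Lambda)$, exactly as in \eqref{eq:stage1-split}. Since $\mP_r^{\Lambda\star}$ projects onto a top-$r$ eigenspace of $\mC_\Lambda$, we have $\tr((\mI-\mP_r^{\Lambda\star})\mC_\Lambda)=\sum_{i>r}\lambda_i(\mC_\Lambda)$. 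Hence
\[
\tr\bigl((\mI-\hmP_r)\mC_\Lambda\bigr)=\sum_{i>r}\lambda_i(\mC_\Lambda)+\tr\bigl((\mP_r^{\Lambda\star}-\hmP_r)\mC_\Lambda\bigr).
\]
By Ky Fan's maximum principle the second summand is nonnegative, so both square roots are real. The subadditivity $\sqrt{a+b}\le\sqrt a+\sqrt b$ for $a,b\ge0$ then yields \eqref{eq:stage1-threeterm}. This part uses only Assumption~2.1 of Part~I, through \Cref{prop:stage1-cert}.

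For \eqref{eq:stage1-estrate} we write $\mP^\star:=\mP_r^{\Lambda\star}$ and use a Ky Fan comparison argument. By the same variational principle applied to $\hmC$, $\tr(\hmP_r\hmC)\ge\tr(\mP^\star\hmC)$. Therefore
\[
\tr\bigl((\mP^\star-\hmP_r)\mC_\Lambda\bigr)\le\tr\bigl((\mP^\star-\hmP_r)(\mC_\Lambda-\hmC)\bigr)\le\bigl(\tr(\mP^\star)+\tr(\hmP_r)\bigr)\|\hmC-\mC_\Lambda\|_{\rm op}=2r\,\|\hmC-\mC_\Lambda\|_{\rm op}.
\]
The second inequality is the duality $|\tr(\mP\mM)|\le\|\mP\|_{*}\|\mM\|_{\rm op}$, applied separately to each rank-$r$ projector, whose nuclear norm equals $r$. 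This step needs no eigengap, which is why the bound is stated with the factor $2r$.

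The remaining step, and the main obstacle, is the rate $\|\hmC-\mC_\Lambda\|_{\rm op}=O_p(N^{-1/2})+O(\tau_N)$. Write $\hmC=\frac1N\sum_j\widehat\scS(\hat\vz_j)\widehat\scS(\hat\vz_j)^\top$ with $\widehat\scS=\nabla H^\top\whvtheta$, where $\nabla H$ denotes the stacked Hermite gradients. Then $\hmC=\mathcal G(\whvtheta,\widehat\mM)$ for the empirical moment matrix $\widehat\mM$ of the products $\nabla H\,\nabla H^\top$. The population counterpart is $\mC_\Lambda=\mathcal G(\bm\theta_\Lambda,\mM)$, with $\bm\theta_\Lambda=(\mA^\star)^{-1}\mb^\star$ the $L^2(\pi_{\vZ})$ projection coefficients. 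This identification uses the score-matching identity (integration by parts), which shows that the minimizer of the population score-matching loss is the $L^2$ projection. We then telescope
\[
\|\hmC-\mC_\Lambda\|_{\rm op}\lesssim\|\widehat\mM-\mM\|\,\|\whvtheta\|^2+\|\mM\|\bigl(\|\whvtheta\|+\|\bm\theta_\Lambda\|\bigr)\|\whvtheta-\bm\theta_\Lambda\| .
\]
Under (A1)--(A2), the finite moments of the Hermite products and the control of the rank transform give $\|\widehat\mM-\mM\|=O_p(N^{-1/2})$ and $\|\widehat\mA-\mA^\star\|+\|\widehat\mb-\mb^\star\|=O_p(N^{-1/2})$ by a CLT or Chebyshev bound. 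The rank-transform error may add logarithmic factors, and the technical work is to absorb these into the assumptions. The Tikhonov solution $(\widehat\mA+\tau_N\mI)^{-1}\widehat\mb$ is compared with $(\mA^\star)^{-1}\mb^\star$ through the standard perturbation split of \cite{EnglHankeNeubauer1996}. This split gives a sampling error of $O_p(N^{-1/2})$, controlled by $\|(\mA^\star+\tau_N\mI)^{-1}\|\le\lambda_{\min}(\mA^\star)^{-1}$ under the invertibility in (A3), and a bias of $\tau_N\|(\mA^\star)^{-1}\|\,\|\bm\theta_\Lambda\|=O(\tau_N)$. Combining these estimates gives \eqref{eq:stage1-estrate}.
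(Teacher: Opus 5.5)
Your proposal is correct and follows the paper's proof. It makes the same choice $\scS'=\scS_\Lambda$ in \eqref{eq:finite-sample-kl}, uses the same Ky Fan trace split with $\sqrt{a+b}\le\sqrt a+\sqrt b$, and derives the $2r\|\hmC-\mC_\Lambda\|_{\rm op}$ bound (you use the optimality of $\hmP_r$ for $\hmC$ where the paper cites Weyl's inequality, an equivalent step), and it gets the rate from the same coefficient-perturbation and quadratic-in-coefficients argument as \Cref{lem:score-cov-rate}.

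Three small corrections. First, $\mA^\star\succ0$ is not supplied by \textup{(A3)}. It follows from \textup{(A1)}: a polynomial whose gradient vanishes $\pi_{\vZ}$-a.e.\ with $\pi_{\vZ}>0$ a.e.\ is constant, and the constant mode is excluded. Second, the ridge is $\tau_N\mD$ rather than $\tau_N\mI$, which changes nothing. Third, \textup{(A2)} already assumes the $O_p(N^{-1/2})$ deviations of the rank-transformed empirical moments, so no logarithmic factors need absorbing.
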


See \Cref{app:thm-trunc} for the proof. In \eqref{eq:stage1-threeterm}, the first term is the score approximation error in $\Lambda$, the second is the rank-$r$ truncation error of $\mC_\Lambda$, and the third is the subspace error due to sampling and regularization. Equation~\eqref{eq:stage1-estrate} bounds the third term by $O_p(N^{-1/2})+O(\tau_N)$, and the $O(\tau_N)$ part does not vanish at a fixed ridge strength.

A nonzero Hermite truncation error does not by itself prevent recovery of the subspace. If $\mC$ has rank $r$, the interaction order is unrestricted ($q_1\ge\min(K_1,d)$) and $\mC_\Lambda$ also has rank $r$, then the top-$r$ eigenspace of $\mC_\Lambda$ is the active subspace, regardless of $\|\scS-\scS_\Lambda\|_{L^2(\pi_{\vZ})}$. Part~I \cite{CAS_partI} measures the departure from this case for the multi-index sets used in practice.

\subsection{The reduced density}
\label{sec:stage2-error}

For Stage~2, the KL term in \eqref{eq:kl-three-way} is bounded by applying a log-Sobolev inequality to the Fisher-divergence bound in \Cref{thm:ccsm-rate}.

\begin{theorem}[Stage-2 KL divergence contribution]
\label{thm:stage2-kl}
Let $\widehat\pi_{\vU}$ be an exactly normalized Stage-2 density satisfying a logarithmic Sobolev inequality with $C_{\rm LSI}(\hvbeta)>0$, as ensured by the global constraints in \Cref{prop:lsi-holds}. Let $\scL_r^\star$ be the Fisher projection over $\Lambda_r$, and $\mathcal J_{\rm est}(\hvbeta):=\|\nabla(\scL_r^\star-\scL_r(\cdot;\hvbeta))\|_{L^2(\pi_{\vU})}^2$. Under \textup{(H1)--(H2)} and the population score-matching regularity of \Cref{app:bisc-ridge}, for every realization,
\begin{equation}
\DKL(\pi_{\vU}\,\|\,\widehat\pi_{\vU})
\le\frac{\mathcal J_{\rm trunc}+\mathcal J_{\rm est}(\hvbeta)}{2C_{\rm LSI}(\hvbeta)}.
\label{eq:stage2-kl}
\end{equation}
Here $\mathcal J_{\rm trunc}\le C_\star K^{-(2s-1)}\|\scL_r\|_{H^s_{\gauss}}^2$, with
$K:=\min\{|\alpha|_1\ge1:\alpha\notin\Lambda_r,\ \vartheta_\alpha\ne0\}$, and $\mathcal J_{\rm trunc}=0$ when this set is empty. Under the hypotheses of \Cref{thm:ccsm-rate} and \textup{(H3)}: $C_{\rm LSI}(\hvbeta)\ge c_{\rm LSI}>0$ almost surely,
\begin{equation}
\EE[\DKL(\pi_{\vU}\,\|\,\widehat\pi_{\vU})]
\le\mathcal B_{\rm dens}+O(N^{-1}),\qquad
\mathcal B_{\rm dens}:=
\frac{\mathcal J_{\rm trunc}+\mathcal E_{\rm est}+16\EE[d_N^2\mathbf1_{\mathcal G}]}{2c_{\rm LSI}}.
\label{eq:stage2-kl-mean}
\end{equation}
where $\mathcal E_{\rm est}$, the event $\mathcal G$ and the constraint distance $d_N$ are those of \Cref{thm:ccsm-rate}. For constants chosen from a finite grid, assume that every pair satisfies the hypotheses with a common $c_{\rm LSI}$. Summing $\mathcal E_{\rm est}$ and the constraint term over the grid makes \eqref{eq:stage2-kl-mean} valid for any selection rule.
\end{theorem}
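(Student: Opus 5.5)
The plan is to combine the log-Sobolev inequality of $\widehat\pi_{\vU}$ with a Pythagorean split of the Fisher divergence. The starting point is the LSI in relative-entropy form,
\[
\DKL(\pi_{\vU}\,\|\,\widehat\pi_{\vU})\le\frac{1}{2C_{\rm LSI}(\hvbeta)}\,\EE_{\pi_{\vU}}\bigl\|\nabla\log(\pi_{\vU}/\widehat\pi_{\vU})\bigr\|^2 .
\]
Since $\pi_{\vU}=c_r^U\gauss_r$ and $\widehat\pi_{\vU}=e^{\scL_r(\cdot;\hvbeta)}\gauss_r/\widehat Z_r$, the Gaussian factors and the constant $\widehat Z_r$ drop out of the gradient. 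The integrand is therefore $\|\nabla\scL_r-\nabla\scL_r(\cdot;\hvbeta)\|^2$, and the right-hand side is the Fisher divergence $\|\nabla(\scL_r-\scL_r(\cdot;\hvbeta))\|^2_{L^2(\pi_{\vU})}$.

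Next I split this divergence at the Fisher projection $\scL_r^\star$. Because $\scL_r^\star$ is the $L^2(\pi_{\vU})$-orthogonal projection of $\nabla\scL_r$ onto the gradient span of $\{H_\alpha\}_{\alpha\in\Lambda_r}$, the residual $\nabla(\scL_r-\scL_r^\star)$ is orthogonal to $\nabla(\scL_r^\star-\scL_r(\cdot;\hvbeta))$, which lies in that span. Pythagoras then gives exactly $\mathcal J_{\rm trunc}+\mathcal J_{\rm est}(\hvbeta)$, which proves \eqref{eq:stage2-kl} for every realization. This step needs (H1)--(H2) and the population regularity of \Cref{app:bisc-ridge}, so that $\nabla\scL_r\in L^2(\pi_{\vU})$, the projection exists, and the LSI applies to this pair. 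The truncation rate comes from bounding the projection error by the error of any competitor in the span. I would choose the Hermite truncation $\sum_{\alpha\in\Lambda_r}\vartheta_\alpha H_\alpha$ of $\scL_r$, which is admissible if $\Lambda_r$ contains all indices below the first excluded degree $K$. Its gradient tail is controlled by $\sum_{|\alpha|\ge K}|\alpha|\vartheta_\alpha^2\le K^{1-2s}\|\scL_r\|_{H^s_\gauss}^2$, using the identity $\|\nabla H_\alpha\|^2_{L^2(\gauss)}=|\alpha|_1$ and the weights $|\alpha|^s$ in the Sobolev norm. The factor $C_\star$ accounts for changing from $L^2(\gauss_r)$ to $L^2(\pi_{\vU})$, via a bounded density ratio or a hypercontractivity argument in (H2). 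If no nonzero coefficient lies outside $\Lambda_r$, then $\scL_r$ lies in the span and $\mathcal J_{\rm trunc}=0$.

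For the mean bound \eqref{eq:stage2-kl-mean}, I replace $C_{\rm LSI}(\hvbeta)$ by $c_{\rm LSI}$ using (H3), take expectations, and invoke \Cref{thm:ccsm-rate} for $\EE\,\mathcal J_{\rm est}$. On the good event $\mathcal G$, that theorem should give $\mathcal E_{\rm est}$ plus the constraint-distance term, where the factor $16$ arises from $(a+b)^2\le2a^2+2b^2$ applied twice with the constraint projection. On $\mathcal G^c$, I use $\EE\|\hvbeta\|^2$-type moment bounds together with $P(\mathcal G^c)$ decaying fast enough to get the $O(N^{-1})$ remainder. For a data-driven choice over a finite grid, I bound $\mathcal J_{\rm est}$ at the selected pair by the sum over all grid pairs. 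Each summand obeys \Cref{thm:ccsm-rate} with the common $c_{\rm LSI}$, so the bound holds for any selection rule.

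The main obstacle is the truncation constant. It requires comparing Sobolev norms under $\gauss_r$ and $\pi_{\vU}$, and it requires that $\Lambda_r$ be downward-closed enough for the truncated expansion to lie in the span. I would first try a bounded-ratio assumption on $c_r^U$ in (H2). I would also check that the $O(N^{-1})$ bad-event remainder really follows from the tail bounds in \Cref{thm:ccsm-rate}.
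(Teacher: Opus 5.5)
Your proposal is correct and follows the paper's proof step for step. The steps are: the log-Sobolev inequality of $\widehat\pi_{\vU}$, with the Gaussian scores and $\widehat Z_r$ cancelling; the Pythagorean split at the Fisher projection; the Hermite candidate $\sum_{\alpha\in\Lambda_r}\vartheta_\alpha H_\alpha$ for the truncation term; and, for the mean bound, \textup{(H3)}, \Cref{thm:ccsm-rate} and the grid-sum inequality.

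The two obstacles you flag are already settled by the hypotheses. The candidate always lies in the span, and $K$ is by definition the smallest degree of an omitted index with nonzero coefficient, so $\Lambda_r$ need not be downward-closed. The constant $C_\star$ is exactly the density-ratio bound $\pi_{\vU}\le C_\star\gauss_r$ of \textup{(H1)}, not something from \textup{(H2)} or from hypercontractivity.
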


\Cref{app:excess-fisher} proves the first bound using the log-Sobolev inequality for $\widehat\pi_{\vU}$, and the expectation bound then follows from \Cref{thm:ccsm-rate}.

\begin{proposition}[The constrained estimate satisfies a log-Sobolev inequality]
\label{prop:lsi-holds}
Let $\widehat\pi_{\vU} = \widehat Z_r^{-1}\exp\bigl(\textstyle\sum_{\alpha\in\Lambda_r}\widehat\theta^r_\alpha H_\alpha\bigr)\,\gauss_r$, and let $K_{\rm eff} \ge 4$ be the largest even integer at most $K_2$. Assume the coefficients satisfy the constraints of the Stage-2 solve (\PartIref{appendix}{E} of Part~I). In particular, coefficients of degree $K_2$ vanish when $K_2$ is odd, and the leading form $F[\hvbeta](\vv) := \sum_{|\alpha|_1 = K_{\rm eff}}\widehat\theta^r_\alpha\,\vv^\alpha/\sqrt{\alpha!}$ satisfies $F[\hvbeta](\vv) \le -\varepsilon$ for every $\|\vv\| = 1$. Then $\widehat\pi_{\vU}$ is a probability measure and satisfies a logarithmic Sobolev inequality with a constant $C_{\rm LSI} > 0$.
\end{proposition}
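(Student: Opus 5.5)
The plan is to write $\widehat\pi_{\vU} = \widehat Z_r^{-1} e^{-V}$ with potential
\[
V(\vu) := \tfrac12\|\vu\|^2 - \scL_r(\vu;\hvbeta),
\]
and then to show that $V$ equals a uniformly convex function plus a bounded perturbation. The Bakry--\'Emery criterion gives a log-Sobolev inequality for the convex part, and the Holley--Stroock perturbation lemma transfers it to $\widehat\pi_{\vU}$.

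The first step is normalizability. Expanding each $H_\alpha$ shows that the top-degree homogeneous part of $\scL_r(\cdot;\hvbeta)$ is exactly $F[\hvbeta]$ of degree $K_{\rm eff}$. This uses the leading coefficient $1/\sqrt{\alpha!}$ of the orthonormal Hermite polynomial, together with the constraint that degree-$K_2$ coefficients vanish when $K_2$ is odd. Every other term has degree at most $K_{\rm eff}-1$. Hence, for $\|\vu\|=\rho$,
\[
\scL_r(\vu;\hvbeta) \le -\varepsilon\rho^{K_{\rm eff}} + C(1+\rho^{K_{\rm eff}-1}),
\]
so $e^{\scL_r}\gauss_r$ is integrable. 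This gives $0<\widehat Z_r<\infty$, and $\widehat\pi_{\vU}$ is a probability measure.

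The second step is the convexity decomposition. The Hessian is
\[
\nabla^2 V(\vu) = \mI - \nabla^2 F[\hvbeta](\vu) - \nabla^2 G(\vu),
\]
where $G$ collects the lower-degree terms. By homogeneity, $\nabla^2 F[\hvbeta](\vu) = \rho^{K_{\rm eff}-2}\nabla^2 F[\hvbeta](\vu/\rho)$, while $\|\nabla^2 G(\vu)\| \le C(1+\rho^{K_{\rm eff}-3})$. The difficulty is that $F\le-\varepsilon$ on the sphere does not imply that $-F$ is convex, since $-\nabla^2F$ may have negative directions. So I would not aim for uniform convexity of $V$ outside a ball.

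Instead, I would invoke a criterion that needs only behaviour at infinity. One option is the Lyapunov-function criterion of Cattiaux--Guillin--Wu, or Bakry--Barthe--Cattiaux--Guillin: a log-Sobolev inequality holds if $V$ is $C^2$ with $\|\nabla^2 V\|\le C$ on compacts and there is a Lyapunov function $W = e^{a\|\vu\|^2}$ with $\mathcal L W \le (-c\|\vu\|^2 + b)W$. Here $\mathcal L = \Delta - \nabla V\cdot\nabla$. For $W = e^{a\|\vu\|^2}$ one computes
\[
\mathcal L W/W = 2ar + 4a^2\rho^2 - 2a\,\vu\cdot\nabla V .
\]
Euler's identity gives $\vu\cdot\nabla F = K_{\rm eff}F \le -K_{\rm eff}\varepsilon\rho^{K_{\rm eff}}$, so
\[
\vu\cdot\nabla V \ge \rho^2 + K_{\rm eff}\varepsilon\rho^{K_{\rm eff}} - C(1+\rho^{K_{\rm eff}-1}).
\]
With $a<1/4$ this gives a drift of order $-\rho^{K_{\rm eff}}\le -c\rho^2$ for large $\rho$; this is where $K_{\rm eff}\ge4>2$ is used.

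The remaining hypothesis of that criterion, a lower bound $\nabla^2 V \ge -K\mI$, fails globally for polynomials. I would therefore use the version of the criterion that needs only a local Poincar\'e inequality on balls, which holds trivially for a smooth positive density on a ball, combined with the super-Lyapunov condition $\mathcal L W\le(-c\rho^{2}+b)W$. If that version requires a Hessian lower bound, the fallback is Holley--Stroock with a different split. On $\|\vu\|\le R$, compare $V$ with $\tfrac12\|\vu\|^2 + \kappa\|\vu\|^{K_{\rm eff}}$. Outside the ball, use the super-Lyapunov argument, and patch the two regions with the standard ``Lyapunov plus local Poincar\'e implies Poincar\'e, plus super-Lyapunov implies LSI'' chain.

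I expect the main obstacle to be this step: certifying an LSI when $-F$ is negative on the sphere but not convex. Whichever criterion is used, the Euler-identity radial drift estimate is the essential ingredient.
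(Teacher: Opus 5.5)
Your normalizability step is correct and matches the paper. The top-degree part of $\scL_r(\cdot;\hvbeta)$ is $F[\hvbeta]$, so $V=\tfrac12\|\vu\|^2-\scL_r(\cdot;\hvbeta)$ has positive-definite leading form $-F[\hvbeta]$ of even degree $K:=K_{\rm eff}\ge4$, and $0<\widehat Z_r<\infty$. You are also right to drop Bakry--\'Emery plus Holley--Stroock. Your fallback fails for the same reason, because Holley--Stroock needs a globally bounded perturbation, and a comparison on a ball does not patch into one. When $-F[\hvbeta]$ is non-convex, no convex function lies within bounded distance of $V$: the midpoint-concavity defect of $V$ along suitable segments near $\rho\vv$ grows like $\rho^{K}$.

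The genuine gap is the criterion you use for the log-Sobolev inequality. Without a lower Hessian bound, a local Poincar\'e inequality on balls together with $\mathcal L W\le(-c\|\vu\|^2+b)W$ gives a Poincar\'e inequality, not an LSI. In the Lyapunov/super-Poincar\'e argument, the local constant on $B_R$ behaves like $e^{\mathrm{osc}_{B_R}V}=e^{O(R^{K})}$. A drift of size $R^2$ only localizes to $R\sim s^{-1/2}$, which gives a super-Poincar\'e function $e^{cs^{-K/2}}$ instead of the $e^{c/s}$ equivalent to LSI. The implication you want is false in general. Take a smooth even potential on $\RR$ with $V'(x)\ge cx$ that rises by $n x_n^2$ over short intervals near points $x_n\to\infty$. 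It satisfies your condition with $W=e^{ax^2}$, $0<a<c/2$, yet just past each rise the Bobkov--G\"otze quantity is at least of order $n/c^2$, so no LSI holds.

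The missing idea is that the drift must dominate $V$ itself, not $\|\vu\|^2$. Your Euler computation already supplies this: $\vu\cdot\nabla V\ge K\varepsilon\rho^{K}-C(1+\rho^{K-1})$ is a drift of order $\rho^{K}\gtrsim V$, which you discarded when you weakened it to $c\rho^2$. The paper keeps it. Euler's identity gives $|\nabla V|\ge c\rho^{K-1}$, and $|\Delta V|\le C\rho^{K-2}$. Since $2(K-1)>K$, it follows that $(1-a_0)|\nabla V|^2-\Delta V\ge V-b_0$. This is the Cattiaux--Guillin--Wang--Wu condition with $\eta(u)=u$, which amounts to the Lyapunov function $e^{a_0V}$. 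Together with $V/|\nabla V|^2\to0$ and polynomial growth of $\nabla V$ and $\nabla^2V$, their Theorem~3.17 yields only a \emph{defective} LSI. A Poincar\'e inequality (Bakry--Barthe--Cattiaux--Guillin) and Rothaus's lemma then make it tight, and your plan does not address this step.
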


\Cref{app:lsi-proof} establishes a positive log-Sobolev constant for each fitted density. A common lower bound over all admissible coefficients, an unbounded set, must still be assumed in \textup{(H3)}. The constraint distance $d_N$ is positive when $\bm\theta^{r,\star}$ violates the empirical centering constraint or the Stage-2 level bound. We bound neither $d_N$ nor $c_{\rm LSI}$, and \Cref{sec:aposteriori} estimates $\mathcal E_{\rm dens}$ directly.

The best reduced-score approximation in $\Lambda_r$ has Fisher error $\mathcal J_{\rm trunc}$. This error decreases as $\Lambda_r$ is enlarged and tends to zero if $\nabla\scL_r$ is in the $L^2(\pi_{\vU})$-closure of the union of the gradient spaces. For the multi-index sets of Part~I, the linear modes $e_j$ are excluded. The bound in \textup{(H2)} therefore tends to zero only if the linear Hermite coefficients $\vartheta_{e_j}$ of $\scL_r$ vanish. We select the Stage-2 degree by cross-validation \cite[\S 7.10]{HastieTibshiraniFriedman2009}.

\subsection{The overall a priori bound}
\label{sec:composite-kl}

\begin{proposition}[A priori bound]
\label{prop:apriori-rates}
Compute the Stage-2 coefficients from exact-transform observations independent of the sample used to compute $\hmV_r^{\mC}$. Assume Assumption~2.1 of Part~I, \textup{(M)}, \textup{(T)}, and the hypotheses of \Cref{thm:stage2-kl,thm:ccsm-rate} for every basis $\mV\in\St(r,\RR^d)$. All constants must be independent of $\mV$, including the lower bounds on $\lambda_{\min}(\mA^\star)$ and on the log-Sobolev constant. Then
\begin{equation}
\begin{aligned}
  \EE\bigl[\DKL(\pi_{\vn}\,\|\,\widehat\pi_{\vn})\bigr]
  \;\le\;
  &\underbrace{\EE[\log\widehat Z_{\mathrm{comp}}] + O\!\bigl(\sqrt d\,(\log N)^{K_2/2}N^{-1/2}\bigr)}_{\text{marginals}}
  \\[2pt]
  &\;+\;
  \underbrace{\tfrac12\textstyle\sum_{i>r}\lambda_i(\mC) + \EE[\Delta_{\rm subopt}]}_{\text{Stage 1}}
  \;+\;
  \underbrace{\EE\bigl[\mathcal B_{\rm dens}(\hmV_r^{\mC})\bigr] + O(N^{-1})}_{\text{Stage 2}},
\end{aligned}
  \label{eq:composite-kl}
\end{equation}
where $\mathcal B_{\rm dens}(\mV)$ is the bound $\mathcal B_{\rm dens}$ of \eqref{eq:stage2-kl-mean} for the basis $\mV$.
\end{proposition}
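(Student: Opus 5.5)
We take expectations of the exact decomposition of \Cref{prop:exact-decomp} over the estimation sample and bound each of the three terms separately. Since \eqref{eq:kl-three-way} holds for every realization, linearity gives
\[
\EE[\DKL(\pi_{\vn}\,\|\,\widehat\pi_{\vn})]=\EE[\mathcal E_{\rm marg}]+\EE[\mathcal E_{\rm proj}]+\EE[\mathcal E_{\rm dens}],
\]
provided each expectation is finite. Finiteness follows from the integrability in (M), the moment bound (T), and the bounds we establish below.

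For the marginal term we write $\EE[\mathcal E_{\rm marg}]=\EE[\log\widehat Z_{\mathrm{comp}}]+\EE[\mathcal E_{\rm marg}-\log\widehat Z_{\mathrm{comp}}]$. The second summand is $O(\sqrt d\,(\log N)^{K_2/2}N^{-1/2})$ by \eqref{eq:Delta-marg}, which in turn follows from (M) together with the bound on $\EE|R_N|$ under (T). For the projection term we use the oracle bound \eqref{eq:stage1-cert} and the split \eqref{eq:stage1-split}, which hold pointwise in the sample, and take expectations. The rank term $\tfrac12\sum_{i>r}\lambda_i(\mC)$ is deterministic, so we obtain $\EE[\mathcal E_{\rm proj}]\le\tfrac12\sum_{i>r}\lambda_i(\mC)+\EE[\Delta_{\rm subopt}]$.

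The Stage-2 term is the main obstacle, because $\vU=\hmV_r^{\mC\top}\vZ$ depends on a random basis, while \Cref{thm:stage2-kl} is stated for a fixed basis. We condition on the sample used to compute $\hmV_r^{\mC}$. By hypothesis, the Stage-2 coefficients are computed from exact-transform observations independent of that sample. Conditionally on $\hmV_r^{\mC}=\mV$, the Stage-2 fit is therefore the estimator of \Cref{thm:stage2-kl} for the fixed basis $\mV$, applied to an i.i.d.\ sample of $\mV^\top\vZ$, so \eqref{eq:stage2-kl-mean} gives
\[
\EE[\mathcal E_{\rm dens}\mid\hmV_r^{\mC}=\mV]\le\mathcal B_{\rm dens}(\mV)+O(N^{-1}).
\]
Two points need checking here. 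First, the $O(N^{-1})$ remainder must be uniform in $\mV$. This is where we use the assumption that all constants in \Cref{thm:stage2-kl,thm:ccsm-rate} are independent of $\mV$, including the lower bound on $\lambda_{\min}(\mA^\star)$ and the common log-Sobolev constant $c_{\rm LSI}$. We trace the remainder through the proof of \Cref{thm:ccsm-rate} and confirm that it depends only on these uniform quantities. Second, measurability of $\mV\mapsto\mathcal B_{\rm dens}(\mV)$ on $\St(r,\RR^d)$ is needed. We would argue it from continuity of the population quantities $\mA^\star,\mb^\star,\mB^\star$ in $\mV$, or simply assume it as part of the hypotheses. Taking the outer expectation via the tower property then yields $\EE[\mathcal E_{\rm dens}]\le\EE[\mathcal B_{\rm dens}(\hmV_r^{\mC})]+O(N^{-1})$.

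Summing the three bounds gives \eqref{eq:composite-kl}. Note that the marginal term carries the exact transform $\scZ$ inside $\widehat c^{Z}$ through $R_N$. The independence of the Stage-2 sample does not affect \eqref{eq:Delta-marg}, because (T) controls $\EE\|\hvbeta\|^2$ unconditionally.
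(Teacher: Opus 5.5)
Your proposal is correct and follows essentially the same route as the paper. You take expectations of the exact decomposition, use \eqref{eq:Delta-marg} for the marginal term, and apply the pointwise oracle bound with the split \eqref{eq:stage1-split} to the projection term; you then condition on the Stage-1 sample so that \eqref{eq:stage2-kl-mean} applies at a fixed basis, with the $O(N^{-1})$ remainder made uniform by the $\mV$-independent constants, while your measurability remark on $\mV\mapsto\mathcal B_{\rm dens}(\mV)$ is a detail the paper leaves implicit.
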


The proof in \Cref{app:thm-trunc} relies on the independent sample and exact rank transform used in Stage~2 of this idealized estimator. Conditionally on the Stage-1 sample, the subspace is fixed and the Stage-2 observations are independent draws from the true projected distribution.

In the Part~I implementation, both stages use the same sample and an estimated rank transform. For each fitted density, \eqref{eq:kl-three-way} and \eqref{eq:stage2-kl} remain valid under their stated assumptions. The expected-error bound \eqref{eq:composite-kl} has not been proved for this implementation.

The marginal bound, apart from $\EE[\log\widehat Z_{\mathrm{comp}}]$, tends to zero as $N\to\infty$. In Stage~1, the trailing eigenvalue sum of $\mC$ decreases with $r$, whereas the estimated subspace's suboptimality depends on $\Lambda$ and the Stage-1 ridge. The Stage-2 bound keeps the truncation term, the curvature part of the regularization bias and the constraint term as $N$ grows.

% =====================================================================
\section{A posteriori error estimation}
\label{sec:aposteriori}

Some quantities in \Cref{sec:apriori} cannot be computed from residual samples alone. Direct sample estimates are available for $\mathcal E_{\rm marg}$ and $\mathcal E_{\rm dens}$. For $\mathcal E_{\rm proj}$, the sample first gives a bound on the mean squared magnitude of the fitted score outside the retained subspace. A bound on the projection error also requires corrections for score estimation and the rank transform.

\subsection{The marginals and the reduced density}
\label{sec:apost-density}

The identity $\mathcal E_{\rm dens} = -H(\pi_{\vU}) - \EE_{\pi_{\vU}}[\log\widehat\pi_{\vU}(\vU)]$ separates entropy from an expectation of the fitted log-density. The marginal divergence $\DKL(\pi_i\,\|\,\widehat\pi_i)$ likewise uses $H(\pi_i)$ and $\EE_{\pi_i}[\log\widehat\pi_i]$. The fitted log-densities have closed forms, so their contributions are estimated by negative sample means. Estimating the entropies is a nonparametric problem in dimension $r$ or $1$.

Let $\vx_1,\dots,\vx_M$ be residuals independent of the estimation sample, with projected rank-Gaussianized coordinates $\vu_m:=\hmV_r^{\mC\top}\scZ(\vx_m)$. Apply the weighted nearest-neighbour entropy estimator \cite[\S2]{BerrettSamworthYuan2019} to $\vu_1,\dots,\vu_M$, and denote the result by $\widehat H^w_M$. The choices of the number of neighbours $k$ and sample size $M$ are given in \Cref{sec:exp-endtoend}. The two error estimates are
\begin{equation}
  \begin{aligned}
  \widehat{\mathcal E}_{\rm dens} &:= -\,\widehat H^{w}_M - \frac1M\sum_{m=1}^M \log\widehat\pi_{\vU}(\vu_m),\\
  \widehat{\mathcal E}_{\rm marg} &:= \sum_{i=1}^d\Bigl(-\,\widehat H_{M,i} - \frac1M\sum_{m=1}^M\log\widehat\pi_i(x_{m,i})\Bigr) + \log\widehat Z_{\mathrm{comp}},
  \end{aligned}
  \label{eq:apost-est}
\end{equation}
where $\widehat H_{M,i}$ is computed with the same estimator at $r=1$, using coordinate $i$ of $\vx_1,\dots,\vx_M$. The target of the marginal estimate is $\mathcal E_{\rm marg}-R_N$: the term $R_N$ is omitted. Evaluating $\log\widehat\pi_{\vU}$ entails computing $\widehat Z_r$, and $\widehat{\mathcal E}_{\rm marg}$ additionally uses $\widehat Z_{\mathrm{comp}}$. We compute $\widehat Z_r$ with $M_{\rm norm}=8192$ scrambled Sobol' points (Supplement~\ref{sm:experimental-protocol}).
For the convergence rate below, the highest-degree part of the fitted polynomial must be strictly negative in every direction. This is the global Stage-2 constraint.
For each fixed fit, Sobol' quadrature of the exact integrand then gives an error of $O(M_{\rm norm}^{-1+\epsilon})$ in $\log\widehat Z_r$, for every $\epsilon>0$, as the number of points increases through powers of two.
\Cref{app:qmc-normalizer} gives the proof, which uses the Koksma--Hlawka inequality \cite[Thm.~2.11]{Niederreiter1992}.
The implicit constant depends on the fitted polynomial, so the rate alone gives no numerical error bound for $8192$ points. The central limit theorem of \Cref{cor:apost-den} uses the exact normalizer, and it holds with a computed normalizer when the error in $\log\widehat Z_r$ is $o_p(M^{-1/2})$.

\begin{corollary}[A posteriori estimate of the reduced-density and marginal divergences]
\label{cor:apost-den}
Assume, conditionally on the estimation sample: \textup{(D1)} $\pi_{\vU}$ belongs to the class $\mathcal F_{r,\theta}$ of \cite[\S2]{BerrettSamworthYuan2019}, with $\alpha>r$ moments and smoothness $\beta>r/2$; \textup{(D2)} $\Var_{\pi_{\vU}}\log\pi_{\vU}<\infty$ and $\Var_{\pi_{\vU}}\log\widehat\pi_{\vU}<\infty$; \textup{(D3)} $\vx_1,\dots,\vx_M$ are i.i.d.\ $\pi_{\vn}$ and independent of the estimation sample. Take $w^{(k)}\in\mathcal W^{(k)}$ with $\sup_k\|w^{(k)}\|<\infty$ and $k=k_M$ as in \cite[Thm.~1]{BerrettSamworthYuan2019}. Then $\widehat{\mathcal E}_{\rm dens}\to\mathcal E_{\rm dens}$ in probability, and
\begin{equation}
  \sqrt M\,\bigl(\widehat{\mathcal E}_{\rm dens} - \mathcal E_{\rm dens}\bigr)
  \;\xrightarrow{\;d\;}\;
  \mathcal N\bigl(0,\ \sigma_{\rm den}^2\bigr),
  \qquad
  \sigma_{\rm den}^2 \;=\; \Var_{\pi_{\vU}}\!\Bigl[\log\frac{\pi_{\vU}(\vU)}{\widehat\pi_{\vU}(\vU)}\Bigr],
  \label{eq:apost-clt}
\end{equation}
the variance of the oracle log-ratio estimator. For each marginal divergence, the corresponding one-dimensional hypotheses give the same conclusion. Minkowski's inequality also gives
\[
\sigma_{\rm den}\le\sqrt{\Var_{\pi_{\vU}}\log\pi_{\vU}}
+\sqrt{\Var_{\pi_{\vU}}\log\widehat\pi_{\vU}}.
\]
The variance estimator of \cite[\S2]{BerrettSamworthYuan2019} and the sample variance of $\log\widehat\pi_{\vU}(\vu_m)$ estimate the first and the second variance consistently. We report the sum of their square roots divided by $\sqrt M$ as an estimate of an upper bound on the asymptotic standard error.
\end{corollary}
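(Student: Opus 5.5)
We split the estimator into an entropy part and a plug-in part. Write
\[
\widehat{\mathcal E}_{\rm dens}-\mathcal E_{\rm dens}
=\bigl(H(\pi_{\vU})-\widehat H^w_M\bigr)
-\Bigl(\tfrac1M\textstyle\sum_m\log\widehat\pi_{\vU}(\vu_m)-\EE_{\pi_{\vU}}[\log\widehat\pi_{\vU}]\Bigr).
\]
We work conditionally on the estimation sample. Then $\hmV_r^{\mC}$, $\widehat\pi_{\vU}$ and the exact normalizer $\widehat Z_r$ are fixed. By (D3) and the use of the exact transform $\scZ$, the points $\vu_m$ are i.i.d.\ draws from $\pi_{\vU}$. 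Under (D1) and the conditions on the weights and on $k_M$, \cite[Thm.~1]{BerrettSamworthYuan2019} gives efficiency of the weighted Kozachenko--Leonenko estimator:
\[
\widehat H^w_M=-\frac1M\sum_m\log\pi_{\vU}(\vu_m)+o_p(M^{-1/2}).
\]
We will use this form, the asymptotic linearity statement with influence function $-\log\pi_{\vU}-H(\pi_{\vU})$, rather than just the marginal CLT. This is the key step, and the main obstacle is checking that our setting really is covered by that theorem. We need the class $\mathcal F_{r,\theta}$ with $\alpha>r$ and $\beta>r/2$, and these are exactly the conditions assumed in (D1). We also need the asymptotic expansion to be available in this linear form, not only as a variance statement. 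If only the CLT is stated there, we would take the linearization from their proof, which decomposes the bias and shows the remainder $\widehat H^w_M+\frac1M\sum\log\pi_{\vU}(\vu_m)$ is $o_p(M^{-1/2})$.

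Substituting the expansion gives
\[
\widehat{\mathcal E}_{\rm dens}-\mathcal E_{\rm dens}
=\frac1M\sum_m\bigl(Y_m-\EE Y_m\bigr)+o_p(M^{-1/2}),
\qquad
Y_m:=\log\frac{\pi_{\vU}(\vu_m)}{\widehat\pi_{\vU}(\vu_m)}.
\]
Here $\EE Y_m=\mathcal E_{\rm dens}$, and it is finite by \Cref{prop:exact-decomp}. The $Y_m$ are i.i.d., and (D2) together with Minkowski gives $\Var Y_m<\infty$. The classical CLT plus Slutsky's lemma then yields \eqref{eq:apost-clt} with $\sigma_{\rm den}^2=\Var Y$. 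Consistency follows because both the average and the remainder are $o_p(1)$. If $\widehat Z_r$ is computed, its log-error shifts every $Y_m$ by a constant. An error of $o_p(M^{-1/2})$ is therefore absorbed by Slutsky, as remarked before the corollary.

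The marginal case repeats this argument with $r=1$, the coordinate samples $x_{m,i}$ and $\widehat\pi_i$, under the analogous one-dimensional (D1)--(D2). Since $\log\widehat Z_{\mathrm{comp}}$ is a fixed constant conditionally, it does not affect the argument.

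The Minkowski bound is the triangle inequality in $L^2(\pi_{\vU})$ applied to centered versions: $\|Y-\EE Y\|_2\le\|\log\pi_{\vU}-\EE\log\pi_{\vU}\|_2+\|\log\widehat\pi_{\vU}-\EE\log\widehat\pi_{\vU}\|_2$. For the consistency of the reported quantity, the variance estimator of \cite{BerrettSamworthYuan2019} is consistent for $\Var\log\pi_{\vU}$. The sample variance of $\log\widehat\pi_{\vU}(\vu_m)$ converges to $\Var\log\widehat\pi_{\vU}$ by the law of large numbers, using (D2). Continuous mapping through the square roots then completes the proof.
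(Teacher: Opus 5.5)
Your proposal is correct and follows the paper's proof step for step. It uses the mean-square equivalence of the weighted estimator to the oracle $-M^{-1}\sum_m\log\pi_{\vU}(\vu_m)$ from \cite[Thm.~1]{BerrettSamworthYuan2019}, subtracts the i.i.d.\ cross-entropy mean to obtain an average of $\log\pi_{\vU}(\vu_m)-\log\widehat\pi_{\vU}(\vu_m)$, applies the central limit theorem with Slutsky's lemma and the law of large numbers, and uses Minkowski's inequality on the centered log densities. The paper adds only one detail: for $r=1$ the weight class has no $\Gamma$-constraints, so the unweighted choice $w=e_k$ is admissible for the marginal terms.
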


The proof is in \Cref{app:apost-den}.

Condition \textup{(D1)} restricts the smoothness and moments of the unknown reduced density, but permits unbounded support. The coordinates of $\vU$ are linear combinations of standard Gaussian variables, without any requirement of independence. The polynomial $\log\widehat\pi_{\vU}$ therefore has all moments under $\pi_{\vU}$, which implies the second part of \textup{(D2)}.

The corollary concerns exact rank-Gaussianized coordinates. In applications and in \Cref{sec:exp-endtoend}, held-out residuals instead pass through $\widehat\scZ$, which is constant beyond the estimation-sample range. Thus the law of $\hmV_r^{\mC\top}\widehat\scZ(\vx)$, $\vx\sim\pi_{\vn}$, has atoms at the images of the outer orthants. This law has no density and does not satisfy the corollary's hypotheses.

Suppose a different transform gives an image law $Q_N$ with a density, and that the following entropies and expectations are finite. An entropy estimate consistent for $Q_N$ then yields a reduced-density error estimate for $\DKL(Q_N\,\|\,\widehat\pi_{\vU})$. Denote by $P$ the law of $\hmV_r^{\mC\top}\scZ(\vx)$, $\vx\sim\pi_{\vn}$, and define $R_N$ for this transform as in \Cref{prop:exact-decomp}. The discrepancy between the two targets is
\begin{equation}
  \DKL(Q_N\,\|\,\widehat\pi_{\vU}) - \mathcal E_{\rm dens}
  \;=\; H(P) - H(Q_N) \;+\; R_N \;+\; \tfrac12\bigl(\EE_{Q_N}\|\vU\|_2^2 - \EE_P\|\vU\|_2^2\bigr),
  \label{eq:fitted-target}
\end{equation}
because $\log\widehat\pi_{\vU} = \log\widehat c_r^{U} + \log\gauss_r$ and $\widehat c^{Z}(\vz) = \widehat c_r^{U}(\hmV_r^{\mC\top}\vz)$. If this transform satisfies \textup{(T)}, the condition controls $R_N$ and the difference of second moments, and the entropy difference $H(P)-H(Q_N)$ needs a separate argument. Equation~\eqref{eq:fitted-target} does not apply to the implemented transform, whose image law has atoms. \Cref{sec:exp-endtoend} compares the implemented estimates with reference values.

\subsection{The subspace}
\label{sec:apost-sub}

By \Cref{prop:exact-decomp}, $\mathcal E_{\rm proj} = \EE_{\pi_{\vU}}[\DKL(\pi_{\vW\mid\vU}\,\|\,\gauss_{d-r})]$. A direct estimate would need an entropy in dimension $d - r$, which is $16$ in our examples. Instead, the Gaussian-reference inequality \textup{(2.5)}, evaluated at the estimated basis, gives $\mathcal E_{\rm proj} \le \tfrac12\tr((\mI-\hmP_r)\mC)$. We fit the score and the rank transform on one half of the sample and use the other half to bound the mean squared magnitude of the fitted score outside the subspace. The bound \eqref{eq:aposteriori-corrected} on $\mathcal E_{\rm proj}$ adds corrections for the score estimation and the rank transform.

\begin{corollary}[Finite-sample bound on the trace of the estimated score]
\label{cor:aposteriori}
Divide the $N$ samples into equal halves. Fit the rank transform and the Stage-1 coefficients $\whvtheta$ on the first half, and write $\vZ^{(1)},\dots,\vZ^{(N_2)}$, $N_2 := N/2$, for the transformed second half. Evaluate $\widehat\scS(\vZ^{(n)})$ to form the empirical second-moment matrix $\hmC$ in \textup{(3.5)}. From this matrix, compute the top-$r$ eigenprojector $\hmP_r$ and trailing eigenvalue sum $\sum_{i>r}\widehat\lambda_i(\hmC)$. Also compute $\widehat m_4 := N_2^{-1}\sum_n\|\widehat\scS(\vZ^{(n)})\|_2^4$.
Conditional on the first half, let $\vZ$ be an independent residual transformed by the fitted rank map. The corresponding population moments are $\widetilde\mC(\whvtheta) := \EE[\widehat\scS(\vZ)\,\widehat\scS(\vZ)^\top]$ and $m_4 := \EE[\|\widehat\scS(\vZ)\|_2^4]$, both conditional on the first half. The range of the fitted transform is bounded, so the fixed polynomial $\widehat\scS$ has a finite fourth moment for every noise law. For any $\delta\in(0,1)$, with probability at least $1-\delta$ over the second half,
\begin{equation}
  \tr\!\bigl((\mI-\hmP_r)\,\widetilde\mC(\whvtheta)\bigr)
  \;\le\;
  \textstyle\sum_{i>r}\widehat\lambda_i(\hmC) \;+\; \sqrt{d-r}\,\sqrt{\frac{m_4}{N_2\,\delta}}.
  \label{eq:aposteriori}
\end{equation}
Replacing the unknown $m_4$ by its conditionally unbiased estimate $\widehat m_4$ changes the probability statement: \eqref{eq:aposteriori} holds with probability at least $1-\delta-o(1)$ over both halves as $N_2\to\infty$, provided $m_8/m_4^2=O_p(1)$ on $m_4>0$, where $m_8:=\EE[\|\widehat\scS(\vZ)\|_2^8]$. On $m_4=0$, the score is almost surely zero, as are both sides of \eqref{eq:aposteriori}.
\end{corollary}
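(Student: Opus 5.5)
The plan is to compare the population trace at the empirical projector with the empirical trace, and then control the fluctuation by Chebyshev's inequality. Conditionally on the first half, $\widehat\scS$ is a fixed polynomial field, and the $\vZ^{(n)}$ are i.i.d.\ copies of $\vZ$. Hence $\hmC$ is an empirical average of the i.i.d.\ matrices $\mM_n:=\widehat\scS(\vZ^{(n)})\widehat\scS(\vZ^{(n)})^\top$, with mean $\widetilde\mC(\whvtheta)$. The difficulty is that $\hmP_r$ depends on the second half as well, so the trace cannot be evaluated at a fixed projector. I would avoid this with a variational step. Let $\mQ:=\mI-\hmP_r$, a rank-$(d-r)$ orthogonal projector. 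Then
\[
\tr(\mQ\widetilde\mC)=\tr(\mQ\hmC)+\tr\bigl(\mQ(\widetilde\mC-\hmC)\bigr),\qquad \tr(\mQ\hmC)=\textstyle\sum_{i>r}\widehat\lambda_i(\hmC),
\]
by the choice of $\hmP_r$. For the second term, I would use $|\tr(\mQ\mD)|\le\|\mQ\|_F\|\mD\|_F=\sqrt{d-r}\,\|\mD\|_F$ for any symmetric $\mD$. This is Cauchy--Schwarz in the Frobenius inner product, and the bound is uniform over rank-$(d-r)$ projectors. It therefore removes the dependence of $\hmP_r$ on the data.

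Next, I would bound $\|\hmC-\widetilde\mC\|_F$ in probability. Independence gives $\EE\|\hmC-\widetilde\mC\|_F^2=N_2^{-1}\EE\|\mM_1-\widetilde\mC\|_F^2\le N_2^{-1}\EE\|\mM_1\|_F^2$. Since $\|\mM_1\|_F=\|\widehat\scS(\vZ)\|_2^2$, this is at most $m_4/N_2$. Markov's inequality applied to the squared norm then gives $\|\hmC-\widetilde\mC\|_F\le\sqrt{m_4/(N_2\delta)}$ with conditional probability at least $1-\delta$. Combining this with the first step yields \eqref{eq:aposteriori}. Finiteness of $m_4$ follows as stated: the fitted transform has bounded range, so the polynomial $\widehat\scS$ is bounded on it.

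For the plug-in version, the bound with $\widehat m_4$ fails only if either the original event fails, or $\widehat m_4<m_4(1-\eta)$ while the deviation lies between the two thresholds. I would show $\widehat m_4/m_4\to1$ in probability by Chebyshev's inequality. Conditionally, $\Var(\widehat m_4/m_4)\le m_8/(N_2m_4^2)$, which is $o_p(1)$ under $m_8/m_4^2=O_p(1)$. Letting $\eta\to0$ slowly, the continuity of $t\mapsto\sqrt{t/(N_2\delta)}$ shows that the extra failure probability is $o(1)$; one can replace $\delta$ by $\delta(1-\eta)^{-1}$, which is absorbed. Taking expectations over the first half then gives the unconditional statement by dominated convergence. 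On $m_4=0$ the score vanishes almost surely, so every term is zero. The main obstacle I expect is this last step. The $o(1)$ must be justified unconditionally even though the ratio condition holds only in probability over the first half, so I would argue along subsequences on the event where $m_8/m_4^2\le C$ and bound the complement by its small probability.
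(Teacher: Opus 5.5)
Your proposal is correct and follows the paper's proof essentially step for step: the same split of the trace at $\hmP_r$, the same $\sqrt{d-r}\,\|\hmC-\widetilde\mC(\whvtheta)\|_F$ bound (you apply Frobenius Cauchy--Schwarz directly, whereas the paper first invokes von Neumann's trace inequality), the same $m_4/N_2$ second-moment bound with Chebyshev, and the same plug-in argument with a shrunken threshold $(1-\eta)m_4$, $\eta\to0$ (the paper takes $\eta=N_2^{-1/4}$), followed by averaging over the first half. When writing it up, state the plug-in step as the inclusion of the failure event in $\{\widehat m_4<(1-\eta)m_4\}\cup\{\|\hmC-\widetilde\mC(\whvtheta)\|_F>\sqrt{(1-\eta)m_4/(N_2\delta)}\}$, whose conditional probability is at most $\delta/(1-\eta)+\min\{1,(m_8/m_4^2-1)/(\eta^2N_2)\}$; your first decomposition omits the case $(1-\eta)m_4\le\widehat m_4<m_4$, and continuity of the threshold alone does not bound the probability of the deviation falling between the two thresholds.
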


The proof is in \Cref{app:thm-trunc}.

With $\widehat m_4$, the right-hand side of \eqref{eq:aposteriori} can be computed from the sample. Its left-hand side is a trace of the estimated score under the first-half transform. In exact coordinates, the second-moment matrix is $\mC(\whvtheta):=\EE_{\pi_{\vZ}}[\widehat\scS\widehat\scS^\top]$. Applying \eqref{eq:finite-sample-kl} with $\scS'=\widehat\scS$ gives
\begin{equation}
  \sqrt{2\,\mathcal E_{\rm proj}}
  \;\le\;
  \|\scS - \widehat\scS\|_{L^2(\pi_{\vZ})} \;+\; \sqrt{\tr\!\bigl((\mI-\hmP_r)\,\mC(\whvtheta)\bigr)},
  \label{eq:aposteriori-kl}
\end{equation}
whose first term includes the score component orthogonal to $\scF_\Lambda$. This component cannot be determined from a fit in $\scF_\Lambda$ alone. A second fit in a larger space permits estimation of this approximation error, subject to the following assumption.

\begin{proposition}[Hierarchical estimate of the Hermite truncation term]
\label{prop:hier-trunc}
Let $\Lambda \subset \Lambda'$ be Stage-1 multi-index sets, so $\scF_\Lambda \subseteq \scF_{\Lambda'}$. Write $\scS_\Lambda, \scS_{\Lambda'}$ for the $L^2(\pi_{\vZ})$ projections of $\scS$ onto these spaces, and set $\eta := \|\scS_{\Lambda'} - \scS_\Lambda\|_{L^2(\pi_{\vZ})}$. Orthogonality gives $\|\scS - \scS_\Lambda\|^2_{L^2(\pi_{\vZ})} = \|\scS - \scS_{\Lambda'}\|^2_{L^2(\pi_{\vZ})} + \eta^2$. Under the saturation assumption \textup{(SAT)}: $\|\scS - \scS_{\Lambda'}\|_{L^2(\pi_{\vZ})} \le \beta\,\|\scS - \scS_\Lambda\|_{L^2(\pi_{\vZ})}$ for some $\beta \in [0,1)$,
\begin{equation}
  \eta^2 \;\le\; \|\scS - \scS_\Lambda\|^2_{L^2(\pi_{\vZ})} \;\le\; \frac{\eta^2}{1-\beta^2}.
  \label{eq:hier-trunc}
\end{equation}
In exact rank-Gaussianized coordinates, estimate the coefficient differences between the two sets on two disjoint subsamples, obtaining $\Delta\whvtheta_1$ and $\Delta\whvtheta_2$. On a third subsample, compute the Gram matrix $\widehat\mA_{\Lambda'}$ for the gradients $\nabla H_\alpha$, $\alpha\in\Lambda'$. If all three subsample sizes are proportional to $N$ and both sets satisfy \textup{(A1)}--\textup{(A3)}, the cross-fitted estimate \cite{ChernozhukovEtAl2018} $\widehat\eta_\times^2:=(\Delta\whvtheta_1)^\top\widehat\mA_{\Lambda'}\Delta\whvtheta_2$ satisfies $\widehat\eta_\times^2=\eta^2+O_p(N^{-1/2})+O(\tau_N)$.
\end{proposition}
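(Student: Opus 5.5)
The proof has two independent parts: the deterministic sandwich \eqref{eq:hier-trunc}, and the rate for the cross-fitted estimate.

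\emph{The sandwich.} Since $\scF_\Lambda\subseteq\scF_{\Lambda'}$ are closed subspaces of $L^2(\pi_{\vZ};\RR^d)$, the residual $\scS-\scS_{\Lambda'}$ is orthogonal to $\scF_{\Lambda'}$. The difference $\scS_{\Lambda'}-\scS_\Lambda$ lies in $\scF_{\Lambda'}$, so Pythagoras gives the stated identity $\|\scS-\scS_\Lambda\|^2=\|\scS-\scS_{\Lambda'}\|^2+\eta^2$. The lower bound in \eqref{eq:hier-trunc} follows by dropping the nonnegative first term. For the upper bound, insert (SAT) into the identity to get $\|\scS-\scS_\Lambda\|^2\le\beta^2\|\scS-\scS_\Lambda\|^2+\eta^2$. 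Rearranging with $1-\beta^2>0$ gives the claim.

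\emph{Population representation of $\eta^2$.} Let $\bm\theta^\star_\Lambda$ and $\bm\theta^\star_{\Lambda'}$ be the population coefficient vectors. Embed $\bm\theta^\star_\Lambda$ in $\RR^{\Lambda'}$ by padding with zeros, and set $\Delta\bm\theta^\star:=\bm\theta^\star_{\Lambda'}-\bm\theta^\star_\Lambda$. Then $\scS_{\Lambda'}-\scS_\Lambda=\sum_{\alpha\in\Lambda'}\Delta\theta^\star_\alpha\nabla H_\alpha$. With the population Gram matrix $\mA^\star_{\Lambda'}:=\EE_{\pi_{\vZ}}[\nabla H_\alpha\cdot\nabla H_{\alpha'}]$, this gives $\eta^2=(\Delta\bm\theta^\star)^\top\mA^\star_{\Lambda'}\Delta\bm\theta^\star$.

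\emph{Estimation errors.} Under (A1)--(A3) for each set, the ridge-regularized score-matching estimate on a subsample of size proportional to $N$ satisfies $\whvtheta-\bm\theta^\star=O_p(N^{-1/2})+O(\tau_N)$. This is the coefficient-level statement behind \Cref{thm:stage1-rate} and \Cref{lem:score-cov-rate}. Hence for $j=1,2$ we can write $\Delta\whvtheta_j=\Delta\bm\theta^\star+\bm e_j$ with $\bm e_j=O_p(N^{-1/2})+O(\tau_N)$. The moment conditions in (A1) give $\widehat\mA_{\Lambda'}-\mA^\star_{\Lambda'}=O_p(N^{-1/2})$ entrywise, by the CLT or Chebyshev's inequality on the third subsample.

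\emph{Expansion of $\widehat\eta_\times^2$.} Expanding the bilinear form gives
\[
\widehat\eta_\times^2-\eta^2=(\Delta\bm\theta^\star)^\top(\widehat\mA_{\Lambda'}-\mA^\star_{\Lambda'})\Delta\bm\theta^\star+\bm e_1^\top\widehat\mA_{\Lambda'}\Delta\bm\theta^\star+(\Delta\bm\theta^\star)^\top\widehat\mA_{\Lambda'}\bm e_2+\bm e_1^\top\widehat\mA_{\Lambda'}\bm e_2 .
\]
The first term is $O_p(N^{-1/2})$. The two linear terms are $O_p(N^{-1/2})+O(\tau_N)$, because $\widehat\mA_{\Lambda'}=O_p(1)$ and $\Delta\bm\theta^\star$ is fixed.

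\emph{Main obstacle: the product term.} The product term is $O_p(N^{-1})+O(\tau_N^2)+O_p(\tau_N N^{-1/2})$, which is absorbed into the stated rate if $\tau_N$ is bounded. The point of cross-fitting is that this term is \emph{not} of the form $\bm e^\top\mA\bm e$ with a common error $\bm e$. For a non-cross-fitted plug-in, that quadratic form would carry a positive bias of order $\tr(\mA\,\Cov\bm e)\sim|\Lambda'|/N$. Here the three subsamples are independent, so conditionally on the third subsample $\EE[\bm e_1^\top\widehat\mA\bm e_2]=\EE[\bm e_1]^\top\widehat\mA\,\EE[\bm e_2]$. The stochastic parts therefore contribute only at the product of their individual rates. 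The remaining care is to make these $O_p$ statements hold uniformly enough that products of $O_p$ terms from independent subsamples remain $O_p$ of the product. This is immediate, since each factor is tight and the dimension $|\Lambda'|$ is fixed.

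Collecting the four terms gives $\widehat\eta_\times^2=\eta^2+O_p(N^{-1/2})+O(\tau_N)$.
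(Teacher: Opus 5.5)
Your proposal is correct and follows the paper's proof. Pythagoras and (SAT) give the sandwich, and embedding both coefficient vectors in the $\Lambda'$ dictionary and expanding the bilinear form around $\Delta\theta^\top\mA_{\Lambda'}\Delta\theta=\eta^2$ gives the rate; your cross-fitting discussion is extra, since you already note that the product term is absorbed into $O_p(N^{-1/2})+O(\tau_N)$. Two small slips: the Gram-matrix rate comes from (A2), not (A1), and in the linear terms you should split $\widehat\mA_{\Lambda'}=\mA_{\Lambda'}+E_A$, as the paper does, so that the regularization-bias contribution remains a deterministic $O(\tau_N)$ rather than $O_p(\tau_N)$.
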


The proof is in \Cref{app:thm-trunc}.

Assumption \textup{(SAT)} requires the larger space to reduce the $L^2(\pi_{\vZ})$ approximation error by a factor no greater than $\beta<1$. This is the saturation condition used in hierarchical a posteriori finite-element estimation \cite{BankWeiser1985,DorflerNochetto2002}. It cannot be verified from the two fitted scores. In our examples, knowledge of the true score permits direct computation of $\beta$ (\Cref{sec:exp-aposteriori}). For the fitted score, the triangle inequality gives $\|\scS-\widehat\scS\|_{L^2(\pi_{\vZ})}\le\eta/\sqrt{1-\beta^2}+\|\scS_\Lambda-\widehat\scS\|_{L^2(\pi_{\vZ})}$, with the last term of order $O_p(N^{-1/2})+O(\tau_N)$ under \textup{(A1)}--\textup{(A3)}.

Set $e_N:=\|\scS-\widehat\scS\|_{L^2(\pi_{\vZ})}$ and $D_N:=\|\mC(\whvtheta)-\widetilde\mC(\whvtheta)\|_F$. These are the score-estimation error and the discrepancy between the fitted score's second moments under the exact and first-half transforms. Use $b_N^{\rm pop}$ for the right-hand side of \eqref{eq:aposteriori}, and $b_N$ for the value obtained by replacing $m_4$ with $\widehat m_4$. By Cauchy--Schwarz in the Frobenius inner product, $|\tr((\mI-\hmP_r)(\mC(\whvtheta)-\widetilde\mC(\whvtheta)))|\le\sqrt{d-r}\,D_N$. Substituting this inequality and \eqref{eq:aposteriori} into \eqref{eq:aposteriori-kl}, conditionally on the first half, gives with probability at least $1-\delta$ over the second half
\begin{equation}
\mathcal E_{\rm proj}\le\tfrac12\left[e_N+
\sqrt{b_N^{\rm pop}+\sqrt{d-r}\,D_N}\right]^2.
\label{eq:aposteriori-corrected}
\end{equation}
The substitution of $b_N$ for $b_N^{\rm pop}$ in \eqref{eq:aposteriori-corrected} gives the asymptotic probability statement of \Cref{cor:aposteriori}, under its moment-ratio condition. It does not give a finite-sample probability of $1-\delta$. The bound on $e_N$ after \Cref{prop:hier-trunc} needs the increment $\eta$ and the saturation constant $\beta$. Its last term is $O_p(N^{-1/2})+O(\tau_N)$, and the $O(\tau_N)$ part does not vanish at a fixed penalty strength. For the transform discrepancy, assume \textup{(T)} with Stage-1 degree $K_1$ and coefficients $\whvtheta$ in place of $K_2$ and $\hvbeta$. The mean-value and Cauchy--Schwarz argument used for $R_N$ then proves $D_N\to0$ in probability. The numerical values of $e_N$ and $D_N$ in \Cref{sec:exp-aposteriori} are computed using the known score.

\subsection{The overall estimate}
\label{sec:apost-assembled}

From $N$ residuals for the estimation and $M$ held-out residuals, the a posteriori estimate of \eqref{eq:target-error} is
\begin{equation}
  \widehat{\mathcal E}_{\rm marg}
  \;+\;
  \tfrac12\Bigl[\textstyle\sum_{i>r}\widehat\lambda_i(\hmC) + \sqrt{d-r}\,\sqrt{\widehat m_4/(N_2\delta)}\Bigr]
  \;+\;
  \widehat{\mathcal E}_{\rm dens}
  \label{eq:apost-assembled}
\end{equation}
Equation~\eqref{eq:apost-assembled} is our a posteriori estimate of the KL divergence \eqref{eq:target-error}. The middle term, $\tfrac12 b_N$, replaces $m_4$ in $\tfrac12 b_N^{\rm pop}$ by the sample estimate $\widehat m_4$. It leaves out the score error $e_N$ and the second-moment discrepancy $D_N$ of \eqref{eq:aposteriori-corrected}. The overall estimate also leaves out the rank-transform correction $R_N$. For the latter, \textup{(T)} gives $\EE|R_N|=O(\sqrt d\,(\log N)^{K_2/2}N^{-1/2})$ (\Cref{sec:apriori-marg}). Because of these omissions, the estimate can fall below the KL divergence. The probability statement for the score trace does not apply to the total. \Cref{sec:exp-endtoend} compares this total with reference values.

% =====================================================================
\section{Sharpness of the bounds}
\label{sec:tightness}

Minimizing \textup{(2.5)} selects the top-$r$ eigenspace of $\mC$. The divergence $\DKL(\pi_{\vn}\,\|\,\pi_{\vn}(\,\cdot\,;\mV_r))$ has no closed form, and its minimizing subspace differs in general from the minimizer of this bound. The experiments in \Cref{sec:experiments} distinguish the looseness of this Gaussian-reference bound from the error in approximating the score by Hermite polynomials. For the former, we compare with the tighter bound on the same divergence in \cite{LiCuiLiMarzoukZahm2024sharp}.

\paragraph{Looseness}
Using the quadrature in Supplement~\ref{sm:disintegration-mc}, we evaluate the divergence at $\mV_r^{\mC}$ in the rank-Gaussianized coordinates used in \textup{(2.5)}. The bound $\tfrac12\sum_{i>r}\lambda_i(\mC)$ exceeds this divergence by factors of $1.91$, $1.64$ and $1.98$ on Examples~1, 2 and~3. Table~\ref{tab:sm-jhi} in the supplement lists both quantities and the Monte Carlo standard error of the divergence.

At the estimated subspace $\hmV_r^{\mC}$ of Example~1, the same quadrature gives the divergence for comparison with $C_1$ in \eqref{eq:stage1-cert}. The ratio of the bound to the divergence is between $1.38$ and $1.45$ at every $N \ge 2{,}500$.

\paragraph{The two subspaces}
Denote the tighter bound by $J_{\mathrm{hi}}$. As a function of the basis, it satisfies $J_{\mathrm{hi}}\le\tfrac12\tr((\mI-\mV_r\mV_r^\top)\mC)$ for every $\mV_r$ (Supplement~\ref{sm:sharper-bound}). At $\mV_r^{\mC}$, replacing \textup{(2.5)} by $J_{\mathrm{hi}}$ reduces the bound by less than $0.5\%$ on each problem. We run Riemannian gradient descent \cite{AbsilMahonySepulchre2008} on $J_{\mathrm{hi}}$, initialized at the top-$r$ eigenbasis of $\mC$. The line search terminates when it finds no decrease in $J_{\mathrm{hi}}$, leaving a basis within $\|\sin\Theta\|_F \le 0.0004$ of the initial basis. Estimates of $\mC$ from $10^6$ and $2\cdot10^6$ samples, by comparison, have top-$r$ eigenbases separated by at least $0.0012$ on every example. The descent therefore changes the \textsc{Cas} subspace less than these two Monte Carlo estimates differ.

\paragraph{Agreement of the two bounds}
The difference between the two functionals is $\tfrac12\sum_j[\lambda_j(F^Z_\perp)-1-\log\lambda_j(F^Z_\perp)]$, where $F^Z_\perp$ is the Fisher information of $\pi_{\vZ}$ in the complement of the retained subspace. The difference is zero when $F^Z_\perp=\mI$. The rotation in our examples leaves twelve coordinates independent and standard Gaussian (\Cref{sec:exp-benchmark}). For a basis supported on the mixed block, these coordinates contribute twelve unit eigenvalues to $F^Z_\perp$. Its estimated eigenvalues lie in $[0.995,1.024]$ (Supplement~\ref{sm:sharper-bound}), which explains the close agreement of the bounds on these examples.

% =====================================================================
\section{Numerical results}
\label{sec:experiments}

The experiments in Part~I \cite{CAS_partI} compare the estimator with Gaussian-copula, product-of-marginals, and PCA-subspace baselines. Here we use the known noise laws to compare the bounds and estimates of \Cref{sec:apriori,sec:aposteriori} with reference values.

\subsection{The three example problems}
\label{sec:exp-benchmark}

Each of the three examples in Part~I \cite{CAS_partI} has dimension $d = 20$, with four coupled latent coordinates. A random rotation acts on these and four independent standard Gaussian coordinates, forming a block of size $b = 8$. The other $12$ coordinates remain independent standard Gaussians. Two coupled coordinates are functions of the other two, with additive Gaussian output noise. The resulting noise density and copula score are smooth. For Examples~1 and~3, these functions are the real and imaginary parts of $(W_0+iW_1)^2$ and $(W_0+iW_1)^3$, respectively, saturated by $\tanh$. Example~2 uses an even function of each input. We take $r = 4$ to match the number of coupled coordinates. Reduction to those four latent coordinates is exact. The rank transform of the mixed coordinates breaks this, and the reference projection error at rank four is nonzero (Supplement~\ref{sm:disintegration-mc}). For all examples, Stage~1 uses $(K_1,q_1) = (4,2)$. The fraction of the copula score represented by this gradient space is reported in \Cref{sec:exp-aposteriori}.

\subsection{Reference values}
\label{sec:exp-reference}

Computing the projection and reduced-density terms in \eqref{eq:kl-three-way} requires $\pi_{\vU}$. The noise laws are known, but the integral over the $d-r$ discarded coordinates has no closed form in these examples. We first tabulate the marginal distribution functions and densities of the mixed coordinates. Gauss--Hermite quadrature then integrates over the complement, and Monte Carlo averaging over the retained coordinates gives the reference divergences and their standard errors (Supplement~\ref{sm:disintegration-mc}).

We compute the reference basis $\mV_r^{\mC}$ for \Cref{sec:tightness} from the top-$r$ eigenspace of $\mC$, estimated using the copula score of Supplement~\ref{sm:disintegration-mc} and $N_{\rm ref} = 10^6$ draws. We also fit Stage~1 to $N_{\rm ref} = 10^6$ residuals at $(K_1,q_1) = (4,2)$ with the regularization of Part~I. This fit gives numerical approximations to the penalized population second-moment matrix $\mC_{\Lambda,\tau}$ and its top-$r$ eigenbasis $\mV_{r,\Lambda,\tau}$. We use this reference basis for the reduced-density terms in \Cref{sec:exp-composite,sec:exp-endtoend}.

\subsection{The three terms of the decomposition}
\label{sec:exp-composite}

For Example~1, \Cref{fig:composite-main} gives reference values of the three terms in \eqref{eq:kl-three-way} for ten estimated models at each sample size. The corresponding results for Examples~2 and~3 are in Supplement~\ref{sm:composite-anatomy}.

\begin{figure*}[t]
\centering
\includegraphics[width=\textwidth]{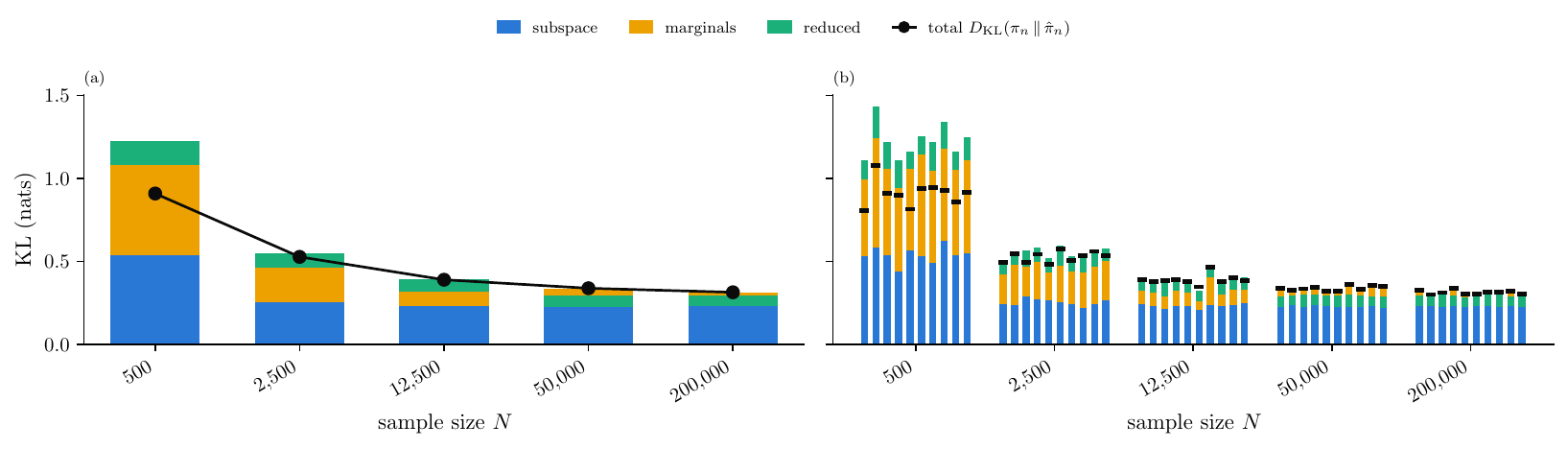}
\caption{Reference values of the three terms in \eqref{eq:kl-three-way} on Example~1 ($d=20$, $r=4$), for $10$ estimated models per $N$. See Supplement~\ref{sm:composite-anatomy} for Examples~2 and~3. \emph{(a)}~Mean over the $10$ models. \emph{(b)}~Individual models: each stacked bar has a tick at its reference total. We use the estimated basis of each model for the subspace term and $\mV_{r,\Lambda,\tau}$ for the reduced-density term. Neither $R_N$ nor $\log\widehat Z_{\mathrm{comp}}$ is included in the stacks (Supplement~\ref{sm:composite-anatomy}). For $N \ge 12{,}500$, the mean difference between stack and reference total is below the mean Monte Carlo standard error of an individual reference total.}
\label{fig:composite-main}
\end{figure*}

All three examples show a decreasing marginal term as $N$ grows. The subspace term is largest for $N \ge 12{,}500$, and its mean changes by less than $4\%$ from $N = 12{,}500$ to $N = 200{,}000$. Accuracy at these sample sizes is limited by the rank and multi-index set.

\subsection{A priori terms and the a posteriori bound}
\label{sec:exp-aposteriori}

In \Cref{fig:stage1-bounds}(a), we evaluate \eqref{eq:finite-sample-kl} for $N \in \{2500, 50000\}$ and $8$ seeds. We take $\scS'$ to be the Stage-1 estimate from $N_{\rm ref} = 10^6$ residuals at $(K_1,q_1) = (4,2)$. As in \eqref{eq:stage1-threeterm}, we split the trace into the rank-$r$ tail of $\mC'$ and the statistical term $\tr((\mP_r'-\hmP_r)\mC')$, with $\mP_r'$ the top-$r$ eigenprojector of $\mC'$. The Hermite term $\|\scS-\scS'\|_{L^2(\pi_{\vZ})}$, evaluated with the known copula score (Supplement~\ref{sm:experimental-protocol}), is largest on every example. The estimate $\scS'$ accounts for $24\%$, $26\%$ and $6\%$ of $\tr(\mC)$ on Examples~1, 2 and~3. These fractions differ by at most one percentage point for the unpenalized projection $\scS_\Lambda$ (Supplement~\ref{sm:experimental-protocol}). Only the statistical term depends on $N$, and it decreases as $N$ grows. After squaring the sum and dividing by two, the bound is between $16.9$ and $27.0$ times the reference $\mathcal E_{\rm proj}$ at $N = 5\times10^4$. The Hermite term contributes more than half of the bound. The contribution from the log-Sobolev inequality is assessed separately in \Cref{sec:tightness}.

In \Cref{fig:stage1-bounds}(b), the bound from \Cref{cor:aposteriori} is evaluated as $b_N$, using $\widehat m_4$ from the second half of each sample. An independent set of $2\times10^5$ residuals, mapped through the first-half rank transform, estimates the target trace $\tr((\mI-\hmP_r)\widetilde\mC(\whvtheta))$. The bound exceeds this trace for every seed and every $N$, a total of $96$ models. Its finite-sample term decreases with $N$.

For the same models, we evaluate \eqref{eq:aposteriori-corrected} with $b_N$ in place of $b_N^{\rm pop}$ and both corrections computed from the known score. Larger $N$ gives a smaller transform discrepancy $D_N$, and $\sqrt{d-r}\,D_N$ is at most $20\%$ of $b_N$. At $N \ge 12{,}500$, the score error $e_N$ lies no more than $10\%$ above the Hermite term of $\scS'$ and is the dominant term in \eqref{eq:aposteriori-corrected}. On all $96$ models, the corrected bound is greater than $C_1 = \tfrac12\tr((\mI-\hmP_r)\mC)$ in \eqref{eq:stage1-cert}, and therefore greater than $\mathcal E_{\rm proj}$. At $N = 5\times10^4$, its ratio to the reference $\mathcal E_{\rm proj}$ ranges from $21.8$ to $34.4$. The estimated-score trace in the complement directions is smaller than the true-score trace. At $N = 5\times10^4$, the corresponding half-trace $\tfrac12\tr((\mI-\hmP_r)\widetilde\mC(\whvtheta))$ falls below the reference $\mathcal E_{\rm proj}$ in Examples~1 and~3.

To estimate the Hermite part of the score error, \Cref{prop:hier-trunc} uses two nested multi-index sets under the saturation assumption \textup{(SAT)}. To compute $\beta$, we project the known score without penalization onto the two gradient spaces, enlarging $\Lambda$ first in degree and, separately, in interaction order. The degree increase gives $\beta$ above $0.98$ in every example, and the interaction increase gives $\beta$ between $0.62$ and $0.91$. Thus $1/(1-\beta^2)$, the ratio between the upper and lower bounds in \eqref{eq:hier-trunc}, exceeds $29$ for increased degree but is at most $6$ for increased interaction order.

\begin{figure}[t]
\centering
\includegraphics[width=\textwidth]{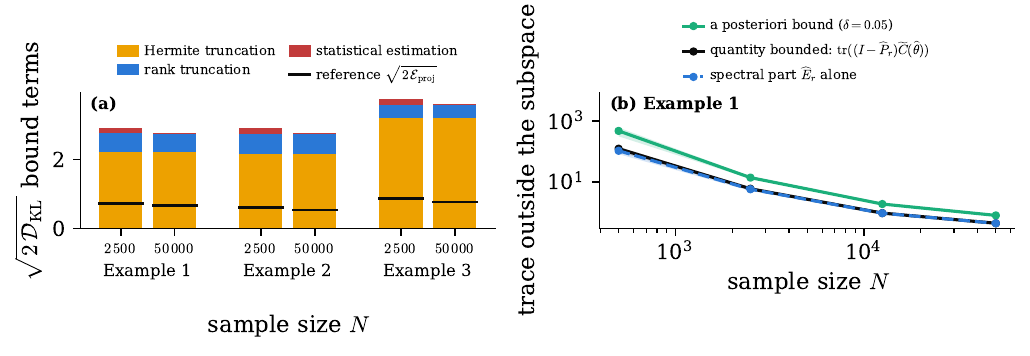}
\caption{Stage-1 error bounds for Examples~1--3 ($d=20$, $r=4$, $8$ seeds at each $N$). \emph{(a)}~The terms in \eqref{eq:finite-sample-kl} for $N \in \{2500, 50000\}$. The Stage-1 fit to $N_{\rm ref} = 10^6$ residuals supplies $\scS'$. The black line is the reference $\sqrt{2\,\mathcal E_{\rm proj}}$, and its distance below the top of the stack is the looseness of the bound in $\sqrt{2\,\DKL}$ units. \emph{(b)}~For Example~1 at $\delta = 0.05$, the bound $b_N$ of \Cref{cor:aposteriori} uses $\widehat m_4$ instead of $m_4$. Also shown are the target trace $\tr((\mI-\hmP_r)\widetilde\mC(\whvtheta))$ and the spectral term $\sum_{i>r}\widehat\lambda_i(\hmC)$. Shaded bands span the $8$ seeds. The difference between $b_N$ and the spectral term is the finite-sample term of \eqref{eq:aposteriori}, which decreases as $N$ grows. For $N \ge 12500$, the target traces on Examples~2 and~3 are within $34\%$ of that on Example~1.}
\label{fig:stage1-bounds}
\end{figure}

\subsection{The overall estimate against the reference values}
\label{sec:exp-endtoend}

We compare the two estimated terms of \eqref{eq:apost-assembled} with their reference values in \eqref{eq:kl-three-way}. Each example uses $N \in \{500, 2500, 12500, 50000, 200000\}$ and $10$ seeds, giving $150$ estimated models, all at the reference basis $\mV_{r,\Lambda,\tau}$ of \Cref{sec:exp-reference}. Each estimate uses $M = 2\times10^5$ held-out residuals mapped through that model's estimated marginal transform. We fix the number of neighbours at $k = 20$ for $r = 4$ and $k = 10$ for $r = 1$ \cite[\S2]{BerrettSamworthYuan2019} (\Cref{sec:limitations}).

For $\widehat{\mathcal E}_{\rm dens}$, the error bars use \Cref{cor:apost-den}, derived in exact coordinates with the exact normalizer. They omit the effects of the estimated transform, fixed $k$ and computed normalizer, all of which enter the comparison with reference values below. For $\widehat{\mathcal E}_{\rm marg}$, we sum the $d$ coordinate standard errors because the residual coordinates are dependent. Under the one-dimensional hypotheses of \Cref{cor:apost-den}, Minkowski's inequality makes this an estimate of an upper bound on the asymptotic standard error. These error bars also omit the variance contribution from fixing $k$.

\begin{figure*}[t]
\centering
\includegraphics[width=\textwidth]{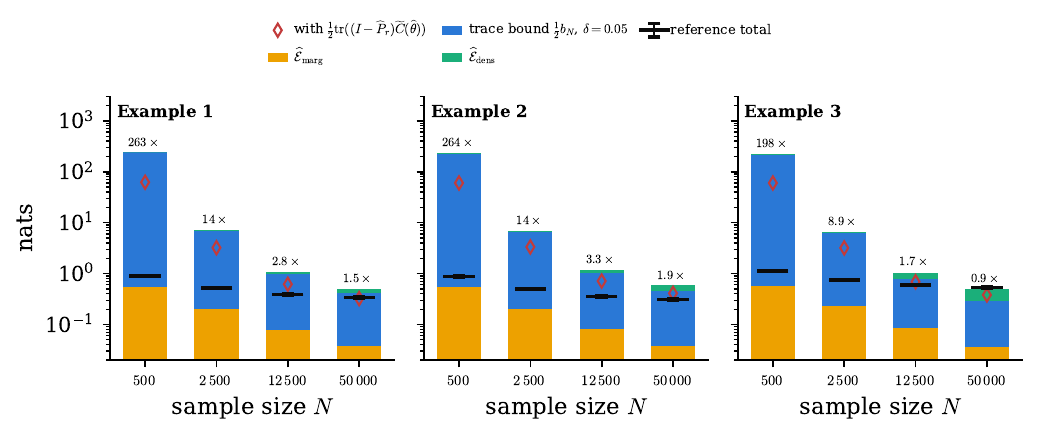}
\caption{The overall a posteriori estimate \eqref{eq:apost-assembled} against the reference total on Examples~1--3 ($d = 20$, $r = 4$, $\delta = 0.05$). Each stacked bar combines the means of the two estimated terms over $10$ models with the mean of $\tfrac12 b_N$ over the $8$ split-sample estimates in \Cref{sec:exp-aposteriori} at the same $N$. The black marker is the mean reference total $\DKL(\pi_{\vn}\,\|\,\widehat\pi_{\vn})$ for the $10$ models, with the Monte Carlo standard error of that mean. The red diamond replaces $\tfrac12 b_N$ by its target half-trace $\tfrac12\tr((\mI-\hmP_r)\widetilde\mC(\whvtheta))$. The gap between diamond and bar comes from the Chebyshev term in \eqref{eq:aposteriori}, with $\widehat m_4$ replacing $m_4$. The number above each bar is the effectivity index, the ratio of the estimate to the reference total. The vertical axis is logarithmic.}
\label{fig:apost-endtoend}
\end{figure*}

In all $150$ estimated models, $\widehat{\mathcal E}_{\rm marg}$ lies within twice the combined standard errors of the reference $\mathcal E_{\rm marg}$. The estimate $\widehat{\mathcal E}_{\rm dens}$ is biased upward. Its mean excess over the reference value is about one combined standard error for $N \ge 2{,}500$, and at most $2.4$ combined standard errors at $N = 500$, where the held-out residuals pass through a rank transform estimated from $500$ samples.

In \Cref{fig:apost-endtoend} the effectivity index decreases with $N$ on every example, from the hundreds at $N = 500$ to below $4$ for $N \ge 12{,}500$. At $N = 500$ the bounded term depends, through the fourth moment in \Cref{cor:aposteriori}, on a Stage-1 score estimated from $N_2 = 250$ samples. The mean estimate exceeds the mean reference total in $11$ of the $12$ combinations of example and sample size. The exception is Example~3 at the largest $N$ of the figure, where the index is $0.9$. There $\tfrac12 b_N$ is smaller than the reference $\mathcal E_{\rm proj}$, because the estimated score carries little of the score variance in the complement directions and \eqref{eq:apost-assembled} omits the score error $e_N$ (\Cref{sec:limitations}).

% =====================================================================
\section{Limitations}
\label{sec:limitations}

\emph{Unverifiable hypotheses.} Condition \textup{(D1)} of \Cref{cor:apost-den} requires the unknown reduced density to belong to the smoothness class of \cite{BerrettSamworthYuan2019}. That result concerns exact rank-Gaussianized coordinates. In \Cref{sec:exp-endtoend} the transform is estimated, and its image law has atoms. The corollary does not apply to it, and neither does the identity \eqref{eq:fitted-target}, which requires an image law with a density. A further difference is that we hold the number of neighbours $k$ fixed as $M$ increases. The reported standard error consequently omits part of the estimator's variance \cite[\S3.4]{BerrettSamworthYuan2019}. Residual samples do not suffice to check the smoothness assumptions or the transform condition \textup{(T)} required for the marginal rate. In \Cref{sec:exp-endtoend} we assess the practical estimates by comparison with known noise laws.

\emph{The score error.} The calculation of $b_N$ substitutes an estimated fourth moment for the population value. A bound on $\mathcal E_{\rm proj}$ still requires the score error $e_N$ in \eqref{eq:aposteriori-corrected}. On the examples, with $(K_1,q_1) = (4,2)$ and $10^6$ residuals, the trace of the fitted Stage-1 score's second-moment matrix is between $6\%$ and $26\%$ of $\tr(\mC)$. The score error $e_N$ is the dominant term in \eqref{eq:aposteriori-corrected} (\Cref{sec:exp-aposteriori}). The estimate of its Hermite component by \Cref{prop:hier-trunc} assumes \textup{(SAT)}, which two multi-index sets alone cannot verify. On Example~3, an increase of one in interaction order multiplies $\tr(\mC_\Lambda)$ by $3.7$, while an increase of one in degree raises it by only $0.5\%$. The hierarchical estimate therefore depends on the added indices as well as on the noise law.

\emph{Small samples.} At $N = 500$, only $250$ samples are available to estimate the Stage-1 score. The resulting bounded contribution exceeds the reference total by a factor greater than $100$ on every example (\Cref{sec:exp-endtoend}). At every sample size, the errors in the two estimated contributions are much smaller.

\emph{Scope.} The reduction represents dependence through a fixed linear subspace. When the directions of dependence vary along a curved set, the estimate represents dependence only through variation along the chosen subspace. Transport-based reductions \cite{MarzoukMoselhyParnoSpantini2016, ChenArnaudBaptistaZahm2024} offer alternatives for such laws. The number of multi-indices $|\Lambda_{K,q}|$ grows as $\binom{d}{q}$ with the interaction order $q$. Our analysis fixes the truncation and the regularizer in advance and does not cover choices that change with $N$. The reference values in \Cref{sec:experiments} all require a closed-form noise law. These experiments therefore do not assess the error estimate on residuals from an operational assimilation system.

% =====================================================================

\section{Conclusion}
\label{sec:conclusion}

When the noise law is unknown, we cannot compare the \textsc{Cas} density estimate with a reference. We instead estimate its error from residual samples. The exact decomposition in \Cref{prop:exact-decomp} separates the contributions of the marginals, subspace and reduced density. In exact rank coordinates, the marginal and reduced-density estimates have standard errors of order $M^{-1/2}$ in the number $M$ of held-out residuals under the assumptions of \Cref{cor:apost-den}. We test these estimates after the estimated transform in \Cref{sec:exp-endtoend}. The subspace bound combines a bound on the fitted-score trace outside the retained subspace with corrections for score estimation and the rank transform (\Cref{cor:aposteriori} and \eqref{eq:aposteriori-corrected}).

Across the $150$ estimated models, for which the noise laws are known, the marginal estimate agrees with its reference value, and the reduced-density estimate has a small upward bias. After averaging over the estimated models, \eqref{eq:apost-assembled} exceeds the reference divergence in $11$ of the $12$ combinations of example and sample size. The exception is discussed in \Cref{sec:exp-endtoend}.

The Gaussian log-Sobolev bound is between $1.64$ and $1.98$ times the divergence on the three examples. Numerical descent of the tighter bound of \cite{LiCuiLiMarzoukZahm2024sharp} ends at a basis within $\|\sin\Theta\|_F \le 0.0004$ of the top-$r$ eigenbasis of $\mC$ used by \textsc{Cas} in each example (\Cref{sec:tightness}). A larger discrepancy comes from truncating the score to the chosen multi-index set. The Hermite term contributes more than half of the a priori bound on every example. The same truncation also contributes to the score error $e_N$ in the a posteriori bound.

For $N \ge 12{,}500$, the subspace term is the largest error contribution on each example and changes little with further sampling. The rank and multi-index truncation limit accuracy in this regime. The Stage-1 spectrum gives the sum of trailing eigenvalues for every rank. Under the saturation assumption, the increment between two nested multi-index sets estimates the Hermite truncation term (\Cref{prop:hier-trunc}). In the examples, enlarging the interaction order gives a saturation constant well below one (\Cref{sec:exp-aposteriori}). Comparing these two quantities helps determine whether to increase the rank or enlarge the multi-index set.

\section*{Reproducibility}
\label{sec:reproducibility}

Code, saved numerical results, and reproduction instructions are available at \url{https://github.com/joshuawchen/copula_active_subspaces}. This article uses the release tagged \texttt{v1.0-part2}. The script \texttt{reproduce.py} generates the figures from saved numerical results and recomputes these results on request. The repository README lists the scripts used for each figure and table. Supplement~\ref{sm:experimental-protocol} records the settings and random states of the reported runs.

\appendix
\crefalias{section}{appendix}
\crefalias{subsection}{appendix}

% =====================================================================
\section{Proofs of the decomposition and the Stage-1 bounds}
\label{app:thm-trunc}

\begin{proof}[Proof of \Cref{prop:stage1-cert}, \Cref{thm:stage1-rate}, and \eqref{eq:aposteriori-kl}]
Apply the Gaussian-reference bound \textup{(2.5)} to the basis $\hmV_r^{\mC}$:
\[
  2\,\DKL(\pi_{\vn}\,\|\,\pi_{\vn}(\,\cdot\,;\hmV_r^{\mC}))
  \;\le\;
  \tr\!\bigl((\mI - \hmP_r)\mC\bigr)
  \;=\;
  \|(\mI - \hmP_r)\scS\|_{L^2(\pi_{\vZ})}^2,
\]
where the equality uses $\mC = \EE_{\pi_{\vZ}}[\scS\scS^\top]$ and the idempotence of $\mI-\hmP_r$. Division by two proves \eqref{eq:stage1-cert}. For any score field $\scS'$ with second-moment matrix $\mC' := \EE_{\pi_{\vZ}}[\scS'\scS'^\top]$, apply the triangle inequality in $L^2(\pi_{\vZ};\RR^d)$. The orthogonal projector $\mI-\hmP_r$ does not increase norms, and $\|(\mI-\hmP_r)\scS'\|_{L^2(\pi_{\vZ})}^2 = \tr((\mI-\hmP_r)\mC')$. Hence $\|\scS - \scS'\|_{L^2} + \sqrt{\tr((\mI-\hmP_r)\mC')}$ bounds $\|(\mI-\hmP_r)\scS\|_{L^2}$, proving \eqref{eq:finite-sample-kl} as in \cite[Thm.~3.3]{BaptistaBrennanMarzouk2025}. The choice $\scS' = \widehat\scS$ proves \eqref{eq:aposteriori-kl} of \Cref{cor:aposteriori}.

By Part~I's \PartIref{corollary}{2.3}, $\tr((\mI-\mP_r^{\mC})\mC) = \sum_{i>r}\lambda_i(\mC)$. Subtracting this minimum from the trace proves the split \eqref{eq:stage1-split} and the nonnegativity $\Delta_{\rm subopt} \ge 0$. Apply the eigengap inequality of Part~I (\PartIref{proposition}{4.1}, which is \cite[Lem.~4.2]{VuLei2013}) with $\mC$ in place of $\mC_\Lambda$. It gives uniqueness of the maximizer when $\lambda_r(\mC) > \lambda_{r+1}(\mC)$, as well as $\Delta_{\rm subopt} = O(\|\sin\Theta(\hmV_r^{\mC},\mV_r^{\mC})\|_F^2)$. For the truncated score, Part~I's \PartIref{proposition}{4.1} states that $\mP_r^{\Lambda\star}$ minimizes $\tr((\mI-\mP)\mC_\Lambda)$ over rank-$r$ orthogonal projectors. Its minimum is $\sum_{i>r}\lambda_i(\mC_\Lambda)$, so
\begin{equation}
  \tr\!\bigl((\mI-\hmP_r)\mC_\Lambda\bigr) \;=\; \textstyle\sum_{i>r}\lambda_i(\mC_\Lambda) \;+\; \tr\!\bigl((\mP_r^{\Lambda\star}-\hmP_r)\mC_\Lambda\bigr),
  \label{eq:stage1-decomp}
\end{equation}
where the second summand is nonnegative. Apply $\sqrt{a+b}\le\sqrt a+\sqrt b$ to \eqref{eq:stage1-decomp} and use \eqref{eq:finite-sample-kl} with $\scS' = \scS_\Lambda$ to obtain \eqref{eq:stage1-threeterm}. For \eqref{eq:stage1-estrate}, Weyl's inequality \cite[Cor.~III.2.6]{Bhatia1997}, the Ky Fan trace extremum \cite[Prob.~III.6.11]{Bhatia1997}, and a rank-$r$ trace bound give $\tr((\mP_r^{\Lambda\star}-\hmP_r)\mC_\Lambda) \le 2r\|\hmC-\mC_\Lambda\|_{\rm op}$. Equation \eqref{eq:opnorm-rate} of \Cref{lem:score-cov-rate} bounds this quantity by $O_p(N^{-1/2}) + O(\tau_N)$ without an eigengap assumption. Assumption \textup{(A4)} enters only the $\sin\Theta$ rate \eqref{eq:sintheta-rate}, which is tested in Supplement~\ref{sm:rate-real}.
\end{proof}

\begin{proof}[Proof of \eqref{eq:aposteriori} in \Cref{cor:aposteriori}]
Conditional on the first half, the rank transform and $\whvtheta$ are fixed. The scores $\widehat\scS(\vZ^{(n)})$, $n = 1, \dots, N_2$, from the second half are i.i.d.\ and satisfy $\EE[\widehat\scS\widehat\scS^\top] = \widetilde\mC(\whvtheta)$. Since $\sum_{i>r}\widehat\lambda_i(\hmC) = \tr((\mI-\hmP_r)\hmC)$,
\[
  \tr\!\bigl((\mI-\hmP_r)\widetilde\mC(\whvtheta)\bigr) \;=\; \textstyle\sum_{i>r}\widehat\lambda_i(\hmC) + \tr\!\bigl((\mI-\hmP_r)(\widetilde\mC(\whvtheta) - \hmC)\bigr).
\]
Because the orthogonal projector $\mI-\hmP_r$ has rank $d-r$, von Neumann's trace inequality \cite{Mirsky1975} bounds the second term by the sum of the $d-r$ largest singular values of $\widetilde\mC(\whvtheta)-\hmC$. Cauchy--Schwarz bounds that sum by $\sqrt{d-r}\,\|\widetilde\mC(\whvtheta)-\hmC\|_F$. These inequalities apply to each realized $\hmP_r$. The centered summands $\widehat\scS(\vZ^{(n)})\widehat\scS(\vZ^{(n)})^\top - \widetilde\mC(\whvtheta)$ are i.i.d.\ and mean-zero, so
\[
  \EE[\|\hmC - \widetilde\mC(\whvtheta)\|_F^2] \;=\; N_2^{-1}\bigl(\EE[\|\widehat\scS(\vZ)\|_2^4] - \|\widetilde\mC(\whvtheta)\|_F^2\bigr) \;\le\; m_4/N_2,
\]
and Chebyshev's inequality applied to $\|\hmC-\widetilde\mC(\whvtheta)\|_F^2$ gives $\|\hmC-\widetilde\mC(\whvtheta)\|_F \le \sqrt{m_4/(N_2\delta)}$ with probability at least $1-\delta$. This proves \eqref{eq:aposteriori}. The moment $m_4$ is finite for every noise law, because the first-half rank transform has bounded range and the fixed polynomial $\widehat\scS$ is bounded on it.

To replace the moment by $\widehat m_4$, condition again on the first half and first suppose $m_4>0$. Since $\Var(\widehat m_4/m_4)=(m_8/m_4^2-1)/N_2$, the moment-ratio condition implies $\widehat m_4/m_4=1+O_p(N_2^{-1/2})$ over both halves. Let $\epsilon\in(0,1)$. On the event $\widehat m_4\ge(1-\epsilon)m_4$, failure of \eqref{eq:aposteriori} with $\widehat m_4$ implies $\|\hmC-\widetilde\mC(\whvtheta)\|_F>\sqrt{(1-\epsilon)m_4/(N_2\delta)}$. The preceding bound puts this conditional probability at most $\delta/(1-\epsilon)$. With $\epsilon=N_2^{-1/4}$, Chebyshev's inequality bounds the conditional probability of the complementary event by $(m_8/m_4^2-1)N_2^{-1/2}$. Adding the two probabilities and averaging over the first half bounds the failure probability by $\delta+o(1)$. If $m_4=0$, the second-half scores vanish almost surely. Then $\hmC$, $\widetilde\mC(\whvtheta)$ and $\widehat m_4$ vanish, and both sides of \eqref{eq:aposteriori} are zero.
\end{proof}

\begin{proof}[Proof of \Cref{prop:hier-trunc}]
The Pythagorean identity follows because $\scS-\scS_{\Lambda'}$ is orthogonal to $\scF_{\Lambda'}$, and \textup{(SAT)} then proves \eqref{eq:hier-trunc}. For the sample estimate, embed both coefficient vectors in the larger dictionary and write $\Delta\widehat\theta_j=\Delta\theta+e_j$, with $e_j=O_p(N^{-1/2})+O(\tau_N)$, and $\widehat\mA_{\Lambda'}=\mA_{\Lambda'}+E_A$, with $E_A=O_p(N^{-1/2})$. Expansion of the bilinear form around $\Delta\theta^\top\mA_{\Lambda'}\Delta\theta=\eta^2$ yields the stated order. The estimated-score bound follows by the triangle inequality.
\end{proof}

\begin{proof}[Proof of \Cref{prop:exact-decomp} and of the $R_N$ rate of \Cref{prop:apriori-rates}]
The exact rank-transform change of variables and the estimator definition in \Cref{sec:estimator} give $\pi_{\vn}(\vx) = c^{Z}(\scZ(\vx))\prod_i\pi_i(x_i)$ and $\widehat Z_{\mathrm{comp}}\,\widehat\pi_{\vn}(\vx) = \widehat c^{Z}(\widehat\scZ(\vx))\prod_i\widehat\pi_i(x_i)$, with $0 < \widehat Z_{\mathrm{comp}} < \infty$ (Part~I \cite{CAS_partI}). To justify integration, check each logarithm under $\pi_{\vn}$. The function $|\log c^{Z}| = |\scL|$ grows polynomially by \textup{(T2)} (Part~I \cite{CAS_partI}), while the Gaussian marginals of $\pi_{\vZ}$ supply all polynomial moments. The fitted $\log\widehat c^{Z}$ is a degree-$K_2$ Stage-2 polynomial in $\hmV_r^{\mC\top}\vz$, minus the log-normalizer. It is bounded above by the Stage-2 constraint (Part~I \cite{CAS_partI}) and grows at most polynomially below. For the marginal logarithms, $\EE_{\pi_{\vn}}[|\log(\pi_i/\widehat\pi_i)|] < \infty$ follows from \textup{(M)} (\Cref{sec:estimator}).

\emph{The exact identity.} Pointwise, $\log(\pi_{\vn}/\widehat\pi_{\vn}) = [\log c^{Z}(\scZ) - \log\widehat c^{Z}(\widehat\scZ)] + \sum_i\log(\pi_i/\widehat\pi_i) + \log\widehat Z_{\mathrm{comp}}$. Since $x_i \sim \pi_i$ under $\pi_{\vn}$, taking expectations gives the marginal contribution $\sum_i\DKL(\pi_i\,\|\,\widehat\pi_i)$, and the constant term remains $\log\widehat Z_{\mathrm{comp}}$. Split the copula contribution by adding and subtracting $\EE_{\pi_{\vn}}[\log\widehat c^{Z}(\scZ)]$. The transform $\scZ$ pushes $\pi_{\vn}$ forward to $\pi_{\vZ}$, so the first difference equals $\EE_{\pi_{\vZ}}[\log(\pi_{\vZ}/\gauss_d) - \log(\widehat\pi_{\vZ}/\gauss_d)] = \DKL(\pi_{\vZ}\,\|\,\widehat\pi_{\vZ})$. Here $\widehat\pi_{\vZ} := \widehat c^{Z}\gauss_d$ is normalized by $\widehat Z_r$ (\Cref{sec:apost-density}). The second difference is $R_N$. Conditional on the estimation sample, the factorization $\widehat\pi_{\vZ}(\vz) = \widehat c_r^{U}(\vu)\gauss_r(\vu)\,\gauss_{d-r}(\vw)$ holds in the coordinates $(\vu,\vw) = (\hmV_r^{\mC\top}\vz,\ \hmV_\perp^\top\vz)$. The KL chain rule in these coordinates gives
\[
  \DKL(\pi_{\vZ}\,\|\,\widehat\pi_{\vZ}) \;=\; \DKL(\pi_{\vU}\,\|\,\widehat\pi_{\vU}) + \EE_{\pi_{\vU}}[\DKL(\pi_{\vW\mid\vU}\,\|\,\gauss_{d-r})],
\]
where the second term is $\DKL(\pi_{\vn}\,\|\,\pi_{\vn}(\,\cdot\,;\hmV_r^{\mC}))$ by Part~I's reduced-form identity \cite{CAS_partI}. Because the identity holds for any fixed subspace, it holds at each realized $\hmV_r^{\mC}$. Every term is finite by the integrability checks above.

\emph{The conditional rate of $R_N$.} For the fixed finite dictionary, the polynomial mean-value formula and $\|\hmV_r^{\mC\top}\|_{\rm op}=1$ give a deterministic constant $C_\Lambda$ such that
\[
|\widehat\scL_r(\hmV_r^{\mC\top}\vz)-\widehat\scL_r(\hmV_r^{\mC\top}\vz')|
\le C_\Lambda\|\hvbeta\|\,(1+\|\vz\|+\|\vz'\|)^{K_2-1}\|\vz-\vz'\|.
\]
With $W_N$ denoting the inner expectation in \eqref{eq:transform-moment}, Cauchy--Schwarz over the estimation sample and \textup{(T)} yield the following bound. No independence between $\hvbeta$ and $\widehat\scZ$ is needed:
\[
\begin{aligned}
\EE|R_N|&\le C_\Lambda\EE[\|\hvbeta\|W_N^{1/2}]
\le C_\Lambda(\EE\|\hvbeta\|^2)^{1/2}(\EE W_N)^{1/2}\\
&=O(\sqrt d\,(\log N)^{K_2/2}N^{-1/2}).
\end{aligned}
\]
Combining this bound with \textup{(M)} proves \eqref{eq:Delta-marg}, and Markov's inequality gives the $O_p$ statement. Beyond the sample range, the estimated distribution functions of Part~I take the values $1/[2(N+1)]$ and $1-1/[2(N+1)]$. Thus $\|\widehat\scZ\|_\infty\le z_N:=\Phi^{-1}(1-1/[2(N+1)])$ and $\|\hmV_r^{\mC\top}\widehat\scZ\|\le\sqrt d\,z_N$, with $z_N^2\le2\log(N+1)$.

\emph{Expectation and assembly.} Take expectations over the estimation sample in the exact identity to obtain $\EE[\DKL(\pi_{\vn}\,\|\,\widehat\pi_{\vn})] = \EE[\mathcal E_{\rm marg}] + \EE[\mathcal E_{\rm proj}] + \EE[\mathcal E_{\rm dens}]$. Since \Cref{prop:stage1-cert} holds for every realization, $\EE[\mathcal E_{\rm proj}] \le \EE[C_1]$. Now condition on the sample used to compute $\hmV_r^{\mC}$. The Stage-2 sample is independent of this sample and is projected onto a fixed basis. Applying \eqref{eq:stage2-kl-mean} at that basis gives $\EE[\mathcal E_{\rm dens} \mid \hmV_r^{\mC}] \le \mathcal B_{\rm dens}(\hmV_r^{\mC}) + O(N^{-1})$. Uniformity of the assumed constants makes the remainder uniform over the basis. The expectation of this conditional bound, together with the marginal and projection bounds, proves \eqref{eq:composite-kl}.
\end{proof}

\section{The Stage-1 score-covariance rate}
\label{app:sample-proof}

The conditions of \Cref{lem:score-cov-rate} are as follows. \textup{(A1)} $\Lambda$ is finite, excludes the constant mode, and $\pi_{\vZ}>0$ a.e.\ \textup{(A2)} The empirical Gram matrix, right-hand side and every gradient cross-moment $N^{-1}\sum_n\nabla H_\alpha(\vZ^{(n)})\nabla H_\beta(\vZ^{(n)})^\top$ differ from their exact-coordinate expectations by $O_p(N^{-1/2})$. \textup{(A3)} Stage~1 uses $\tau_N\mD$ with bounded deterministic $\tau_N\ge0$ and fixed positive diagonal $\mD$. \textup{(A4)} The gap $\delta_\Lambda:=\lambda_r(\mC_\Lambda)-\lambda_{r+1}(\mC_\Lambda)$ is positive.

For i.i.d.\ exact rank coordinates, \textup{(A2)} follows from the multivariate central limit theorem. For empirical ranks, \textup{(A2)} is an assumption, since central limit theorems for rank statistics require growth conditions on the score functions \cite[Appendix~1]{GenestGhoudiRivest1995}. The implementation uses $\widehat\tau=\kappa\|\widehat\mA\|_{\rm op}$ with $\kappa=10^{-4}$. By \textup{(A2)}, its difference from $\tau_0=\kappa\|\mA^{d,\star}\|_{\rm op}$ is $O_p(N^{-1/2})$, so this choice contributes an additional $O_p(N^{-1/2})$ term in the proof below.

\begin{lemma}[Stage-1 score-covariance rate]
\label{lem:score-cov-rate}
Under Assumption~2.1 of Part~I and the conditions \textup{(A1)}--\textup{(A3)} above, the empirical Stage-1 score second-moment matrix $\hmC$ of \textup{(3.5)} satisfies
\begin{equation}
  \|\hmC - \mC_\Lambda\|_{\rm op} \;=\; O_p(N^{-1/2}) \;+\; O(\tau_N)
  \label{eq:opnorm-rate}
\end{equation}
at the Tikhonov regularization strength $\tau_N$ of \textup{(A3)}. If the eigengap condition \textup{(A4)} holds as well, then the estimated Stage-1 basis satisfies
\begin{equation}
  \|\sin\Theta(\hmV_r^{\mC}, \mV_{r,\Lambda})\|_F \;=\; O_p(N^{-1/2}) \;+\; O(\tau_N/\delta_\Lambda).
  \label{eq:sintheta-rate}
\end{equation}
\end{lemma}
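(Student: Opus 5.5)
The Stage-1 estimate is a linear functional of the ridge-regularized score-matching coefficients, so I would propagate the $O_p(N^{-1/2})$ perturbations of \textup{(A2)} through the coefficient solve and then through the quadratic map from coefficients to second-moment matrices. Write the population problem in exact coordinates as $\mA^{d,\star}\bm\theta = \mb^{d,\star}$. Its minimizer $\bm\theta^\star_\Lambda$ defines the $L^2(\pi_{\vZ})$ projection $\scS_\Lambda = \sum_\alpha \theta^\star_\alpha\nabla H_\alpha$; this uses integration by parts under Assumption~2.1 of Part~I, and (A1) makes the Gram matrix positive definite. The estimate is $\whvtheta = (\widehat\mA+\tau_N\mD)^{-1}\widehat\mb$. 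For the second-moment matrices, introduce the $d\times d$ cross-moment blocks $\mM_{\alpha\beta} := \EE_{\pi_{\vZ}}[\nabla H_\alpha\nabla H_\beta^\top]$ and their empirical versions $\widehat\mM_{\alpha\beta}$. Then $\mC_\Lambda = \sum_{\alpha,\beta}\theta^\star_\alpha\theta^\star_\beta\mM_{\alpha\beta}$ and $\hmC = \sum_{\alpha,\beta}\widehat\theta^d_\alpha\widehat\theta^d_\beta\widehat\mM_{\alpha\beta}$.

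\emph{Step 1: coefficient error.} I would write $\whvtheta-\bm\theta^\star_\Lambda = (\widehat\mA+\tau_N\mD)^{-1}\bigl[(\widehat\mb-\mb^{d,\star}) - (\widehat\mA-\mA^{d,\star})\bm\theta^\star_\Lambda - \tau_N\mD\bm\theta^\star_\Lambda\bigr]$. By (A2) the first two bracketed terms are $O_p(N^{-1/2})$, and by (A3) the third is $O(\tau_N)$ because $\tau_N$ is bounded and $\mD$ is fixed. The inverse is bounded in probability. Indeed, $\lambda_{\min}(\widehat\mA+\tau_N\mD)\ge\lambda_{\min}(\widehat\mA)$ since $\tau_N\mD\succeq 0$, and Weyl's inequality with (A2) gives $\lambda_{\min}(\widehat\mA)\ge\lambda_{\min}(\mA^{d,\star})-O_p(N^{-1/2})$, which stays away from zero with probability tending to one. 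Hence $\|\whvtheta-\bm\theta^\star_\Lambda\| = O_p(N^{-1/2})+O(\tau_N)$. The data-dependent choice $\widehat\tau$ adds a further $O_p(N^{-1/2})$, as noted before the lemma.

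\emph{Step 2: second-moment error.} I would expand $\hmC-\mC_\Lambda$ as the sum of two contributions. The first is $\sum\widehat\theta^d_\alpha\widehat\theta^d_\beta(\widehat\mM_{\alpha\beta}-\mM_{\alpha\beta})$, which is $O_p(N^{-1/2})$ by (A2) once $\whvtheta$ is shown to be $O_p(1)$. The second is $\sum(\widehat\theta^d_\alpha\widehat\theta^d_\beta-\theta^\star_\alpha\theta^\star_\beta)\mM_{\alpha\beta}$, which is bilinear in the coefficient error and the bounded coefficients. The dictionary is finite and all the $\mM_{\alpha\beta}$ are finite by Assumption~2.1, so the operator norm of the second term is controlled by Step~1. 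Together these give \eqref{eq:opnorm-rate}. Note that (A4) is not used here.

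\emph{Step 3: subspace rate.} Under (A4) I would apply the Davis--Kahan $\sin\Theta$ theorem in the form of Part~I's \PartIref{proposition}{4.1}, namely the Vu--Lei lemma. It gives $\|\sin\Theta(\hmV_r^{\mC},\mV_{r,\Lambda})\|_F \le C\sqrt r\,\|\hmC-\mC_\Lambda\|_{\rm op}/\delta_\Lambda$, and substituting \eqref{eq:opnorm-rate} yields \eqref{eq:sintheta-rate}. Since $\delta_\Lambda$ is fixed, it is absorbed into the $O_p(N^{-1/2})$ term. The main obstacle is Step~1's uniform invertibility combined with the fact that, for empirical ranks, all randomness enters only through (A2). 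I would therefore take care to use (A2) as a black box, never appealing to independence, and to take the bias term with the $\tau_N$ explicitly as deterministic.
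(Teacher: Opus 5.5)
Your proposal is correct and follows the paper's proof: propagate the \textup{(A2)} perturbations through the ridge solve, then through the quadratic map from coefficients to $\hmC$ with the gradient cross-moments as weights, and finish under \textup{(A4)} with a population-gap $\sin\Theta$ bound. The differences are minor. First, the paper routes the coefficient error through the deterministic ridge solution $(\mA^{d,\star}+\tau_N\mD)^{-1}\mb^{d,\star}$, which makes the $O(\tau_N)$ bias visibly deterministic, whereas your one-step expansion needs $\|(\widehat\mA+\tau_N\mD)^{-1}\|_{\rm op}\le\lambda_{\min}(\mA^{d,\star})^{-1}+O_p(N^{-1/2})$ to reach the same conclusion. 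Second, the paper cites the Davis--Kahan variant of Yu, Wang and Samworth, while the Vu--Lei curvature inequality gives the same bound only after you add $\tr((\mP_r^{\Lambda\star}-\hmP_r)\hmC)\le0$ and operator/nuclear-norm duality.
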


\begin{proof}[Proof of \Cref{lem:score-cov-rate}]
To see that $\mA^{d,\star}$ is positive definite under \textup{(A1)}, consider a linear combination of the included gradients that vanishes a.e.\ It is polynomial, hence identically zero. The corresponding polynomial is therefore constant, and since the constant mode is excluded, all its coefficients vanish. Define
\[
\theta_{\tau_N}=(\mA^{d,\star}+\tau_N\mD)^{-1}\mb^{d,\star},\qquad
\theta_\Lambda=(\mA^{d,\star})^{-1}\mb^{d,\star}.
\]
Near $\mA^{d,\star}$, the inverse is uniformly bounded for bounded nonnegative $\tau_N$. The resolvent identity and \textup{(A2)} give $\widehat\theta-\theta_{\tau_N}=O_p(N^{-1/2})$. The regularization error satisfies $\theta_{\tau_N}-\theta_\Lambda=-\tau_N(\mA^{d,\star}+\tau_N\mD)^{-1}\mD\theta_\Lambda=O(\tau_N)$. The empirical second-moment matrix is quadratic in the coefficients, with the finitely many gradient cross-moments in \textup{(A2)} as weights. Substituting the coefficient and cross-moment errors proves \eqref{eq:opnorm-rate}. Using the population-gap version of Davis--Kahan \cite[Thm.~2]{YuWangSamworth2015}, the projector error is bounded by a constant times $\sqrt r\,\|\hmC-\mC_\Lambda\|_{\rm op}/\delta_\Lambda$. Equation \eqref{eq:sintheta-rate} follows, with constants depending on the fixed population gap. No bound on the empirical eigengap is required. For $\widehat\tau=\tau_0+O_p(N^{-1/2})$, the resolvent identity gives the same coefficient rate around $\theta_{\tau_0}$. Its second-moment matrix is the regularized limit $\mC_{\Lambda,\tau_0}$.
\end{proof}

% =====================================================================
\section{Proof of the a posteriori divergence estimate}
\label{app:apost-den}

\begin{proof}[Proof of \Cref{cor:apost-den}]
Under the density, weight and neighbour conditions in the corollary, the mean-square equivalence of \cite[Thm.~1]{BerrettSamworthYuan2019} gives
$\widehat H^w_M-H(\pi_{\vU})=M^{-1}\sum_m[-\log\pi_{\vU}(\vu_m)-H(\pi_{\vU})]+o_p(M^{-1/2})$.
Assumption \textup{(D3)} makes the cross-entropy term in \eqref{eq:apost-est} an i.i.d.\ sample mean. Subtracting the entropy expansion from that mean gives
\[
  \widehat{\mathcal E}_{\rm dens} - \mathcal E_{\rm dens}
  \;=\;
  \frac1M\sum_{m=1}^M\Bigl[\log\frac{\pi_{\vU}(\vu_m)}{\widehat\pi_{\vU}(\vu_m)} - \mathcal E_{\rm dens}\Bigr] \;+\; o_p(M^{-1/2}),
\]
where the i.i.d.\ summands have mean zero and variance $\sigma_{\rm den}^2 < \infty$ by \textup{(D2)}, so the central limit theorem and Slutsky's lemma prove \eqref{eq:apost-clt}. For consistency, combine \cite[Thm.~1]{BerrettSamworthYuan2019} with the law of large numbers for the cross-entropy. Apply Minkowski's inequality to the centered log densities in $L^2(\pi_{\vU})$ to bound the variance of $\log\pi_{\vU} - \log\widehat\pi_{\vU}$. The first variance term in that bound is estimated as in \cite[\S2]{BerrettSamworthYuan2019}. For $r = 1$, the weight class has no $\Gamma$-constraints, and $w = e_k$ gives the unweighted estimator.
\end{proof}

\subsection{Quadrature of the reduced normalizer}
\label{app:qmc-normalizer}

Fix the fitted polynomial $p:=\widehat\scL_r$ and the dimension $r$.
Suppose its degree $K\ge2$ is even and its degree-$K$ homogeneous part
satisfies $F(\vv)\le-\varepsilon$ for all $\|\vv\|=1$, with $\varepsilon>0$.
The leading term dominates the lower-degree terms, yielding
$p(\vu)\le a-b\|\vu\|^K$ for some $a\in\RR$ and $b>0$.
Define $g(\bm t):=\exp(p(\Phi^{-1}(\bm t)))$ on $(0,1)^r$, with the inverse
normal CDF applied componentwise, and extend $g$ by zero to the boundary.
The extension is continuous by the tail bound. Writing $\phi$ for the standard
normal density and differentiating once in each coordinate gives
\[
 \frac{\partial^r g}{\partial t_1\cdots\partial t_r}(\bm t)
 =\frac{e^{p(\vu)}P(\vu)}{\prod_{j=1}^r\phi(u_j)},
 \qquad \vu=\Phi^{-1}(\bm t),
\]
where $P$ is a polynomial. Changing variables therefore gives
\[
 \int_{(0,1)^r}\left|\frac{\partial^r g}
 {\partial t_1\cdots\partial t_r}\right|\,d\bm t
 =\int_{\RR^r}e^{p(\vu)}|P(\vu)|\,d\vu<\infty.
\]
On any interior grid partition, the fundamental theorem of calculus bounds
the sum of absolute rectangular increments by this integral. The same bound
holds for partitions meeting the boundary, by continuity. Every proper face
has zero variation because $g$ vanishes there. Thus the Hardy--Krause variation
$V_{\rm HK}(g)$ is finite.

Let $m=M_{\rm norm}$, and let
$\widehat Z_{r,m}^{\rm QMC}:=m^{-1}\sum_{j=1}^m g(\bm t_j)$ for a point set
with star discrepancy $D_m^*$.
The Koksma--Hlawka inequality \cite[Thm.~2.11]{Niederreiter1992} gives
\[
 |\widehat Z_{r,m}^{\rm QMC}-\widehat Z_r|
 \le V_{\rm HK}(g)D_m^*.
\]
The first $m=2^\ell$ points of a Sobol' sequence \cite{Sobol1967} form a
$(t,\ell,r)$-net in base $2$. The random linear scrambling with a digital shift
of the implementation \cite{Matousek1998} maps such a net to a $(t,\ell,r)$-net,
so for fixed $r$ the star discrepancy is $O(m^{-1}(\log m)^{r-1})$
\cite[Thm.~4.10]{Niederreiter1992}.
Once $m$ is large enough, the quadrature error is at most $\widehat Z_r/2$.
The mean-value theorem for the logarithm then gives
\[
 |\log\widehat Z_{r,m}^{\rm QMC}-\log\widehat Z_r|
 \le \frac{2V_{\rm HK}(g)}{\widehat Z_r}D_m^*
 =O(m^{-1+\epsilon})\qquad(\epsilon>0).
\]
The bound holds for a fixed fitted polynomial evaluated without modification.
Clipping the quadrature coordinates or capping polynomial values introduces
numerical errors that this argument does not bound.

% =====================================================================
\section{Stage-2 regularizer: proofs and constant selection}
\label{app:bisc-ridge}

The Stage-2 objective is a finite-dimensional score-matching quadratic with Tikhonov regularization \cite{Sriperumbudur2017,EnglHankeNeubauer1996}. We use the eigenbasis calculation for ridge regression \cite{HoerlKennard1970} to choose the form of the penalty, then prove an error bound for the resulting data-dependent regularizer. Write $\Lambda^{(k)}:=\{\alpha\in\Lambda_r:|\alpha|_1=k\}$ and $p_k:=|\Lambda^{(k)}|$. Let $\widehat\mM:=\widehat\mA_r+\widehat\mR$, where $\widehat\mR=\mR(\widehat\mA_r)$ is the regularizer of Part~I. Under the integration-by-parts conditions of Part~I, $\EE[b_\alpha(\vU)]=\EE[\nabla H_\alpha(\vU)\cdot\nabla\scL_r(\vU)]$. The population normal equations are therefore those of the Fisher projection.

\paragraph{Standing hypotheses} The proofs in this appendix use the following structural conditions on the Stage-2 problem.
\begin{itemize}
\item[\textbf{(H1)}] \emph{Bounded reduced factor.} Assume $C_\star:=\mathrm{ess\,sup}_{\RR^r}(\pi_{\vU}/\gauss_r)<\infty$. Part~I's regularity assumptions do not imply this condition, and it must be checked when the rate theorem is applied. Under this condition, the score-matching integrands for a fixed dictionary have moments of every order.
\item[\textbf{(H2)}] \emph{Sobolev regularity of the exact reduced log-copula.} The function $\scL_r = \log c_r^{U}$ lies in $H^s_{\gauss}(\RR^r)$ for some $s > 1/2$. Write $\vartheta_\alpha := \langle\scL_r, H_\alpha\rangle_{L^2(\gauss_r)}$ for its Hermite coefficients, indexed by $\alpha \in \NN_0^r$. The assumption is $\|\scL_r\|_{H^s_{\gauss}}^2 = \sum_\alpha |\alpha|_1^{2s}\vartheta_\alpha^2 =: M^2 < \infty$. It implies the per-mode coefficient bound \eqref{eq:Hs-norm}.
\end{itemize}

\smallskip
\noindent Throughout the analysis, the basis $\mV_r$ and the dictionary are fixed, and the observations from $\pi_{\vU}$ are independent and use the exact rank transform. In \Cref{prop:apriori-rates}, the same analysis is conditional on a basis computed from an independent sample.

\subsection{The Stage-2 regularizer}
\label{sec:ridge}

Part~I constructs the Stage-2 regularizer $\mR$ and selects its constants by validation score-matching risk (\PartIref{subsection}{3.3}). The penalty is diagonal. It has a sampling term, whose scale decreases with sample size within each degree class, and a curvature term weighted by degree. These terms stabilize the empirical normal equations and add a regularization bias. \Cref{thm:ccsm-rate} bounds this bias together with the sampling error.

\paragraph{Form}
Part~I defines the regularizer $\mR(\widehat\mA_r) = \mR_{\rm samp}(\widehat\mA_r) + \mR_{\rm curv}(\widehat\mA_r)$ in \textup{(3.7)}. It is constant on each degree class $\Lambda^{(k)}$, with entry $\nu^{(k)} := C_{\rm samp}\widehat\sigma_k^2 + C_{\rm curv}\|\widehat\mA_r\|_{\rm op}k^2$. Here $\widehat\sigma_k^2 := \tfrac{1}{N}\tr(\widehat\mA_r^{(k,k)})$ is the degree-class variance scale. As $N$ increases, $\widehat\sigma_k^2 \asymp k\,p_k/N$ decreases to $0$, whereas $\|\widehat\mA_r\|_{\rm op}$ has an $N$-independent limit. The two regimes in \Cref{thm:ccsm-rate} correspond to these different scales. The exponent $2$ in the $|\alpha|_1^2$ weighting is fixed, and neither the theory nor the choice of constants adapts it to the unknown $s$ in \textup{(H2)}.

For the per-mode optimal regularizer $\nu_i^\star$ in the deterministic-Gram calculation of \Cref{app:bias-var}, we substitute the per-class noise bound of \Cref{cor:per-class-cov-bound} for the variance and the Sobolev decay \eqref{eq:Hs-norm} for the signal. The decay is known in the canonical Hermite basis, so applying it in the eigenbasis of $\mA^\star$ is an approximation used to design the penalty. The bound in \Cref{thm:ccsm-rate} is proved directly in the canonical basis.

Part~I also imposes the centering constraint $\widehat\mB\hvbeta = 0$. The empirical centering matrix $\widehat\mB$ has entries $\widehat B_{j\alpha} := \tfrac{1}{N}\sum_n \partial_j H_\alpha(\vu^{(n)})$. \PartIref{proposition}{3.1} gives its finite-sample effect.

\paragraph{Stage-2 estimator}
The Stage-2 quadratic is minimized over the closed convex constraint set of Part~I, which includes empirical centering and is assumed nonempty. Its distance from $\bm\theta^{r,\star}$ appears explicitly in the error bound of \Cref{thm:ccsm-rate}. The solver approximates this set by finitely many cuts, as described in Part~I.

\emph{The constants.} Part~I selects $(C_{\rm samp},C_{\rm curv})$ on a $5\times5$ grid $\mathcal P$ by validation score-matching risk. For the estimate $\hvbeta_{\mathbf c}$ at a pair $\mathbf c$, we have $\mathcal J_{\rm est}(\hvbeta_{\widehat{\mathbf c}})\le\sum_{\mathbf c\in\mathcal P}\mathcal J_{\rm est}(\hvbeta_{\mathbf c})$ for every realization, regardless of the selection rule. Taking expectations bounds the error at the selected pair by the sum of the bounds in \Cref{thm:ccsm-rate}, as long as its hypotheses hold at every pair in $\mathcal P$.

\subsection{Truncation and estimation error}
\label{app:bias-var}

Throughout this appendix the Fisher divergence $\mathcal J(\widehat\scL_r) := \|\nabla(\widehat\scL_r - \scL_r)\|^2_{L^2(\pi_{\vU})}$ satisfies $\mathcal J(\widehat\scL_r) = 2J(\hvbeta)$, where $J$ is the halved Fisher divergence of Part~I, eq.~\textup{(3.6)}. The factor of $2$ is included in the constants of \Cref{thm:ccsm-rate}.

For the estimated $\widehat\scL_r = \sum_{\alpha\in\Lambda_r}\widehat\theta^r_\alpha H_\alpha$, the triangle inequality in $L^2(\pi_{\vU};\RR^r)$ bounds the Fisher divergence by truncation and estimation terms:
\begin{equation}
  \sqrt{\mathcal J(\widehat\scL_r)} \;\le\; \underbrace{\bigl\|\nabla(\scL_r - \scL_r^\star)\bigr\|_{L^2(\pi_{\vU})}}_{\sqrt{\mathcal J_{\rm trunc}}} \;+\; \underbrace{\bigl\|\nabla(\scL_r^\star - \widehat\scL_r)\bigr\|_{L^2(\pi_{\vU})}}_{\sqrt{\mathcal J_{\rm est}(\hvbeta)}},
  \label{eq:triangle}
\end{equation}
Here $\scL_r^\star := \sum_{\alpha\in\Lambda_r}\theta^{r,\star}_\alpha H_\alpha$ is the infinite-sample Stage-2 minimizer over $\Lambda_r$. Its gradient is the $L^2(\pi_{\vU})$-projection of $\nabla\scL_r$ onto $\mathrm{span}\{\nabla H_\alpha : \alpha\in\Lambda_r\}$. The coefficients $\theta^{r,\star}_\alpha$ coincide with the Hermite coefficients $\vartheta_\alpha$ of \textup{(H2)} when $\pi_{\vU}=\gauss_r$, and also in other exactly represented cases.

Under \textup{(H1)}--\textup{(H2)} with $s>1/2$, we have $\mathcal J_{\rm trunc} \le C_\star\,K^{-(2s-1)}\|\scL_r\|^2_{H^s_{\gauss}}$, where $K := \min\{|\alpha|_1\ge1 : \alpha\notin\Lambda_r,\ \vartheta_\alpha \ne 0\}$. Use $\sum_{\alpha\in\Lambda_r}\vartheta_\alpha H_\alpha$ as a candidate in the projection problem. By optimality, $\mathcal J_{\rm trunc}$ is no larger than its residual. From $\pi_{\vU} \le C_\star\gauss_r$ in \textup{(H1)}, this residual is bounded by the corresponding squared gradient norm in $L^2(\gauss_r)$, multiplied by the density-ratio bound. Gaussian orthogonality and $|\alpha|_1 \geq K$ for every omitted nonzero coefficient give $\sum_{\alpha\notin\Lambda_r}|\alpha|_1\vartheta_\alpha^2 \le K^{-(2s-1)}M^2$.

For $\Lambda_{K_2,q_2}$, the bound $K\ge K_2+1$ requires all omitted positive-degree modes of degree at most $K_2$ to have zero coefficients. Within those degrees, the omitted modes are the linear $e_j$ and the interactions with $|\alpha|_0 > q_2$. If any such coefficient is nonzero, $K$ cannot exceed its total degree. In particular, some $\vartheta_{e_j} \ne 0$ implies $K = 1$, even if $\pi_{\vU}$ has zero-mean marginals.

\begin{lemma}[Exact error equation for a data-dependent regularizer]
\label{lem:unc-bias}
Let $\mA^\star$ be positive definite, let
$\mb^\star=\mA^\star\bm\theta^{r,\star}$, and let
$\widehat\mM=\widehat\mA_r+\widehat\mR$ be positive definite.
Writing
\[
  \mathbf r_N := (\widehat\mb_r-\mb^\star)
       -(\widehat\mA_r-\mA^\star)\bm\theta^{r,\star},
  \qquad
  \hvbeta_{\rm unc}:=\widehat\mM^{-1}\widehat\mb_r,
\]
we have, for every realization,
\begin{equation}
 \hvbeta_{\rm unc}-\bm\theta^{r,\star}
   =\widehat\mM^{-1}
       (\mathbf r_N-\widehat\mR\bm\theta^{r,\star}).
 \label{eq:unc-expansion}
\end{equation}
\end{lemma}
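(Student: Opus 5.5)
The identity is purely algebraic, so the plan is to verify it realization by realization without any probabilistic argument. Since $\widehat\mM$ is positive definite it is invertible, and $\hvbeta_{\rm unc}=\widehat\mM^{-1}\widehat\mb_r$ is well defined. Multiplying both sides of \eqref{eq:unc-expansion} by $\widehat\mM$ reduces the claim to
\[
\widehat\mb_r-\widehat\mM\bm\theta^{r,\star}=\mathbf r_N-\widehat\mR\bm\theta^{r,\star},
\]
and this is what I would establish directly.

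First, expand the left-hand side using $\widehat\mM=\widehat\mA_r+\widehat\mR$. This gives $\widehat\mb_r-\widehat\mA_r\bm\theta^{r,\star}-\widehat\mR\bm\theta^{r,\star}$.

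Second, use the population normal equation $\mb^\star=\mA^\star\bm\theta^{r,\star}$ to insert zero in the form $-\mb^\star+\mA^\star\bm\theta^{r,\star}$. The first two terms then become
\[
\widehat\mb_r-\widehat\mA_r\bm\theta^{r,\star}=(\widehat\mb_r-\mb^\star)-(\widehat\mA_r-\mA^\star)\bm\theta^{r,\star}=\mathbf r_N.
\]
Substituting this into the expansion yields the reduced identity. Finally, apply $\widehat\mM^{-1}$ to both sides.

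There is no genuine obstacle. The only points to note are the following.
\begin{itemize}
\item The positive definiteness of $\mA^\star$ is used only to make $\bm\theta^{r,\star}$ the unique solution of the normal equations, so that $\mb^\star=\mA^\star\bm\theta^{r,\star}$ is a consistent definition.
\item The data dependence of $\widehat\mR=\mR(\widehat\mA_r)$ plays no role, because the identity holds pointwise for each realization.
\end{itemize}
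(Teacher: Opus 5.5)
Your proof is correct and is the same argument as the paper's: the paper subtracts $\widehat\mM\bm\theta^{r,\star}$ from the normal equation $\widehat\mM\hvbeta_{\rm unc}=\widehat\mb_r$, and then uses $\mb^\star=\mA^\star\bm\theta^{r,\star}$ to recognize $\widehat\mb_r-\widehat\mA_r\bm\theta^{r,\star}$ as $\mathbf r_N$, exactly as you do. One small remark: the relation $\mb^\star=\mA^\star\bm\theta^{r,\star}$ is itself a hypothesis of the lemma, so the identity does not use positive definiteness of $\mA^\star$ at all; only the invertibility of $\widehat\mM$ is needed.
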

\begin{proof}
Subtract $\widehat\mM\bm\theta^{r,\star}$ from the normal equation
$\widehat\mM\hvbeta_{\rm unc}=\widehat\mb_r$.
\end{proof}

To choose the penalty, fix a deterministic regularizer $R_0\succeq0$ and set
\[
 \widetilde\beta=\bm\theta^{r,\star}
   +(\mA^\star+R_0)^{-1}
      (\mathbf r_N-R_0\bm\theta^{r,\star}).
\]
Suppose $R_0$ is diagonal in the eigenbasis of $\mA^\star$, with entries $\nu_i$.
Let $\widetilde\Sigma$ be the covariance of $\mathbf r_N$ in this basis.
Since $\EE\mathbf r_N=0$,
\begin{equation}
 \EE\|\widetilde\beta-\bm\theta^{r,\star}\|_{\mA^\star}^{2}
 =\sum_i\frac{\lambda_i[
      \nu_i^2(\theta_i^{r,\star})^2+\widetilde\Sigma_{ii}]
                }{(\lambda_i+\nu_i)^2}.
 \label{eq:bias-var-fisher}
\end{equation}
For nonzero $\theta_i^{r,\star}$, minimization gives
\begin{equation}
\nu_i^\star=\frac{\widetilde\Sigma_{ii}}{\lambda_i(\theta_i^{r,\star})^2}.
\label{eq:rho-optimal}
\end{equation}
For the actual estimator, the Gram matrix and the regularizer are both data-dependent, and \Cref{thm:ccsm-rate} bounds its error directly.

In the canonical Hermite basis, the Sobolev assumption \textup{(H2)} gives
\begin{equation}
\vartheta_\alpha^2\le M^2|\alpha|_1^{-2s}\quad(|\alpha|_1\ge1),
\qquad M^2=\|\scL_r\|_{H^s_{\gauss}}^2.
\label{eq:Hs-norm}
\end{equation}

\subsection{Per-class noise bound for the variance regularizer}
\label{app:cv-bound}

\paragraph{Setup} Let $\bm\xi(\vu) := \mathbf b(\vu) - \mathbf A(\vu)\bm\theta^{r,\star}$ be the per-observation score-matching residual, where $\mathbf A,\mathbf b$ are the Hermite integrands of Part~I. For independent exact-transform observations, the residual in \eqref{eq:unc-expansion} is $\mathbf r_N=N^{-1}\sum_n\bm\xi(\vU^{(n)})$. Its covariance is $\bm\Sigma_N=N^{-1}\Cov_{\pi_{\vU}}\bm\xi(\vU)$. In the $\mA^\star$ eigenbasis, this covariance is the matrix $\widetilde{\bm\Sigma}$ used in the deterministic-Gram calculation \eqref{eq:bias-var-fisher}.

\paragraph{Population and empirical class traces} For each $k$, the infinite-sample class trace is $\tr(\mA^{\star,(k,k)}) = \sum_{\alpha\in\Lambda^{(k)}}\sum_{j=1}^r\alpha_j\,\EE_{\pi_{\vU}}[H_{\alpha-e_j}^2(\vU)]$. Under (H1), it is bounded above by $C_\star \sum_{\alpha\in\Lambda^{(k)}}\sum_j\alpha_j = C_\star\,k\,p_k$. Write $C_+ := C_\star$ for this upper class-trace constant. A lower bound by a positive multiple of $k\,p_k$ follows if the class is non-degenerate: $\EE_{\pi_{\vU}}[H_{\alpha-e_j}^2] \ge C_- > 0$ on $\Lambda^{(k)}$. This condition holds when $\pi_{\vU}$ has full support and $\Lambda^{(k)}$ is $L^2(\pi_{\vU})$-independent. Hence
\begin{equation}
  \tr(\mA^{\star,(k,k)}) \;\asymp\; k\,p_k.
  \label{eq:class-trace-asymp}
\end{equation}
For exact i.i.d.\ samples, the Gram-matrix fluctuation is $O_p(N^{-1/2})$. The additional sample-size factor in $\widehat\sigma_k^2:=N^{-1}\tr(\widehat\mA_r^{(k,k)})$ therefore gives $\widehat\sigma_k^2-\EE\widehat\sigma_k^2=O_p(N^{-3/2})$.

\begin{corollary}[Per-mode variance comparison]
\label{cor:per-class-cov-bound}
Assume (H1) and the class-trace asymptotics \eqref{eq:class-trace-asymp}. Let $K_b$ be the Hermite hypercontractivity constant \cite{Janson1997gaussian} for total degree $\le K_{\max} := \max_{\beta\in\Lambda_r}|\beta|_1$, which is finite for a fixed multi-index set. Set $C_b := 2$ and $C_A := 2K_b^2 K_{\max}$. Then the constant
\begin{equation}
  C_\Sigma \;:=\; \frac{C_\star\,(C_b + C_A\|\bm\theta^{r,\star}\|^2)}{C_-},
  \label{eq:CSigma-def}
\end{equation}
depending only on $(C_\star, K_{\max}, K_b, \|\bm\theta^{r,\star}\|, C_-)$, satisfies, for every $\alpha\in\Lambda^{(k)}$,
\begin{equation}
  (\bm\Sigma_N)_{\alpha\alpha}
  \;\le\;
  C_\Sigma\,\widehat\sigma_k^2 \cdot \frac{k}{p_k}
  \;+\;
  O_p(N^{-3/2}).
  \label{eq:per-class-bound}
\end{equation}
The proof gives $(\bm\Sigma_N)_{\alpha\alpha}\le C_\star k^2(C_b+C_A\|\bm\theta^{r,\star}\|^2)/N$ in the canonical basis. This is the bound used in \Cref{thm:ccsm-rate}, and \eqref{eq:per-class-bound} writes it in terms of the statistic $\widehat\sigma_k^2$ of the regularizer.
\end{corollary}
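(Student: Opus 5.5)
The plan is to bound each diagonal entry of $\bm\Sigma_N=N^{-1}\Cov_{\pi_{\vU}}\bm\xi(\vU)$ in the canonical Hermite basis by its second moment. Then (H1) and Gaussian hypercontractivity give a bound of order $k^2/N$. Finally, this bound is rewritten in terms of $\widehat\sigma_k^2$ using the class-trace asymptotics \eqref{eq:class-trace-asymp}.

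\emph{Step 1: second moment.} Fix $\alpha\in\Lambda^{(k)}$. Write $\xi_\alpha=b_\alpha-\sum_\beta A_{\alpha\beta}\theta^{r,\star}_\beta$, where $b_\alpha$ involves $\Delta H_\alpha$ (or $\vu\cdot\nabla H_\alpha$, as in Part~I's integrands) and $A_{\alpha\beta}=\nabla H_\alpha\cdot\nabla H_\beta$. Then
\[
(\bm\Sigma_N)_{\alpha\alpha}\le N^{-1}\EE_{\pi_{\vU}}\xi_\alpha^2\le 2N^{-1}\bigl(\EE b_\alpha^2+\EE(\textstyle\sum_\beta A_{\alpha\beta}\theta^{r,\star}_\beta)^2\bigr),
\]
which accounts for the factor $2$ in $C_b$ and in $C_A$. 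By (H1), each expectation under $\pi_{\vU}$ is at most $C_\star$ times the same expectation under $\gauss_r$.

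\emph{Step 2: Gaussian moments.} Under $\gauss_r$, $\partial_jH_\alpha=\sqrt{\alpha_j}H_{\alpha-e_j}$, so the $b_\alpha$ term is a combination of Hermite polynomials with coefficients of size $O(k)$. Its $L^2(\gauss_r)$ norm squared is $O(k^2)$ by orthonormality, which gives the $C_b k^2$ part. For the $A$ term, apply Cauchy--Schwarz over $\beta$ to get $\|\bm\theta^{r,\star}\|^2\sum_\beta\EE A_{\alpha\beta}^2$. Each $A_{\alpha\beta}$ is a sum over $j$ of products $\sqrt{\alpha_j\beta_j}H_{\alpha-e_j}H_{\beta-e_j}$. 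Hypercontractivity bounds $\EE[H^2H'^2]\le K_b^2$ for degrees at most $K_{\max}$. Summing $\alpha_j\beta_j$ appropriately yields $K_b^2K_{\max}k$, up to the stated constant $C_A$, and hence the bound $C_\star k^2(C_b+C_A\|\bm\theta^{r,\star}\|^2)/N$.

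\emph{Step 3: conversion to $\widehat\sigma_k^2$.} By \eqref{eq:class-trace-asymp}, $\EE\widehat\sigma_k^2=N^{-1}\tr(\mA^{\star,(k,k)})\ge C_-kp_k/N$. Therefore
\[
k^2/N\le C_-^{-1}\,\EE\widehat\sigma_k^2\,k/p_k .
\]
Replacing $\EE\widehat\sigma_k^2$ by $\widehat\sigma_k^2$ costs $O_p(N^{-3/2})$, by the fluctuation statement preceding the corollary, and the factor $k/p_k$ is fixed for the fixed set. This gives \eqref{eq:per-class-bound} with $C_\Sigma$ as in \eqref{eq:CSigma-def}.

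\emph{Main obstacle.} I expect the main difficulty to be the bookkeeping in Step 2 that gives exactly the linear factor $K_{\max}$ in $C_A$. It requires controlling $\sum_\beta\sum_j\alpha_j\beta_j$ uniformly, and the natural bound may instead carry $|\Lambda_r|$ or $K_{\max}^2$. I would first try to bound $\sum_\beta\EE A_{\alpha\beta}^2$ by treating $\nabla H_\alpha\cdot\nabla(\cdot)$ as an operator and using $\EE\|\nabla H_\alpha\|^4\le K_b^2k^2$. I would then absorb the remaining degree factor into $K_{\max}$. If that fails, I would let the constant depend on $|\Lambda_r|$, which is still finite for a fixed multi-index set.
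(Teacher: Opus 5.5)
Your three-step architecture is the paper's. You bound the variance by twice the second moments of $b_\alpha$ and $(\mathbf A\bm\theta^{r,\star})_\alpha$, pass to $\gauss_r$ with (H1), and convert $k^2/N$ into $\widehat\sigma_k^2\,k/p_k$; your Step~3 is exactly the paper's conversion. The gap is Step~2, which as written does not deliver the constants the corollary asserts.

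For the $\mathbf A$-term, Cauchy--Schwarz over $\beta$ discards the structure of the $\bm\theta$-sum. The quantity $\sum_{\beta\in\Lambda_r}\EE_{\gauss_r}A_{\alpha\beta}^2$ is genuinely of order $K_b^2k\sum_{\beta\in\Lambda_r}|\beta|_1$, so you get a factor $|\Lambda_r|K_{\max}$ rather than $K_{\max}$. Your fallback therefore proves a weaker statement than the one claimed, whose $C_\Sigma$ depends only on $(C_\star,K_{\max},K_b,\|\bm\theta^{r,\star}\|,C_-)$ with $C_A=2K_b^2K_{\max}$.

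The missing idea is to keep the sum intact, $(\mathbf A\bm\theta^{r,\star})_\alpha=\nabla H_\alpha\cdot\nabla\scL_r^\star$, and use orthogonality of the Hermite gradients, $\EE_{\gauss_r}[\nabla H_\alpha\cdot\nabla H_\beta]=|\alpha|_1\delta_{\alpha\beta}$. Apply pointwise Cauchy--Schwarz in $\RR^r$, then H\"older:
$\EE_{\gauss_r}[|\nabla H_\alpha|^2|\nabla\scL_r^\star|^2]\le\|\nabla H_\alpha\|_{L^4(\gauss_r)}^2\|\nabla\scL_r^\star\|_{L^4(\gauss_r)}^2$.
Hypercontractivity for each polynomial field (componentwise, with Minkowski) bounds this by $K_b^2\,\|\nabla H_\alpha\|_{L^2(\gauss_r)}^2\,\|\nabla\scL_r^\star\|_{L^2(\gauss_r)}^2$. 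Here $\|\nabla H_\alpha\|_{L^2(\gauss_r)}^2=k$. By the orthogonality, $\|\nabla\scL_r^\star\|_{L^2(\gauss_r)}^2=\sum_\beta|\beta|_1(\theta^{r,\star}_\beta)^2\le K_{\max}\|\bm\theta^{r,\star}\|^2$, with no $|\Lambda_r|$. This gives $C_\star K_b^2kK_{\max}\|\bm\theta^{r,\star}\|^2\le C_\star K_b^2k^2K_{\max}\|\bm\theta^{r,\star}\|^2$. Your ``operator'' idea with $\EE|\nabla H_\alpha|^4\le K_b^2k^2$ is the right first move. The step you leave vague, absorbing the remaining degree factor into $K_{\max}$, is precisely where this orthogonality is needed.

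The $b_\alpha$ term has a smaller issue of the same kind. An $O(k^2)$ bound obtained by treating $\vu\cdot\nabla H_\alpha$ and $\Delta H_\alpha$ separately gives a constant larger than $C_b=2$. What you need is the exact identity $b_\alpha=|\alpha|_1H_\alpha=kH_\alpha$. By the recurrence $u_jH_{\alpha-e_j}=\sqrt{\alpha_j}H_\alpha+\sqrt{\alpha_j-1}H_{\alpha-2e_j}$, the $H_{\alpha-2e_j}$ part of $u_j\partial_jH_\alpha$ cancels against $\partial_j^2H_\alpha$. So the $j$-th summand of $b_\alpha$ is $\alpha_jH_\alpha$; the Hermite polynomials are Ornstein--Uhlenbeck eigenfunctions. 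Then $\EE_{\pi_{\vU}}b_\alpha^2=k^2\,\EE_{\pi_{\vU}}H_\alpha^2\le C_\star k^2$, and the factor $2$ from your Step~1 gives $C_b=2$.
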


\begin{proof}
\emph{(i)} The Hermite recurrence $u_j H_{\alpha-e_j} = \sqrt{\alpha_j}H_\alpha + \sqrt{\alpha_j-1}H_{\alpha-2e_j}$ reduces the $j$-th summand of $b_\alpha$ (Part~I \cite{CAS_partI}) to $\alpha_j H_\alpha$. Thus $b_\alpha = |\alpha|_1 H_\alpha = k H_\alpha$. Assumption (H1) and Hermite $L^2(\gauss_r)$-orthonormality give $\Var_{\pi_{\vU}}[b_\alpha] \le \EE_{\pi_{\vU}}[b_\alpha^2] = k^2\EE_{\pi_{\vU}}[H_\alpha^2] \le C_\star k^2$.

\emph{(ii)} Write $(\mathbf A\bm\theta^{r,\star})_\alpha = \nabla H_\alpha\cdot\nabla\scL_r^\star$. Apply Cauchy--Schwarz pointwise and Hölder in $L^4$ and $L^2$, followed by the fixed-degree Hermite hypercontractivity bound $\|f\|^2_{L^4(\gauss_r)} \le K_b\|f\|^2_{L^2(\gauss_r)}$. Using $\|\nabla H_\alpha\|^2_{L^2(\gauss_r)} = k$ and $\|\nabla\scL_r^\star\|^2_{L^2(\gauss_r)} \le K_{\max}\|\bm\theta^{r,\star}\|^2$ gives $\Var_{\pi_{\vU}}[(\mathbf A\bm\theta^{r,\star})_\alpha] \le C_\star K_b^2 k^2 K_{\max}\|\bm\theta^{r,\star}\|^2$.

\emph{(iii)} The inequality $\Var[\xi_\alpha] \le 2\Var[b_\alpha] + 2\Var[(\mathbf A\bm\theta^{r,\star})_\alpha]$ now gives $(\bm\Sigma_N)_{\alpha\alpha} = N^{-1}\Var_{\pi_{\vU}}[\xi_\alpha] \le N^{-1}C_\star k^2(C_b + C_A\|\bm\theta^{r,\star}\|^2)$. The lower bound in \eqref{eq:class-trace-asymp} is $\EE[\widehat\sigma_k^2] \ge C_-\,p_k\,k/N$. Together with the fluctuation bound above, it gives $k^2/N \le k\,\widehat\sigma_k^2/(C_-\,p_k) + O_p(N^{-3/2})$. Substitution proves \eqref{eq:per-class-bound} with the stated $C_\Sigma$.
\end{proof}

\subsection{Error bound for the Stage-2 estimator}
\label{app:excess-fisher}

\begin{theorem}[Expected Fisher error with a data-dependent regularizer]
\label{thm:ccsm-rate}
Fix a finite dictionary $\Lambda_r$ and an exact-transform basis. Assume
\textup{(H1)--(H2)}, independent observations from $\pi_{\vU}$,
$\mA^\star\succ0$, and the population score-matching identity stated above.
Let $\bm\theta^{r,\star}$ solve the population normal equations.
Fix $C_{\rm samp},C_{\rm curv}>0$ and use the regularizer of
\textup{(3.7)}. Let $\mathcal C_N$ be a measurable, nonempty,
closed convex constraint set, possibly depending on the same observations,
and let $\hvbeta$ minimize the regularized quadratic over $\mathcal C_N$.
Assume $\widehat\mM\succ0$ almost surely and
$\sup_{N\ge N_0}\EE\|\hvbeta\|^4<\infty$ for some fixed $N_0$.
The set $\mathcal C_N$ includes all imposed constraints, including centering.

Write $a:=\lambda_{\min}(\mA^\star)$,
$L:=\|\mA^\star\|_{\rm op}$,
$t_k:=\tr(\mA^{\star,(k,k)})$, and let $\mD_N$ be diagonal, with entry
\[
 \rho_k:=C_{\rm samp}t_k/N+C_{\rm curv}Lk^2
 \quad\hbox{on the degree-$k$ class.}
\]
Let $\bm\Sigma_N:=\Cov(\mathbf r_N)$ and define
\[
 \mathcal G:=\{\tfrac12\mA^\star\preceq\widehat\mA_r
                                     \preceq2\mA^\star\},
 \qquad
 d_N:=\inf_{\beta\in\mathcal C_N}
           \|\beta-\bm\theta^{r,\star}\|_{\widehat\mM}.
\]
Then, for $N\ge N_0$,
\begin{equation}
\begin{aligned}
 \EE[\mathcal J(\hvbeta)]-\mathcal J_{\rm trunc}
 &\le\mathcal E_{\rm est}+16\EE[d_N^2\mathbf1_{\mathcal G}]+O(N^{-1}),\\
 \mathcal E_{\rm est}&:=16\tr(\mD_N^{-1}\bm\Sigma_N)
 +\frac{64}{a}\|\mD_N\bm\theta^{r,\star}\|^2.
\end{aligned}
\label{eq:ccsm-rate}
\end{equation}
The constant in $O(N^{-1})$ depends on the dictionary, on $\pi_{\vU}$ and on the fourth-moment bound.

In particular, if $t_k\ge C_-kp_k$ and
$(\bm\Sigma_N)_{\alpha\alpha}\le v_0 k^2/N$
for $\alpha\in\Lambda^{(k)}$, then
\begin{equation}
 16\tr(\mD_N^{-1}\bm\Sigma_N)
 \le 16v_0\min\left\{
       \frac{\sum_{k:p_k>0}k}{C_{\rm samp}C_-},
       \frac{|\Lambda_r|}{NC_{\rm curv}L}\right\}.
 \label{eq:V-piecewise}
\end{equation}
Here $C_->0$ exists for the fixed dictionary when $\mA^\star\succ0$,
and one may take
$v_0=C_\star(2+2K_b^2K_{\max}\|\bm\theta^{r,\star}\|^2)$
from the per-mode noise calculation.
Writing $\bm\theta^{r,\star,(k)}$ for the degree-$k$ coefficient block, the other term satisfies
\begin{equation}
 \frac{64}{a}\|\mD_N\bm\theta^{r,\star}\|^2
 \le\frac{128}{a}\left[
  \frac{C_{\rm samp}^2}{N^2}
       \sum_k t_k^2\|\bm\theta^{r,\star,(k)}\|^2
 + C_{\rm curv}^2 L^2
       \sum_k k^4\|\bm\theta^{r,\star,(k)}\|^2\right].
 \label{eq:B-bound}
\end{equation}
The two bounds in \eqref{eq:V-piecewise} give the sampling scales $1/C_{\rm samp}$ and $1/(NC_{\rm curv})$ associated with the two parts of the regularizer.
\end{theorem}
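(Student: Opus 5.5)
The plan is to reduce everything to a coefficient-space error in the $\mA^\star$-norm and then bound that error pathwise on the good event $\mathcal G$. Since $\nabla\scL_r^\star$ is the $L^2(\pi_{\vU})$-projection of $\nabla\scL_r$ onto the gradient span, Pythagoras gives
\[
\mathcal J(\hvbeta)=\mathcal J_{\rm trunc}+\|\hvbeta-\bm\theta^{r,\star}\|_{\mA^\star}^2 .
\]
It therefore suffices to bound $\EE\|\hvbeta-\bm\theta^{r,\star}\|_{\mA^\star}^2$, splitting it as $\EE[\,\cdot\,\mathbf 1_{\mathcal G}]+\EE[\,\cdot\,\mathbf 1_{\mathcal G^c}]$.

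\emph{Constraint step.} The regularized quadratic equals $\tfrac12\|\beta-\hvbeta_{\rm unc}\|_{\widehat\mM}^2$ plus a constant. Hence $\hvbeta$ is the $\widehat\mM$-metric projection of $\hvbeta_{\rm unc}$ onto $\mathcal C_N$. Let $\beta_0\in\mathcal C_N$ attain $d_N$; it exists because $\mathcal C_N$ is closed and convex. Projections are nonexpansive and $\beta_0$ is fixed by the projection, so
\[
\|\hvbeta-\bm\theta^{r,\star}\|_{\widehat\mM}\le\|\hvbeta_{\rm unc}-\beta_0\|_{\widehat\mM}+d_N\le\|\hvbeta_{\rm unc}-\bm\theta^{r,\star}\|_{\widehat\mM}+2d_N .
\]
Squaring gives $\|\hvbeta-\bm\theta^{r,\star}\|_{\widehat\mM}^2\le 2\|\hvbeta_{\rm unc}-\bm\theta^{r,\star}\|_{\widehat\mM}^2+8d_N^2$. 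On $\mathcal G$ we have $\mA^\star\preceq2\widehat\mA_r\preceq2\widehat\mM$. This converts the $\widehat\mM$-norm into the $\mA^\star$-norm at a further factor $2$, which produces the term $16\,d_N^2\mathbf 1_{\mathcal G}$.

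\emph{Unconstrained step.} By \Cref{lem:unc-bias} and $(x+y)^2\le2x^2+2y^2$,
\[
\|\hvbeta_{\rm unc}-\bm\theta^{r,\star}\|_{\widehat\mM}^2\le2\,\mathbf r_N^\top\widehat\mM^{-1}\mathbf r_N+2\,(\widehat\mR\bm\theta^{r,\star})^\top\widehat\mM^{-1}\widehat\mR\bm\theta^{r,\star}.
\]
Next compare $\widehat\mR$ with $\mD_N$ on $\mathcal G$. There $\tr\widehat\mA_r^{(k,k)}\in[t_k/2,2t_k]$ and $\|\widehat\mA_r\|_{\rm op}\in[L/2,2L]$, so each class entry satisfies $\nu^{(k)}\in[\rho_k/2,2\rho_k]$. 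Since $\widehat\mM\succeq\widehat\mR$, this yields $\widehat\mM^{-1}\preceq2\mD_N^{-1}$. Since also $\widehat\mM\succeq\mA^\star/2$, we get $\lambda_{\min}(\widehat\mM)\ge a/2$, and $\|\widehat\mR\bm\theta^{r,\star}\|^2\le4\|\mD_N\bm\theta^{r,\star}\|^2$. The noise term is then bounded without any independence between $\widehat\mR$ and $\mathbf r_N$:
\[
\EE[\mathbf r_N^\top\widehat\mM^{-1}\mathbf r_N\mathbf 1_{\mathcal G}]\le2\,\EE[\mathbf r_N^\top\mD_N^{-1}\mathbf r_N]=2\tr(\mD_N^{-1}\bm\Sigma_N).
\]
This works because $\mD_N$ is deterministic. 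Collecting the factors $2\cdot2\cdot2$ for the noise term and $2\cdot2\cdot2\cdot4/a$ for the bias term, then multiplying by the constraint factor $2$, gives $16\tr(\mD_N^{-1}\bm\Sigma_N)+\tfrac{64}{a}\|\mD_N\bm\theta^{r,\star}\|^2=\mathcal E_{\rm est}$.

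\emph{Bad event.} Under (H1) the Gram integrands have moments of every order. A fourth-moment Markov bound on $\|\widehat\mA_r-\mA^\star\|_{\rm op}$ then gives $\Pr(\mathcal G^c)=O(N^{-2})$. Cauchy--Schwarz with $\sup_N\EE\|\hvbeta\|^4<\infty$ gives $\EE[\|\hvbeta-\bm\theta^{r,\star}\|_{\mA^\star}^2\mathbf 1_{\mathcal G^c}]=O(N^{-1})$. I expect this step to be the main obstacle. The delicate point is keeping the constant free of the regularizer and confirming that $\mathcal G$ controls $\widehat\sigma_k^2$ and $\|\widehat\mA_r\|_{\rm op}$ simultaneously, as used above. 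The Loewner sandwich does yield both block traces and the operator norm, so I would check that first.

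\emph{Corollary bounds.} For \eqref{eq:V-piecewise}, use $(\bm\Sigma_N)_{\alpha\alpha}\le v_0k^2/N$ together with either $\rho_k\ge C_{\rm samp}C_-kp_k/N$ or $\rho_k\ge C_{\rm curv}Lk^2$. Summing over the $p_k$ indices in each class gives $v_0k/(C_{\rm samp}C_-)$ or $v_0p_k/(NC_{\rm curv}L)$ per class, and taking the minimum of the two sums gives the claim. \eqref{eq:B-bound} follows from $\rho_k^2\le2(C_{\rm samp}t_k/N)^2+2(C_{\rm curv}Lk^2)^2$ applied blockwise. The value $v_0$ is the canonical-basis bound of \Cref{cor:per-class-cov-bound}.
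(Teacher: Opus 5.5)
Your proposal is correct and follows the paper's proof essentially step for step. Both arguments use Pythagoras for the Fisher projection and treat $\hvbeta$ as the $\widehat\mM$-metric projection of $\hvbeta_{\rm unc}$, giving the $2d_N$ triangle bound. On $\mathcal G$ they use the Loewner comparisons $\tfrac12\mD_N\preceq\widehat\mR\preceq2\mD_N$, $\widehat\mM^{-1}\preceq2\mD_N^{-1}$ and $\widehat\mM^{-1}\preceq(2/a)\mI$, with the deterministic $\mD_N$ absorbing the noise term. They then apply Markov and Cauchy--Schwarz on $\mathcal G^c$, reaching the same constants $16$, $64/a$ and $16$.

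The step you flagged as delicate is not actually an obstacle. On $\mathcal G^c$ you, like the paper, only need $\mathcal J_{\rm est}(\hvbeta)\le L\|\hvbeta-\bm\theta^{r,\star}\|^2$ and the fourth-moment assumptions, and the regularizer does not enter.
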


\begin{proof}
Assumption \textup{(H1)} gives finite fourth moments for the polynomial
score-matching integrands. Independence of the observations then gives
$\EE\|\widehat\mA_r-\mA^\star\|_F^4=O(N^{-2})$,
$\EE\mathbf r_N=0$, and
$\bm\Sigma_N=N^{-1}\Cov[\mathbf b(\vU)
                        -\mathbf A(\vU)\bm\theta^{r,\star}]$.
On $\mathcal G$, taking principal-block traces and operator norms gives
\[
 \tfrac12\mD_N\preceq\widehat\mR\preceq2\mD_N,
 \qquad
 \mA^\star\preceq2\widehat\mM,
 \qquad
 \widehat\mM^{-1}\preceq2\mD_N^{-1},
 \quad
 \widehat\mM^{-1}\preceq(2/a)\mI.
\]
The unconstrained solution satisfies \eqref{eq:unc-expansion}. After completing
the square, the constrained minimization is the $\widehat\mM$-metric projection
of $\hvbeta_{\rm unc}$ onto $\mathcal C_N$, with solution $\hvbeta$.
Nonexpansiveness relative to any point of $\mathcal C_N$ and the triangle inequality give
\[
 \|\hvbeta-\bm\theta^{r,\star}\|_{\widehat\mM}
 \le\|\hvbeta_{\rm unc}-\bm\theta^{r,\star}\|_{\widehat\mM}+2d_N.
\]
Consequently, on $\mathcal G$,
\begin{align*}
 \mathcal J_{\rm est}(\hvbeta)
 &\le4\|\mathbf r_N-\widehat\mR\bm\theta^{r,\star}
                         \|_{\widehat\mM^{-1}}^2+16d_N^2\\
 &\le16\mathbf r_N^\top \mD_N^{-1}\mathbf r_N
       +\frac{16}{a}\|\widehat\mR\bm\theta^{r,\star}\|^2+16d_N^2\\
 &\le16\mathbf r_N^\top \mD_N^{-1}\mathbf r_N
       +\frac{64}{a}\|\mD_N\bm\theta^{r,\star}\|^2+16d_N^2.
\end{align*}
Both regularizer matrices are diagonal, so their componentwise comparison
bounds their action on the fixed vector in the last step.
Since $\mD_N$ is deterministic and $\EE\mathbf r_N=0$, after multiplication by the good-event indicator,
the expectation of the first term is bounded by $16\tr(\mD_N^{-1}\bm\Sigma_N)$.

If $\|\widehat\mA_r-\mA^\star\|_{\rm op}\le a/2$, then
$\mathcal G$ holds. Markov's inequality gives
$\Pr(\mathcal G^c)=O(N^{-2})$. Since
$\mathcal J_{\rm est}(\hvbeta)
 \le L\|\hvbeta-\bm\theta^{r,\star}\|^2$,
Cauchy--Schwarz and the coefficient fourth-moment assumption bound
$\EE[\mathcal J_{\rm est}(\hvbeta)\mathbf1_{\mathcal G^c}]$
by $O(N^{-1})$. Projection orthogonality gives
$\mathcal J(\hvbeta)=\mathcal J_{\rm trunc}
                       +\mathcal J_{\rm est}(\hvbeta)$.
This proves \eqref{eq:ccsm-rate}. For \eqref{eq:V-piecewise}, substitute
the degree-class variance bound and bound the denominator below by each summand
in $\rho_k$ in turn. Apply $(x+y)^2\le2x^2+2y^2$ to each diagonal
entry to obtain \eqref{eq:B-bound}.
\end{proof}

\begin{proof}[Proof of \Cref{thm:stage2-kl}]
The Gaussian reference scores cancel when differentiating the log density ratio, so the relative Fisher divergence is $\mathcal J(\hvbeta)$. The log-Sobolev inequality of $\widehat\pi_{\vU}$ \cite[eq.~(5.1.7)]{BakryGentilLedoux2014} gives $\DKL(\pi_{\vU}\,\|\,\widehat\pi_{\vU})\le\mathcal J(\hvbeta)/(2C_{\rm LSI}(\hvbeta))$. Orthogonality of the Fisher projection gives $\mathcal J=\mathcal J_{\rm trunc}+\mathcal J_{\rm est}$. Use the Hermite candidate from \Cref{app:bias-var} to bound the truncation term and obtain \eqref{eq:stage2-kl}. The expected bound \eqref{eq:stage2-kl-mean} follows from \Cref{thm:ccsm-rate} and \textup{(H3)}. For constants chosen on the grid, sum the bounds using the selected-pair inequality in \Cref{sec:ridge}.
\end{proof}

% =====================================================================
\section{Log-Sobolev property of the Stage-2 estimate}
\label{app:lsi-proof}

\begin{lemma}
\label{lem:lsi-pd-leading}
Let $V:\RR^r\to\RR$ be a polynomial of even degree $K \ge 4$ whose degree-$K$ homogeneous part $G$ is positive-definite, and let $Z := \int e^{-V} < \infty$. Then $\mu := Z^{-1}e^{-V}\,d\vu$ satisfies a tight logarithmic Sobolev inequality.
\end{lemma}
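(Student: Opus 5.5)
The plan is to reduce the statement to a uniformly convex reference measure and then transfer the inequality by perturbation. First I record the growth of $V$. Since $G$ is homogeneous of degree $K$ and positive-definite, compactness of the unit sphere gives $G(\vv)\ge g_0>0$ for $\|\vv\|=1$, hence $G(\vu)\ge g_0\|\vu\|^K$. The lower-degree part is $O(1+\|\vu\|^{K-1})$. Together these give $V(\vu)\ge \tfrac{g_0}{2}\|\vu\|^K - c$, which shows $Z<\infty$ (consistent with the hypothesis).

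The same argument applies to the Hessian. $\nabla^2 G$ is homogeneous of degree $K-2$, and Euler's identity gives $\vv^\top\nabla^2G(\vv)\vv = K(K-1)G(\vv)$. That controls only the radial direction, so I will not rely on convexity of $G$ along every direction.

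Instead I use the Holley--Stroock plus Bakry--\'Emery route with a decomposition of $V$. Let $\psi(\vu):=\kappa\|\vu\|^2+\tfrac{g_0}{4}\|\vu\|^{K}$ for some $\kappa>0$. This function is uniformly convex with $\nabla^2\psi\succeq 2\kappa\mI$, because $\|\vu\|^K$ is convex for $K\ge2$. Bakry--\'Emery then gives the reference measure $\nu\propto e^{-\psi}$ a log-Sobolev inequality with constant $2\kappa$ \cite{BakryGentilLedoux2014}.

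The difference $V-\psi$ is not bounded, so Holley--Stroock cannot be applied to it directly. I therefore split differently: write $V=\psi+(V-\psi)$ and aim to show that $V-\psi$ is the sum of a function convex outside a ball and a bounded part. The cleaner alternative, which I would try first, is to verify the quantitative criterion that $\mu$ is a Lipschitz perturbation outside a compact set of a measure satisfying $\nabla^2 V\succeq \rho\mI$ at infinity. Concretely, I show $\nabla^2 V(\vu)\succeq \rho\mI$ for $\|\vu\|\ge R$ whenever $\nabla^2G\succeq c\|\vu\|^{K-2}\mI$ away from the origin. That last condition is not implied by positivity of $G$ alone, so I would instead argue through the criterion of Cattiaux--Guillin type. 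Using a Lyapunov function $W=e^{a V}$, one gets $LW/W=a\Delta V-a(1-a)\|\nabla V\|^2\le -\theta\|\vu\|^2+b\mathbf 1_{B_R}$. This holds because $\|\nabla V\|\ge \vu\cdot\nabla V/\|\vu\|\ge c\|\vu\|^{K-1}$ by Euler ($\vu\cdot\nabla G=KG$), while $\Delta V=O(\|\vu\|^{K-2})$. Combined with a local Poincar\'e inequality on the ball and bounded Hessian-from-below ($\nabla^2V\succeq -C(1+\|\vu\|^{K-2})\mI$, dominated by $\|\nabla V\|^2$), this Lyapunov condition yields a defective log-Sobolev inequality.

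To obtain a tight inequality, I combine the defective log-Sobolev inequality with the Poincar\'e inequality, which follows from the same Lyapunov condition. Rothaus' lemma then gives the tight form, and the constant is positive.

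The main obstacle is the Lyapunov/defective-LSI step: checking that the super-quadratic growth $\|\nabla V\|^2\gtrsim\|\vu\|^{2K-2}$ dominates both $\Delta V$ and the negative part of the Hessian, uniformly outside a ball. Euler's identity, applied via the radial derivative, is the tool I expect to resolve it.
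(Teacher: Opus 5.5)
Your final route matches the paper's in outline: Euler's identity gives $|\nabla V|\gtrsim\|\vu\|^{K-1}$, and $\Delta V=O(\|\vu\|^{K-2})$. A Cattiaux--Guillin--Wang--Wu drift criterion then gives a defective log-Sobolev inequality, the same drift gives a Poincar\'e inequality, and Rothaus's lemma makes it tight. The gap is in the step that produces the defective inequality. The criterion you are reaching for converts $LW\le(-\theta\|\vu\|^2+b\mathbf 1_{B_R})W$ into a log-Sobolev inequality, but it requires $\nabla^2V\succeq-K_0\mI$ with a \emph{constant} $K_0$. A local Poincar\'e inequality on the ball is also the wrong local input here; you need a local super-Poincar\'e inequality, which is immediate since $V$ is bounded on balls. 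Your own bound $\nabla^2V\succeq-C(1+\|\vu\|^{K-2})\mI$ is unbounded for $K\ge4$, and positive-definiteness of $G$ cannot improve it. For example, $G(x,y)=x^4-cx^2y^2+y^4$ with $0<c<2$ is positive-definite, yet $\partial_{xx}G(0,y)=-2cy^2\to-\infty$. So the Hessian is not bounded from below. ``Dominated by $\|\nabla V\|^2$'' is not a hypothesis of that criterion, so as written the defective inequality does not follow.

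The repair is to keep the growth you already computed instead of weakening the drift to $\theta\|\vu\|^2$. You have $(1-a)|\nabla V|^2-\Delta V\gtrsim\|\vu\|^{2(K-1)}$. For large $\|\vu\|$ you also have $V\le 2G_{\max}\|\vu\|^K$, where $G_{\max}:=\max_{\|\vv\|=1}G(\vv)$. Since $2(K-1)>K$, this gives $(1-a)|\nabla V|^2-\Delta V\ge V$ outside a ball, and $V/|\nabla V|^2\to0$. These are the conditions the paper verifies: (3.14.1)--(3.14.3) of Cattiaux--Guillin--Wang--Wu with $\eta(u)=u$. In addition, $|\nabla V|$ and $\nabla^2V$ grow polynomially, so they are dominated by $e^{K'V}$ outside a ball. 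With these, their Thm.~3.17 and Rem.~3.18 give the defective inequality without any lower bound on the Hessian. In super-Poincar\'e terms, the drift rate off $B_R$ is $\gtrsim R^{2K-2}$, so take $r\approx R^{-(2K-2)}$. On $B_R$, the oscillation of $V$ makes the local super-Poincar\'e function at most $e^{O(R^K)}$ times a power of $1/r$. Since $R^K\lesssim R^{2K-2}\approx 1/r$, this yields $\beta(r)\le e^{c(1+1/r)}$, which is the defective log-Sobolev inequality. Your Poincar\'e and Rothaus steps then finish the proof as in the paper.
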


\begin{proof}
Positive definiteness of $G$ gives $g_0 := \min_{\|\vv\|=1}G(\vv)>0$ and hence $G(\vu) \ge g_0\|\vu\|^K$. The lower-degree terms are dominated at large $\|\vu\|$, so $V(\vu) \ge \tfrac{g_0}{2}\|\vu\|^{K}$. Thus condition (3.14.1) of \cite{CattiauxGuillinWangWu2009} holds, and $Z < \infty$. Euler's identity $\vu\cdot\nabla G = K\,G$ implies that $g_1 := \min_{\|\vv\|=1}|\nabla G(\vv)| > 0$, and homogeneity then yields $|\nabla G(\vu)| \ge g_1\|\vu\|^{K-1}$. Since $\nabla V = \nabla G + O(\|\vu\|^{K-2})$ and $\Delta V = O(\|\vu\|^{K-2})$, it follows that $|\nabla V|^2 \ge \tfrac{g_1^2}{2}\|\vu\|^{2(K-1)}$ and $|\Delta V| \le c_3\|\vu\|^{K-2}$ for large $\|\vu\|$. Set $G_{\max} := \max_{\|\vv\|=1}G(\vv)$, so that $V(\vu) \le 2G_{\max}\|\vu\|^{K}$ for large $\|\vu\|$. Because $K \ge 4$, the exponents satisfy $2(K-1) > K$. For each fixed $a_0 \in (0,1)$, choose $R$ large enough that, whenever $\|\vu\|$ exceeds this radius,
\[
(1-a_0)|\nabla V|^2 - \Delta V \ge \tfrac{(1-a_0)g_1^2}{4}\|\vu\|^{2(K-1)} \ge 2G_{\max}\|\vu\|^{K} \ge V(\vu),
\]
where the middle inequality follows from the comparison of exponents above. This is condition (3.14.2) with $\eta(u) = u$, after choosing their constant $b_0$ to cover the finite deficit on $\{\|\vu\| \le R\}$. Condition (3.14.3) follows from $V/|\nabla V|^2 = O(\|\vu\|^{2-K}) \to 0$. Both $|\nabla V|$ and $\nabla^2 V$ grow at most polynomially in $\vu$ and are therefore bounded by $e^{K'V}$ outside a ball, as required by the growth condition in their Thm.~3.17. That theorem and Rem.~3.18 of \cite{CattiauxGuillinWangWu2009} give a defective logarithmic Sobolev inequality. Conditions (3.14.1) and (3.14.2) also give a Poincar\'e inequality \cite{BakryBartheCattiauxGuillin2008}. Rothaus's lemma \cite{Rothaus1985} then gives a tight logarithmic Sobolev inequality.
\end{proof}

For \Cref{prop:lsi-holds}, take $V = \tfrac12\|\cdot\|^2 - \widehat\scL_r$ up to an additive constant, so that $\widehat\pi_{\vU} \propto e^{-V}$. When $K_{\rm eff}\ge4$, the degree-$K_{\rm eff}$ part of this potential is $-F[\hvbeta]$. It is positive-definite by the constraint $F[\hvbeta] \le -\varepsilon < 0$ on the sphere, so \Cref{lem:lsi-pd-leading} applies with $K = K_{\rm eff}$. For $K_{\rm eff} = 2$, or equivalently $K_2 \in \{2, 3\}$, the potential is strictly convex and quadratic, and the measure is Gaussian. This case follows from the Gaussian logarithmic Sobolev inequality \cite{Gross1975}.

\clearpage
\def\siamprelabel{SM}
% One reference list serves both parts: citations in the supplementary part
% print without the SM prefix that the class adds to supplement numbers.
\makeatletter\let\CAS@origcite\@cite
\def\@cite#1#2{\begingroup\def\siamprelabel{}\CAS@origcite{#1}{#2}\endgroup}\makeatother
\def\siampretitle{Supplementary Materials: }
\let\section\CASsection
\renewcommand\thesection{\siamprelabel\arabic{section}}
\crefalias{section}{section}
\crefalias{subsection}{subsection}
\setcounter{section}{0}\setcounter{equation}{0}\setcounter{figure}{0}
\setcounter{table}{0}\setcounter{theorem}{0}\setcounter{algorithm}{0}
\setcounter{page}{1}
\headers{Copula Active Subspaces}{J.~Chen and P.~J.~van Leeuwen}
\begin{center}
{\bfseries\MakeUppercase{Supplementary Materials: Copula Active Subspaces II: Error Decomposition, A Posteriori Estimation, and Sharpness of the Bounds}\par}
\vskip .075in
{\footnotesize\scshape Joshua Chen \scriptsize AND \footnotesize Peter Jan van Leeuwen}
\end{center}
\vskip .11in

\Cref{sm:rate-real} checks the $N^{-1/2}$ subspace-recovery rate of \Cref{lem:score-cov-rate} on a law whose copula score lies in $\scF_\Lambda$. \Cref{sm:disintegration-mc} explains how we compute the reference divergences. \Cref{sm:sharper-bound} compares the bound of \cite{LiCuiLiMarzoukZahm2024sharp} with the trace bound used to select the subspace, and supplies the numbers reported in \S\ref{sec:tightness} of the main manuscript. The random seeds, tuning parameters and hardware are listed in \Cref{sm:experimental-protocol}. \Cref{sm:composite-anatomy} gives the error decomposition on Examples~2 and~3.
The observation noise $\vn$ has law $\pi_{\vn}$ and is expressed in the original coordinates, before the rank transform. All other notation follows the main manuscript. References prefixed by SM (\S\ref{sm:rate-real}, \Cref{tab:sm-jhi}, \eqref{eq:sm-logcZ}) point to these supplementary materials, and references without that prefix point to the main manuscript.

% =====================================================================
\section{Numerical check of the Stage-1 rate}
\label{sm:rate-real}

We check the Stage-1 statistical rate on a law whose copula score lies in $\scF_\Lambda$, so the measured error contains no truncation error. \Cref{fig:pedagogical} illustrates the estimator in two dimensions. Part~I \cite{CAS_partI} and its supplement report results on the three example problems, including subspace recovery and convergence of the parameterized eigenspace as $(K_1,q_1)$ increases. \Cref{sm:experimental-protocol} specifies the reference bases used for those examples.

\begin{figure}[h]
\centering
\includegraphics[width=\textwidth]{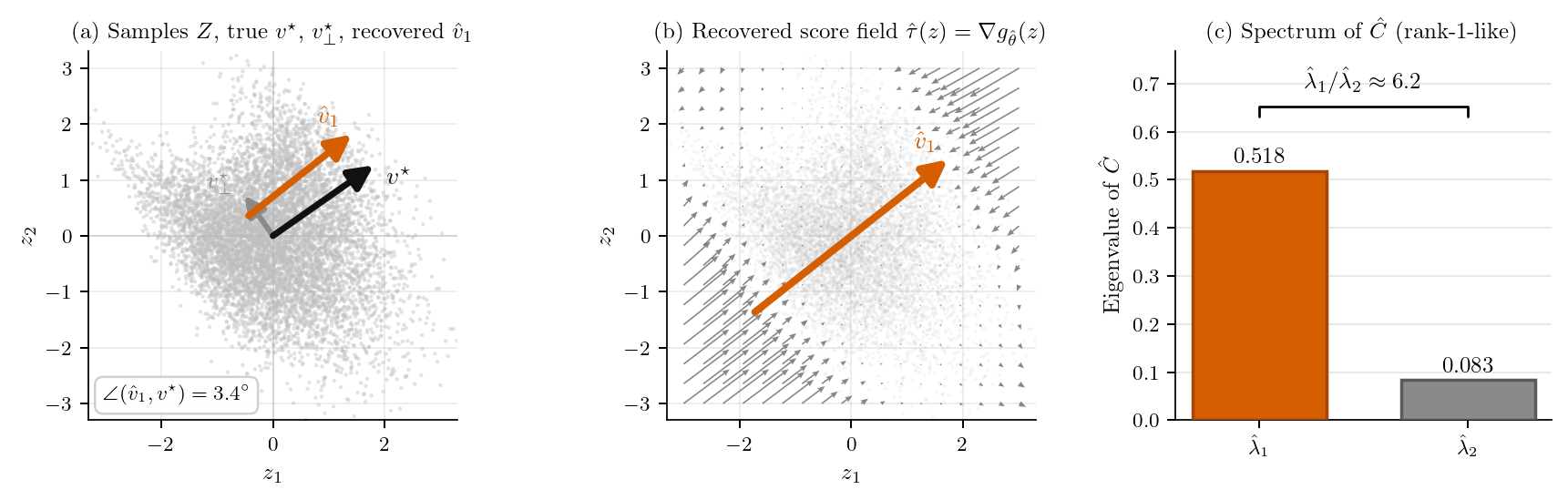}
\caption{\textsc{Cas} applied to a two-dimensional example with a rotated quadratic conditional mean, after empirical rank Gaussianization. Panel~(a) shows the samples, the construction's input direction $\vv^\star$ and the top eigenvector $\widehat{\vv}_1$ of $\hmC$. Panels~(b) and~(c) show the estimated score field and the spectrum of $\hmC$.}
\label{fig:pedagogical}
\end{figure}

\Cref{lem:score-cov-rate} of the main paper gives an $O_p(N^{-1/2})$ rate for $\|\sin\Theta(\hmV_r^{\mC}, \mV_{r,\Lambda})\|_F$ at a fixed Hermite multi-index set $\Lambda$. We check this rate on a Gaussian-copula law, for which the truncation error is exactly zero. \begin{remark}[Gaussian-copula reduction]
\label{rem:gaussian-copula}
Suppose $\pi_{\vZ} = \gauss(0,\bm\Sigma^{\vZ})$, where $\bm\Sigma^{\vZ} := \EE_{\pi_{\vZ}}[\vZ\otimes\vZ]$ is a correlation matrix. Its log-copula relative to $\gauss_d$ is the quadratic $g^\star(\vz) = -\tfrac12\vz^\top \mM\vz + \mathrm{const}$, with $\mM := (\bm\Sigma^{\vZ})^{-1} - \mI$. This is a combination of degree-$2$ Hermite polynomials and a constant, since $z_iz_j = H_{e_i+e_j}$ ($i\ne j$) and $z_i^2 = \sqrt2 H_{2e_i} + 1$.

For any $\Lambda = \Lambda_{K,q}$ with $K,q\ge2$, the space $\scF_\Lambda$ contains every symmetric linear vector field $\vz\mapsto\bm S\vz$ with $\bm S=\bm S^\top$. It therefore contains $\scS(\vz)=-\mM\vz$. Hence $\scS_\Lambda=\scS$, the coefficient vector $\theta^\star$ is supported on degree $2$, and $\mC=\mM\,\bm\Sigma^{\vZ}\mM$. Fisher consistency \cite[Thm.~2, Cor.~3]{Hyvarinen2005} identifies $\theta^\star$ as the unique infinite-sample minimizer of the strictly convex quadratic objective.

By the closed form of the solve and the strong law of large numbers, $\widehat\theta \to \theta^\star$ a.s.\ and $\hmC \to \mC$ in operator norm as $\tau_N \to 0$. At the fixed strength $\tau_N \asymp \kappa\|\hmA\|_{\rm op}$ used here, the bias term of \Cref{lem:score-cov-rate} is $O(\tau_N)$ and does not vanish, but it does not visibly change the measured rate of \Cref{fig:rate-sm} over the sample sizes used.
\end{remark}
Thus $\mV_{r,\Lambda}=\mV_r^{\mC}$, and we compute this reference eigenspace directly from the closed-form $\mC$, without Monte Carlo. The measured error includes the effects of empirical ranks and the Stage-1 ridge. We use a single correlated block of $3$ coordinates in $d=20$, with rank $r=3$, within-block off-diagonal correlation $\rho=0.7$, and $17$ independent remaining coordinates. We take $N\in\{300,500,1000,2000,4000,8000\}$, $(K_1,q_1)=(2,2)$ and $20$ repetitions at each $N$, and measure $\|\sin\Theta(\hmV_r^{\mC},\mV_r^{\mC})\|_F$. The least-squares line fitted to the mean errors in log-log coordinates has slope $-0.502$ (\Cref{fig:rate-sm}), consistent with the statistical scaling $-1/2$ in \Cref{lem:score-cov-rate}.

\begin{figure}[h]
\centering
\includegraphics[width=\textwidth]{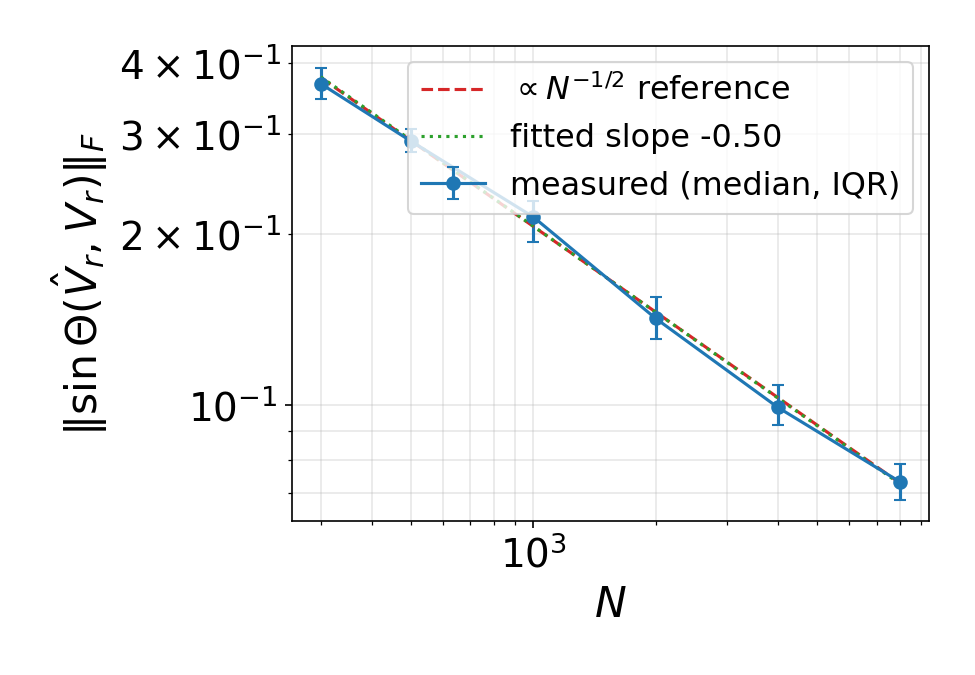}
\caption{Subspace-recovery error on a Gaussian-copula law with $d=20$ and $r=3$. A block of $3$ coordinates has within-block correlation $\rho=0.7$, and the remaining $17$ coordinates are independent. The linear copula score gives $\mV_{r,\Lambda}=\mV_r^{\mC}$ exactly. We compute this reference as the top-$r$ eigenspace of $\mC=\mM\,\bm\Sigma^{\vZ}\mM$, where $\mM=(\bm\Sigma^{\vZ})^{-1}-\mI$ (\Cref{rem:gaussian-copula}). The plot shows the median and interquartile range of $\|\sin\Theta(\hmV_r^{\mC},\mV_r^{\mC})\|_F$ over $20$ repetitions, a reference line $\propto N^{-1/2}$, and the least-squares slope fitted to the mean errors in log-log coordinates.}
\label{fig:rate-sm}
\end{figure}

% =====================================================================
\section{Reference divergences in rank-Gaussianized coordinates}
\label{sm:disintegration-mc}

We compute reference values for the Stage-1 divergence $\DKL(\pi_{\vn}\,\|\,\pi_{\vn}(\,\cdot\,;\mV_r))$ and the reduced-density divergence $\mathcal E_{\rm dens}$ using the known noise laws. These values appear in \S\ref{sec:tightness} and \S\ref{sec:exp-composite} of the main manuscript and in \Cref{sm:sharper-bound,sm:composite-anatomy}. Both calculations require pointwise values of $\pi_{\vU}(\vu):=\int\pi_{\vZ}(\mV_r\vu+\mV_\perp\vw)\,\mathrm{d}\vw$, the marginal density along $\mV_r$ after rank Gaussianization.

We apply the rank transform before evaluating this integral. Let $f_k$ and $F_k$ be the population marginal density and CDF of the mixed latent coordinate $z_k$. We represent these marginals as Gaussian mixtures over the active block. On the $b=8$ mixed coordinates, the transform is $Z_k=\Phi^{-1}(F_k(z_k))$. It is the identity on the $12$ unmixed coordinates, whose marginals are already standard Gaussian. These coordinates are independent of the mixed block, so the copula density depends only on that block. If $p_b$ is the joint density of $(z_0,\ldots,z_{b-1})$, then
\begin{equation}
  \log c^{Z}(\vZ) \;=\; \log p_b(z_0,\ldots,z_{b-1}) \;-\; \textstyle\sum_{k<b}\log f_k(z_k),
  \qquad z_k = F_k^{-1}(\Phi(Z_k)),
  \label{eq:sm-logcZ}
\end{equation}
because the standard-Gaussian factors from the changes of variables cancel. We evaluate $f_k$ and $F_k$ using $200{,}000$ mixture draws on a grid of $4{,}001$ points per mixed coordinate, and interpolate linearly between grid points. Before applying $\Phi^{-1}$, we clip $F_k$ to $[10^{-12},1-10^{-12}]$. The grid seed is $20260806$.

The integral over the complement is a Gaussian expectation of the copula density. Under the orthogonal splitting $\gauss_b = \gauss_r\otimes\gauss_{b-r}$, with $\mathbf A$ the mixed-block rows of $\mV_r$,
\begin{equation}
  c_r^{U}(\vu) \;=\; \EE_{\bm\xi \sim \gauss(\vzero,\, \mI_b - \mathbf A\mathbf A^\top)}\bigl[\,c^{Z}(\mathbf A\vu + \bm\xi)\,\bigr],
  \qquad
  \log\pi_{\vU}(\vu) = \log c_r^{U}(\vu) + \log\gauss_r(\vu),
  \label{eq:sm-crU}
\end{equation}
evaluated by a tensor Gauss--Hermite rule \cite{GolubWelsch1969} in the eigenbasis of $\mI_b - \mathbf A\mathbf A^\top$. At $\mV_r^{\mC}$ that matrix has $b-r$ unit eigenvalues and $r$ zeros, so the rule runs over $b - r = 4$ axes, with eleven nodes on each. For an estimated basis, the $r$ zero eigenvalues become positive, equal to the squared singular values of the rows of the basis on the $12$ unmixed coordinates, and are largest at small $N$. The rule then uses seven nodes on each of the $b-r$ leading axes, three on any other axis with variance exceeding $10^{-6}$, and one on the remaining axes.

The Stage-1 divergence follows as the outer average of the conditional divergence,
\begin{equation}
  \DKL\bigl(\pi_{\vn}\,\|\,\pi_{\vn}(\,\cdot\,;\mV_r)\bigr)
  \;=\;
  \EE_{\pi_{\vU}}\Bigl[\,\EE_{\bm\xi}\bigl[\rho\log\rho\bigr]\,\Bigr],
  \qquad \rho := c^{Z}(\mathbf A\vu+\bm\xi)/c_r^{U}(\vu),
  \label{eq:sm-delta-sub}
\end{equation}
over $\vu = \mV_r^\top\vZ$ with $\vZ$ drawn from $\pi_{\vZ}$.
The inner expectation in \eqref{eq:sm-delta-sub} uses the same nodes and weights as \eqref{eq:sm-crU}, so it requires no additional density evaluations. Its value is nonnegative at every $\vu$. The alternative integrand $\log c^{Z}-\log c_r^{U}$ can take either sign. We computed both averages in the same run, with outer sample size $4000$ and seed $20260806$. On every example, their means differ by less than twice the standard error of the signed average. The conditional-divergence average has a standard error smaller by a factor between $9$ and $19$.

To compute $\mathcal E_{\rm dens}$, we average $\log\pi_{\vU}(\vu_m)-\log\widehat\pi_{\vU}(\vu_m)$ over a separate sample of $\vU$, where $\widehat\pi_{\vU}$ is the Stage-2 estimate of Part~I. The reported standard errors describe the outer averages only. They do not include the errors from the marginal grids or the quadrature.

The divergences are computed after the rank transform. Before that transform, the four-dimensional subspace generated by the coupled latent coordinates has an independent standard-Gaussian complement, so retaining that subspace is exact. After the rank transform, our reference calculations give a positive projection divergence at the rank-four CAS subspace in each example.

% =====================================================================
\section{A sharper a posteriori KL divergence bound}
\label{sm:sharper-bound}

The main paper uses the Gaussian-reference KL divergence bound
\begin{equation}
  \DKL\bigl(\pi_{\vZ}\,\|\,\pi_{\vZ}(\,\cdot\,;\mV_r)\bigr) \;\le\; \tfrac12\tr((\mI-\mV_r\mV_r^\top)\mC),
  \label{eq:sm-trace-bound}
\end{equation}
with $\pi_{\vZ}(\vz;\mV_r) := c_r^{U}(\mV_r^\top\vz)\,\gauss_d(\vz)$ the reduced model \textup{(2.4)} in $\vZ$-coordinates. This bound is used in the rate analysis (\Cref{sm:rate-real}), and it gives the basis in closed form, as the top-$r$ eigenvectors of $\hmC$. When $\pi_{\vZ}$ has finite second moment and finite Fisher information in the directions orthogonal to the span of $\mV_r$, the dimensional log-Sobolev refinement of \cite{LiCuiLiMarzoukZahm2024sharp} gives a tighter upper bound, computable on the examples. Writing
\[
  \Sigma^{Z}_{\perp} := \mV_\perp^\top \EE_{\pi_{\vZ}}[\vZ\otimes \vZ]\mV_\perp,
  \qquad
  F^{Z}_{\perp} := \mV_\perp^\top \EE_{\pi_{\vZ}}[s(\vZ)\otimes s(\vZ)]\mV_\perp,
  \qquad s := \nabla \log \pi_{\vZ},
\]
for a basis $\mV_r$ and an orthogonal complement $\mV_\perp$. The two functionals in \cite{LiCuiLiMarzoukZahm2024sharp} are
\begin{align}
  J_{\mathrm{lo}}(\mV_r)
  &= \tfrac12\bigl[\,\tr(\Sigma^Z_\perp - \mI_{d-r}) - \log\det \Sigma^Z_\perp\,\bigr], \nonumber \\
  J_{\mathrm{hi}}(\mV_r)
  &= \tfrac12\bigl[\,\tr(\Sigma^Z_\perp - \mI_{d-r}) + \log\det F^Z_\perp\,\bigr],
  \label{eq:sm-jhi-jlo}
\end{align}
that satisfy, for every $\mV_r$,
\begin{equation}
  J_{\mathrm{lo}}(\mV_r)
  \;\le\;
  \DKL\!\bigl(\pi_{\vZ}\,\|\,\pi^\star_{Z, \mV_r}\bigr)
  \;\le\;
  J_{\mathrm{hi}}(\mV_r)
  \;\le\;
  \tfrac12\tr((\mI-\mV_r\mV_r^\top)\mC).
  \label{eq:sm-jhi-sandwich}
\end{equation}
The rightmost inequality follows from an algebraic identity. With $\psi(\lambda) := \lambda - 1 - \log\lambda \ge 0$,
\begin{equation}
  \tfrac12\tr((\mI-\mV_r\mV_r^\top)\mC) - J_{\mathrm{hi}}(\mV_r)
  \;=\;
  \tfrac12 \sum_{i=1}^{d-r}\psi(\lambda_i(F^Z_\perp)),
  \label{eq:sm-jhi-trace-gap}
\end{equation}
Thus the gap vanishes when $F^Z_\perp=\mI_{d-r}$. Each summand increases as its eigenvalue moves away from $1$. If $\pi_{\vZ}$ is a centered nonsingular Gaussian, the upper bound is exact for every $\mV_r$: $J_{\mathrm{hi}}(\mV_r)=\DKL(\pi_{\vZ}\,\|\,\pi^\star_{Z,\mV_r})$ \cite{LiCuiLiMarzoukZahm2024sharp}.

We minimize $J_{\mathrm{hi}}$ numerically. For the trace bound \eqref{eq:sm-trace-bound}, the minimizing subspace is given by the leading eigenvectors of $\mC$. Evaluating $J_{\mathrm{hi}}$ at a given $\mV_r$ requires the moments $\Sigma^Z_\perp$ and $F^Z_\perp$. We estimate both moments from samples of the rank-Gaussianized $\vZ$ and the copula score of \Cref{sm:disintegration-mc}, so these estimates do not involve Hermite score matching. The marginal estimation in \Cref{sm:disintegration-mc} remains a source of numerical error.

\paragraph{Bounds at $\mV_r^{\mC}$ and after descent} \Cref{tab:sm-jhi} reports, for each example, the trace bound from the $10^6$-sample reference for $\mC$, the sharper bound $J_{\mathrm{hi}}$ and the reference divergence. We compute $J_{\mathrm{hi}}$ by subtracting the gap \eqref{eq:sm-jhi-trace-gap} from the trace bound. The gap uses the eigenvalues of $F^Z_\perp$ evaluated at the same basis $\mV_r^{\mC}$, from a separate moment estimate based on $2\cdot10^6$ samples.

We also minimize this form of $J_{\mathrm{hi}}$ by Riemannian gradient descent \cite{AbsilMahonySepulchre2008}. For the descent, both $\mC$ and $F^Z_\perp$ are estimated from the $2\cdot10^6$ samples, and the initial basis consists of the leading $r$ eigenvectors of that estimate of $\mC$. Subtracting the nonnegative gap keeps the computed $J_{\mathrm{hi}}$ below the computed trace bound. This evaluation is equivalent to \eqref{eq:sm-jhi-jlo} with $\EE_{\pi_{\vZ}}[\vZ\otimes\vZ]$ replaced by $\mC-\mH+2\mI$, using the population identity $\mC=\mM+\mH-2\mI$ of \Cref{sm:experimental-protocol}.

\begin{table}[h]
\centering\small
\caption{Bounds and reference divergences at $\mV_r^{\mC}$ on Examples~1--3, in nats. We compute $J_{\mathrm{hi}}$ by subtracting the gap $\tfrac12\sum_i\psi(\lambda_i(F^Z_\perp))$ from the trace bound. The divergence is reported with its Monte Carlo standard error. The last column gives the distance moved by gradient descent on $J_{\mathrm{hi}}$.}
\label{tab:sm-jhi}
\begin{tabular}{l|ccc|c|c}
\toprule
Example & trace bound & $J_{\mathrm{hi}}$ & gap & $\DKL$ & $\|\sin\Theta\|_F$ \\
\midrule
Example~1 & $0.0181$ & $0.0181$ & $4.9\cdot10^{-5}$ & $0.0095{\pm}0.0001$ & $3.2\cdot10^{-5}$ \\
Example~2 & $0.0451$ & $0.0449$ & $2.0\cdot10^{-4}$ & $0.0275{\pm}0.0004$ & $3.2\cdot10^{-4}$ \\
Example~3 & $0.0182$ & $0.0181$ & $7.3\cdot10^{-5}$ & $0.0092{\pm}0.0002$ & $3.7\cdot10^{-5}$ \\
\bottomrule
\end{tabular}
\end{table}

\paragraph{Comparison}

The gap is below $0.5\%$ of the trace bound on each example, and the estimated eigenvalues of $F^Z_\perp$ lie in $[0.995,1.024]$. For a basis supported on the mixed block, $F^Z_\perp$ has twelve unit eigenvalues corresponding to the twelve independent standard-Gaussian coordinates.

The trace bound exceeds the divergence by factors of $1.91$, $1.64$ and $1.98$ on Examples~1--3. The corresponding factors for $J_{\mathrm{hi}}$ are $1.90$, $1.63$ and $1.97$. The sharper bound therefore removes little of the overestimate on these examples.

The optimization stops when the line search finds no decrease. It moves the basis by $\|\sin\Theta\|_F\le0.0004$ and reduces $J_{\mathrm{hi}}$ by less than $2\cdot10^{-6}$ of its value. For comparison, the leading eigenspaces estimated from $10^6$ and $2\cdot10^6$ samples differ by $\|\sin\Theta\|_F=0.0014$, $0.0019$ and $0.0013$ on Examples~1--3. Each of these differences is larger than the movement during descent.

\section{Experimental Protocol}
\label{sm:experimental-protocol}

All experiments are deterministic under the fixed random seeds listed below. Rank Gaussianization assigns average ranks to ties and uses the $k/(N+1)$ plotting-position convention. The randomized range finder \cite{HalkoMartinssonTropp2011} uses oversampling $p=10$ and $q_{\text{pwr}}=2$ power-iteration steps. The Python implementation uses NumPy 2.4 and SciPy 1.17, as pinned in the release. The Reproducibility section of the main manuscript gives the code repository.

\paragraph{Example problems and reference bases (\Cref{sm:rate-real})} The three examples have $d=20$, with dependence introduced in four latent coordinates. As specified in Part~I \cite{CAS_partI}, an orthogonal matrix $\mQ$ mixes the first $b=8$ coordinates and leaves the remaining $12$ unchanged. The observation map is $\operatorname{logit}\circ\Phi$, applied componentwise. The latent coordinates have identity covariance. After the rank transform, each coordinate of $\vZ$ is standard Gaussian, and the coordinates of the mixed block are dependent. The matrix $\mC$ has rank at most $b=8$. We use $r=4$, the number of coupled latent coordinates.

The numerical reference for $\mV_r^{\mC}$ uses $N_{\rm ref}=10^6$ oracle-score samples with random seed $20260715$ and mixture seed $20260716$. The Stage-2 reference values use a different basis, $\mV_{r,\Lambda,\tau}$: the leading $r$ eigenvectors of the Stage-1 estimate obtained from $N_{\rm ref}=10^6$ samples with $(K_1,q_1)=(4,2)$ and ridge $\kappa=10^{-4}$.

The parameters of the separate Gaussian-copula rate experiment are given in \Cref{sm:rate-real}. It uses data seed $500+20i+j$, where $i=0,\ldots,5$ indexes the sample sizes and $j=0,\ldots,19$ indexes the repetitions. Its Stage-1 estimate uses seed $500+j$.

\paragraph{Sharper-bound evaluation (\Cref{sm:sharper-bound})} We estimate the moments from $2\cdot10^6$ samples of $\vZ$ with seed $0$ and mixture seed $1$. The trace bound in \Cref{tab:sm-jhi} uses the $10^6$-sample reference described above. All these moments are evaluated after rank Gaussianization. In the latent coordinates $\vz=\mQ\vW$, the score relative to the standard Gaussian is $\mQ(\nabla\log\pi_{\vW}+\vW)$. This score lies in a four-dimensional subspace, so the corresponding rank-$4$ trace bound would be zero.

Integration by parts gives $\mC=\mM+\mH-2\mI$, with $\mM:=\EE[\vZ\vZ^\top]$ and $\mH:=\EE[s(\vZ)s(\vZ)^\top]$. The gap \eqref{eq:sm-jhi-trace-gap} is a sum of nonnegative terms $\psi(\lambda_i)$, and \Cref{tab:sm-jhi} computes $J_{\mathrm{hi}}$ in that form.

\paragraph{Stage-1 terms (\S\ref{sec:exp-aposteriori} of the main manuscript)} For $\scS'$ in the a priori bound \eqref{eq:finite-sample-kl}, we use the Stage-1 estimate fitted to $N_{\rm ref}=10^6$ residuals with $(K_1,q_1)=(4,2)$. We evaluate $\|\scS-\scS'\|^2$ on $2\cdot10^5$ draws of $\vZ$, using the copula score from \Cref{sm:disintegration-mc}. Its average estimates $\|\scS-\scS'\|_{L^2(\pi_{\vZ})}^2$, and the rank and statistical terms use the second-moment matrix of $\scS'$ on the same draws.

We use these draws also for $e_N$ and $D_N$ in \eqref{eq:aposteriori-corrected}. To evaluate $D_N$, we map the corresponding residuals through the rank transform fitted on the first half of the sample. For the saturation constant $\beta$, we compute the unpenalized projections of the copula score onto the two nested gradient spaces, using $2\cdot10^5$ draws of $\vZ$.

The reference fit uses seed $20260701$. The fit with sample size $N$ and seed index $s$ uses seed $20260701+1009s+7N$. The $2\cdot10^5$ evaluation draws use seed $20261700$ and the marginal tables of \Cref{sm:disintegration-mc}. The draws used to compute $\beta$ have seed $20260715$ and mixture seed $20260716$.

\paragraph{End-to-end comparison (\S\ref{sec:exp-endtoend} of the main manuscript)} We fit the model with seed index $s$ at basis $\mV_{r,\Lambda,\tau}$, using residuals drawn with seed $7000+s$. The $M=2\cdot10^5$ held-out residuals use seed $1277000+1000j+s$, where $j=0,\ldots,4$ indexes $N$ in $(500,2500,12500,50000,200000)$. The reference values are computed as described in \Cref{sm:composite-anatomy}.

\paragraph{Stage-2 normalizer (\S\ref{sec:aposteriori} of the main manuscript)} We approximate $\widehat Z_r=\EE_{\gauss_r}[\exp(\widehat\scL_r)]$ with $M_{\rm norm}=8192$ scrambled Sobol' points \cite{Sobol1967,Owen1995}, mapped componentwise through $\Phi^{-1}$. The scramble uses the model seed. This normalizer is needed to evaluate $\log\widehat\pi_{\vU}$ and $\log\widehat\pi_{\vn}$, but not to fit the coefficients.

\paragraph{Computation} The Stage-1 and Stage-2 estimates are computed in one pass over the sample, in batches. The decompositions of the different estimated models, and the rate check, run in parallel. The computation over the sequence of larger multi-index sets estimates the peak memory for each set before starting it and skips any set that would exceed $35$~GB. No set reported here was skipped, the largest being $\Lambda_{6,3}$ with $25{,}770$ multi-indices.

% =====================================================================
\section{The three terms on Examples 2 and 3}
\label{sm:composite-anatomy}
\Cref{fig:composite-ex2,fig:composite-ex3} give the results for Examples~2 and~3 under the protocol of \Cref{fig:composite-main}. We plot the sum $\mathcal E_{\rm tot}:=\mathcal E_{\rm marg}+\mathcal E_{\rm proj}+\mathcal E_{\rm dens}$. For model seed $s$, the reference full and marginal divergences are averages over the same evaluation sample, with seed $7000+s+10^6$. The reference $\mathcal E_{\rm dens}$ uses a separate sample of $\vU$ with seed derived from $7000+s+2\cdot10^6$. We compute $\mathcal E_{\rm proj}$ by the quadrature in \Cref{sm:disintegration-mc}. The reference marginal densities are kernel estimates fitted to $5\cdot10^6$ draws.

The plotted components differ from the exact decomposition in \Cref{prop:exact-decomp} in two respects. The reference marginal term omits $R_N$ and $\log\widehat Z_{\mathrm{comp}}$. Also, $\mathcal E_{\rm dens}$ is evaluated at $\mV_{r,\Lambda,\tau}$, whereas $\mathcal E_{\rm proj}$ is evaluated at the estimated basis. The difference between the plotted sum and the reference total therefore includes these two omitted corrections, the change in $\mathcal E_{\rm dens}$ between the two bases, and numerical errors in the reference values.

\begin{figure}[h]
\centering
\includegraphics[width=\textwidth]{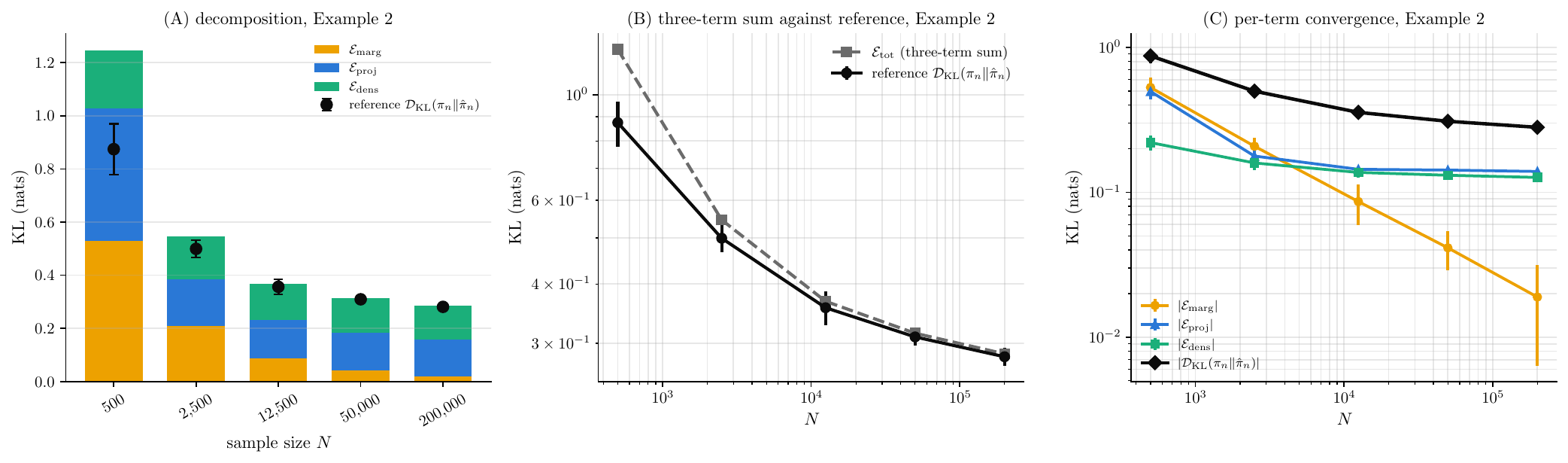}
\caption{Reference components of the error on Example~2, computed as described in \Cref{sm:composite-anatomy}. The marginal term $\mathcal E_{\rm marg}$ is largest for $N\le2500$ and decreases as $N$ grows. The subspace term $\mathcal E_{\rm proj}$ levels off and is largest for $N\geq12{,}500$. The reduced-density term $\mathcal E_{\rm dens}$ levels off slightly below it. For $N\ge12{,}500$, the plotted sum $\mathcal E_{\rm tot}$ nearly coincides with the reference total $\DKL$.}
\label{fig:composite-ex2}
\end{figure}

\begin{figure}[h]
\centering
\includegraphics[width=\textwidth]{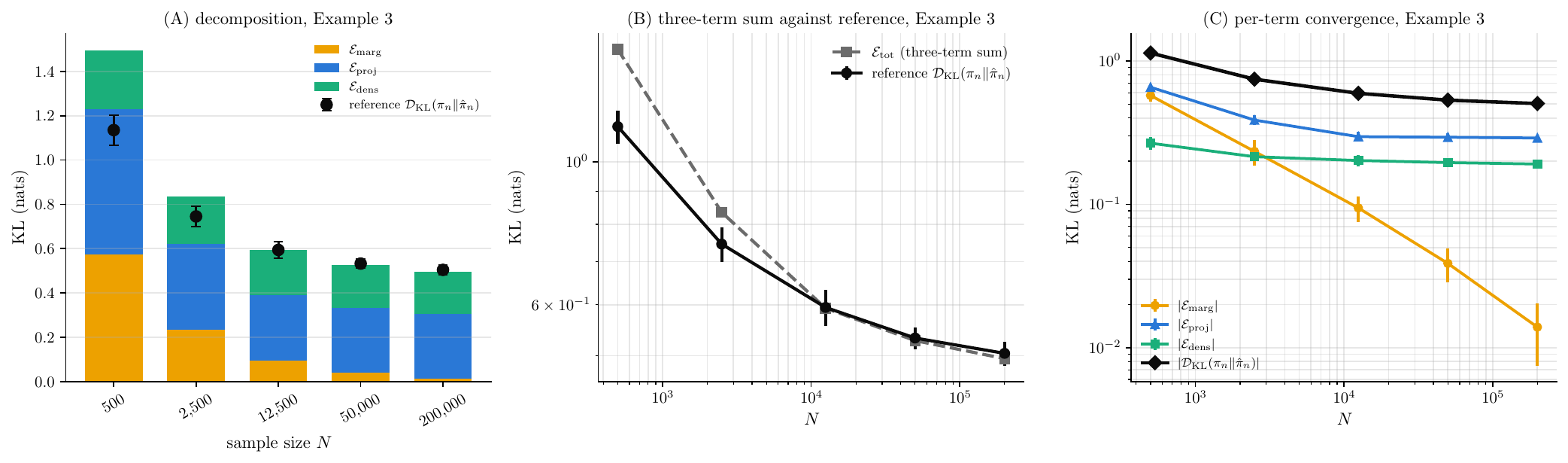}
\caption{Reference components of the error on Example~3, computed as described in \Cref{sm:composite-anatomy}. The subspace term $\mathcal E_{\rm proj}$ is largest at every $N$ and levels off as $N$ grows. The marginal term $\mathcal E_{\rm marg}$ decreases and falls below $\mathcal E_{\rm dens}$ between $N=2500$ and $N=12{,}500$. As on Example~2, the plotted sum $\mathcal E_{\rm tot}$ nearly coincides with the reference total $\DKL$ for $N\ge12{,}500$.}
\label{fig:composite-ex3}
\end{figure}

\clearpage
\def\siamprelabel{}
\end{document}